\documentclass[final]{siamltex1213}

\usepackage{overpic}
\usepackage{amsmath,amstext,amsbsy,amssymb}
\usepackage{booktabs,bm}
\usepackage{mathrsfs}
\usepackage{tikz}
\usepackage{booktabs,enumitem}
\usetikzlibrary{positioning}
\usetikzlibrary{shapes,arrows}
\usepackage[fleqn,tbtags]{mathtools}
\usepackage{relsize}

\newcommand{\constant}{\tfrac{1}{2}}

\newcommand{\eps}{\varepsilon}

\newcommand{\argmin}{\mbox{argmin}}

\newcommand{\beq}{\begin{equation}}
\newcommand{\eeq}{\end{equation}}
\renewcommand{\tilde}{\widetilde}
\renewcommand{\hat}{\widehat}
\newcommand{\bit}{\begin{itemize}}
\newcommand{\eit}{\end{itemize}}
\newcommand{\ben}{\begin{enumerate}}
\newcommand{\een}{\end{enumerate}}

\title{Stable extrapolation of analytic functions}
\author{Laurent Demanet \and Alex Townsend\thanks{Department of Mathematics, Massachusetts Institute of Technology, 77 Massachusetts Avenue
Cambridge, MA 02139-4307. This work is supported by National Science Foundation grant No.~1522577. Corresponding author: ajt@mit.edu} }

\begin{document}
\maketitle

\begin{abstract}
This paper examines the problem of extrapolation of an analytic function for $x > 1$ given perturbed samples from an equally spaced grid on $[-1,1]$. Mathematical folklore states that extrapolation is in general hopelessly ill-conditioned, but we show that a more precise statement carries an interesting nuance.  For a function $f$ on $[-1,1]$ that is analytic in a Bernstein ellipse with parameter $\rho > 1$, and for a uniform perturbation level $\eps$ on the function samples, we construct an asymptotically best extrapolant $e(x)$ as a least squares polynomial approximant of degree $M^*$ given explicitly. We show that the extrapolant $e(x)$ converges to $f(x)$ pointwise in the interval $I_\rho\in[1,(\rho+\rho^{-1})/2)$ as $\eps \to 0$, at a rate given by a $x$-dependent fractional power of $\eps$. More precisely, for each $x \in I_{\rho}$ we have
 \[
 |f(x) - e(x)| = \mathcal{O}\left( \eps^{-\log r(x) / \log\rho} \right), \qquad\qquad r(x) = \frac{x+\sqrt{x^2-1}}{\rho},
 \]
up to log factors, provided that the oversampling conditioning is satisfied. That is,
 \[
 M^* \leq \frac{1}{2} \sqrt{N},
 \]
which is known to be needed from approximation theory. In short, extrapolation enjoys a weak form of stability, up to a fraction of the characteristic smoothness length. The number of function samples, $N+1$, does not bear on the size of the extrapolation error provided that it obeys the oversampling condition.
We also show that one cannot construct an asymptotically more accurate extrapolant from $N+1$ equally spaced samples than $e(x)$, using any other linear or nonlinear procedure. The proofs involve original statements on the stability of polynomial approximation in the Chebyshev basis from equally spaced samples and these are expected to be of independent interest. 
\end{abstract}

\begin{keywords}
extrapolation, interpolation, Chebyshev polynomials, Legendre polynomials, approximation theory
\end{keywords}

\begin{AMS}
41A10, 65D05 
\end{AMS}

\section{Introduction}\label{sec:Introduction} 


Stable extrapolation is a topic that has traditionally been avoided in numerical analysis, perhaps out of a concern that positive results may be too weak to be interesting. The thorough development of approximation theory for $\ell_1$ minimization over the past ten years; however, has led to the discovery of new regimes where {\em interpolation} of smooth functions is accurate, under a strong assumption of Fourier sparsity~\cite{CRT04}. More recently, these results have been extended to deal with the {\em extrapolation} case, under the name super-resolution~\cite{CFG12, Demanet_15_01}. This paper seeks to bridge the gap between these results and traditional numerical analysis, by rolling back the Fourier-sparse assumption and establishing tight statements on the accuracy of extrapolation under the basic assumption that the function is analytic and imperfectly known at equally spaced samples.

\subsection{Setup}

A function $f:[-1,1]\rightarrow\mathbb{C}$ is real-analytic when each of its Taylor expansions, centered at each point $x$, converges in a disk of radius $R > 0$. While the parameter $R$ is one possible measure of the smoothness of $f$, we prefer in this paper to 
consider the largest Bernstein ellipse, in the complex plane, to which $f$ can be analytically continued. We say that a function $f:[-1,1]\rightarrow\mathbb{C}$ is {\em analytic with a Bernstein parameter $\rho>1$} if it is analytically continuable to a function that is analytic in the open 
ellipse with foci at $\pm1$, semiminor and semimajor axis 
lengths summing to $\rho$, denoted by $E_{\rho}$, and bounded in $E_\rho$ so that
$|f(z)|\leq Q$ for $z\in E_\rho$ and $Q<\infty$.\footnote{The relationship between $R$ and $\rho$ is found by considering $f$ analytic in the so-called 
stadium of radius $R>0$, i.e., the region $S_R = \{z\in\mathbb{C}: \inf_{x\in[-1,1]} |z-x|< R\}$. 
If $f$ is analytic with a Bernstein parameter $\rho>1$, then $f$ is also analytic in 
the stadium with radius $R = (\rho+\rho^{-1})/2-1$. Conversely, if $f$ is analytic in $S_R$, then $f$ is analytic with a 
Bernstein parameter $\rho = R+\sqrt{R^2+1}$. See~\cite{demanet-cheb,demanet-fio} for details.} We denote the set of such functions as $B_\rho(Q)$.

Such a function $f$ has a unique, bounded analytic continuation in the interval $I_\rho = [1,(\rho+\rho^{-1})/2)$, which serves as the reference for measuring the extrapolation error. We denote by $r(x)$, or simply $r$, the nondimensional length parameter in this interval,
\[
r = \frac{x+\sqrt{x^2-1}}{\rho},
\]
so that $\frac{1}{\rho} \leq r < 1$ for $x \in I_\rho$.

The question we answer in this paper is: ``How best to stably extrapolate an analytic function from imperfect equally spaced samples?"
More precisely, for known parameters $N$, $\rho$, $\eps$, and $Q$ we assume that
\begin{itemize}[leftmargin=*]
\item $f \in B_\rho(Q)$;
\item $N+1$ imperfect equally spaced function samples of $f$ are given. That is, the vector $f(\underline{x}^{equi})+\underline{\eps}$ is known, 
where $\underline{x}^{equi}$ is the vector of $N+1$ equally spaced points on $[-1,1]$ so that $x_k = 2k/N-1$ for $0\leq k\leq N$ 
and $\underline{\eps}$ is a perturbation vector with $\|\underline{\eps}\|_\infty \leq \eps$; and 
\item $x\in I_\rho$ is an extrapolation point, where $I_\rho = [1,(\rho+\rho^{-1})/2)$. 
\end{itemize}
Our task is to construct an extrapolant $e(x)$ for $f(x)$ in the interval $I_\rho$ from the imperfect equally spaced samples 
that minimizes the extrapolation error $|f(x)-e(x)|$ for $x\in I_\rho$.

Extrapolation is far from being the counterpoint to interpolation, and several different ideas are required. First, the polynomial interpolant of an analytic function $f$ at 
$N+1$ equally spaced points on $[-1,1]$ can suffer from wild oscillations near $\pm 1$, known as  Runge's phenomenon~\cite{Runge_01_01}. Second, the construction 
of an equally spaced polynomial interpolant is known to be exponentially ill-conditioned, leading to practical problems with 
computations performed in floating point arithmetic. Various remedies are proposed for the aforementioned problems,\footnote{Among them, least squares polynomial fitting~\cite{Cohen_13_01}, mock Chebyshev interpolation~\cite{Boyd_09_01}, polynomial overfitting with constraints~\cite{Boyd_92_01}, and the Bernstein polynomial basis~\cite[Sec.~6.3]{Powell_81_01}. For an extensive list, see~\cite{Platte_11_01}.} and in this paper we show that one approach is simply least-squares approximation by polynomials of much lower degree than the number of function samples.

For a given integer $0 \leq M \leq N$, we denote by $p_M(x)$ the least squares polynomial fit of degree $M$ to the imperfect samples, i.e.,
\begin{equation} 
p_M = \argmin_{p \in \mathcal{P}_M} \|  f(\underline{x}^{equi}) + \underline{\eps} - p(\underline{x}^{equi}) \|_2,
\label{eq:leastSquares}
\end{equation}
where $\mathcal{P}_M$ is the space of polynomials of degree at most $M$.
In this paper, we show that a near-best extrapolant $e(x)$ is given by
\beq\label{eq:e(x)}
e(x) = p_{M^*}(x),
\eeq
where
\beq\label{eq:M}
M^* = \Bigg\lfloor \min \left\{ \frac{1}{2} \sqrt{N}, \frac{\log ( Q / \eps )}{\log(\rho)} \right\} \Bigg\rfloor.
\eeq
Here, $\lfloor a\rfloor$ denotes the largest integer less than or equal to $a$, but exactly how the integer part is 
taken in~\eqref{eq:M} is not particularly important. The formula for $M^*$ in~\eqref{eq:M} is derived by approximately 
balancing two terms: a noiseless term that is geometrically decaying to zero with $M$ and a noise term that 
is exponentially growing with $M$. It is the exponentially growing noise term that has lead researchers to the conclusion that polynomial 
extrapolation is unstable in practice. The balance of these two terms roughly minimizes the extrapolation error. 
If $\log ( Q / \eps )/\log(\rho) < \tfrac{1}{2} \sqrt{N}$, then this balancing can be achieved without 
violating a necessary oversampling condition; otherwise, $\log ( Q / \eps )/\log(\rho) \geq \tfrac{1}{2} \sqrt{N}$
and one gets as close as possible to the balancing of the two terms by setting $M^* =  \tfrac{1}{2} \sqrt{N}$. 


\subsection{Main results}

The behavior of the extrapolation error depends on whether $M^*=\tfrac{1}{2}\sqrt{N}$ or not (see~\eqref{eq:M}), 
and the two corresponding regimes are referred to as {\em undersampled} and {\em oversampled}, respectively. 

\begin{definition} The extrapolation problem with parameters $(N, \rho, \eps, Q)$ is said to be oversampled if
 \beq\label{eq:sampling}
\frac{\log ( Q / \eps )}{\log(\rho)} < \frac{1}{2} \sqrt{N}.
 \eeq
Conversely, if this inequality is not satisfied, then the problem is said to be undersampled. 
 \end{definition}
%
 
The relation between $M$ and $N$ stems from the observation that polynomial approximation on an equally spaced grid can be computed stably when $M \leq \frac{1}{2} \sqrt{N}$, as we show in the sequel, but not if $M$ is asymptotically larger than $\smash{\sqrt{N}}$~\cite[p.~3]{Platte_11_01}. In~\cite{Dahlquist_03_01} it is empirically
observed that~\eqref{eq:leastSquares} can be solved without any numerical issues if $M<2\sqrt{N}$ and yet another illustration of this relationship is 
the so-called mock-Chebyshev grid, which is a subset of an $N+1$ equally spaced grid of size $M \sim \sqrt{N}$ that allows for 
stable polynomial interpolation~\cite{Boyd_09_01}. 

We now give one of our main theorems.  For convenience, let 
\[
\alpha(x) = - \frac{\log r(x)}{\log\rho},
\]
which is the fractional power of the perturbation level $\eps$ in the error bound below.

\begin{theorem}\label{teo:main} Consider the extrapolation problem with parameters $(N, \rho, \eps, Q)$.
\begin{itemize}[leftmargin=*]
\item If (\ref{eq:sampling}) holds (oversampled case), then for all $x \in I_\rho$,
\beq\label{eq:oversampled}
|f(x) - e(x)| \leq C_{\rho, \eps} \; \frac{Q}{1-r(x)} \; \left( \frac{\eps}{Q} \right)^{\alpha(x)},
\eeq
where $C_{\rho, \eps}$ is a constant that depends polylogarithmically on $1/\eps$. 
\item If (\ref{eq:sampling}) does not hold (undersampled case), then for all $x \in I_\rho$,
\beq\label{eq:undersampled}
|f(x) - e(x)| \leq C_{\rho, N} \; \frac{Q}{1-r(x)} \; r(x)^{\frac{1}{2} \sqrt{N}},
\eeq
where $C_{\rho, N}$ is a constant that depends polynomially on $N$. 
\end{itemize}
\end{theorem}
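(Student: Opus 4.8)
The plan is to split the extrapolation error into a deterministic truncation part and a noise-amplification part, estimate each at a fixed point $x \in I_\rho$, and then optimize over the degree by inserting $M = M^*$ from \eqref{eq:M}. Let $L_M \colon \C^{N+1} \to \mathcal{P}_M$ denote the least-squares fitting map, so that $e = L_{M^*}(f(\underline{x}^{equi}) + \underline{\eps})$. Because $L_M$ reproduces every polynomial of degree at most $M$, writing $f = f_M + g_M$ with $f_M = \sum_{k=0}^{M} a_k T_k$ the truncated Chebyshev expansion and $g_M = \sum_{k>M} a_k T_k$ the tail gives
\[
f - L_M\big(f(\underline{x}^{equi}) + \underline{\eps}\big) = g_M - L_M\big(g_M(\underline{x}^{equi}) + \underline{\eps}\big),
\]
so that for $x \in I_\rho$ and $M = M^*$,
\[
|f(x) - e(x)| \;\leq\; |g_M(x)| \;+\; \big| L_M\big(g_M(\underline{x}^{equi}) + \underline{\eps}\big)(x) \big|.
\]
For the first term I would use the Bernstein coefficient bound $|a_k| \leq 2Q\rho^{-k}$ together with $|T_k(x)| \leq (x + \sqrt{x^2-1})^k = (\rho\, r(x))^k$ for $x \geq 1$; since $r(x) < 1$ on $I_\rho$ this gives $|g_M(x)| \leq \sum_{k>M} 2Q r^k = \tfrac{2Q\, r^{M+1}}{1-r}$, and also $\|g_M\|_{L^\infty[-1,1]} \leq \tfrac{2Q\rho^{-M}}{\rho-1}$. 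Observe already that this first term alone matches the claimed rate, up to the constant, once $M = M^*$.

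The heart of the argument is bounding the amplification term $|L_M(\underline{d})(x)|$ for data $\underline{d} = g_M(\underline{x}^{equi}) + \underline{\eps}$, which satisfies $\|\underline{d}\|_\infty \leq \tfrac{2Q\rho^{-M}}{\rho-1} + \eps$. I would proceed in three steps. First, since $0 \in \mathcal{P}_M$ the minimal residual is at most $\|\underline{d}\|_2$, hence $\|L_M(\underline{d})(\underline{x}^{equi})\|_2 \leq 2\|\underline{d}\|_2 \leq 2\sqrt{N+1}\,\|\underline{d}\|_\infty$. Second, and this is the crucial ingredient, I invoke a stability estimate for polynomial fitting in the Chebyshev basis on an equally spaced grid, valid precisely in the regime $M \leq \tfrac12\sqrt{N}$ (and known to fail for $M$ asymptotically larger than $\sqrt{N}$): for every $q \in \mathcal{P}_M$ with Chebyshev coefficients $b_k[q]$,
\[
\Big(\sum_{k=0}^{M} |b_k[q]|^2\Big)^{1/2} \;\leq\; \frac{C(M,N)}{\sqrt{N+1}}\,\big\| q(\underline{x}^{equi}) \big\|_2,
\]
where $C(M,N)$ is bounded by a factor that is polylogarithmic in $1/\eps$ when $M \asymp \log(1/\eps)$ and polynomial in $N$ when $M \asymp \sqrt{N}$. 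Third, I push the polynomial outside $[-1,1]$ with the Cauchy--Schwarz bound $|q(x)| \leq \big(\sum_{k=0}^M |b_k[q]|^2\big)^{1/2}\big(\sum_{k=0}^M T_k(x)^2\big)^{1/2}$ together with $\sum_{k=0}^M T_k(x)^2 \leq \tfrac{(\rho r)^{2M+2}}{(\rho r)^2 - 1}$. Chaining the three steps yields
\[
\big| L_M(\underline{d})(x) \big| \;\leq\; \frac{2\,C(M,N)\,(\rho r)^{M+1}}{\sqrt{(\rho r)^2 - 1}}\;\Big( \frac{2Q\rho^{-M}}{\rho-1} + \eps \Big).
\]

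Finally I would set $M = M^*$ and combine. In the oversampled case $M^* = \lfloor \log(Q/\eps)/\log\rho \rfloor$, so $\rho^{-M^*} \asymp \eps/Q$ and the amplification term is $\lesssim C(M^*,N)\,(\rho r)^{M^*}\eps \asymp C(M^*,N)\,Q\, r^{M^*}$; since $r^{M^*}$ equals $(\eps/Q)^{-\log r/\log\rho} = (\eps/Q)^{\alpha(x)}$ up to a bounded factor, and the geometric prefactors $\tfrac{1}{1-r}$ and $\tfrac{1}{(\rho-1)\sqrt{(\rho r)^2-1}}$ (the latter comparable to $\tfrac{1}{1-r}$ as $x$ tends to the right endpoint of $I_\rho$) fold into $\tfrac{Q}{1-r}$, adding the truncation term gives \eqref{eq:oversampled} with $C_{\rho,\eps}$ absorbing $C(M^*,N)$, polylogarithmic in $1/\eps$. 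In the undersampled case $M^* = \lfloor \tfrac12\sqrt{N}\rfloor$ and now $\tfrac{2Q\rho^{-M^*}}{\rho-1}$ dominates $\eps$; the amplification term is $\lesssim C(M^*,N)\,(\rho r)^{M^*} Q\rho^{-M^*} = C(M^*,N)\,Q\, r^{M^*}$ with $M^* = \lfloor \tfrac12\sqrt N\rfloor$, and combining with the truncation term yields \eqref{eq:undersampled}, the constant $C_{\rho,N}$ now polynomial in $N$. The main obstacle is the stability estimate in the second step: converting $\ell_2$-control of a degree-$M$ polynomial on the equispaced grid into control of its Chebyshev coefficients, uniformly for $M \leq \tfrac12\sqrt N$. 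This is a Coppersmith--Rivlin/Platte--Trefethen--Kuijlaars-type phenomenon, and establishing a clean version adapted to the Chebyshev basis is the technically demanding part, which is precisely the statement flagged as being of independent interest.
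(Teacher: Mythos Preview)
Your approach is essentially the paper's: decompose into a Chebyshev tail $g_M$ plus a least-squares amplification of $g_M(\underline{x}^{equi})+\underline{\eps}$, control the amplification via a lower bound on $\sigma_{M+1}(\mathbf{T}_M(\underline{x}^{equi}))$ in the regime $M\leq\tfrac12\sqrt{N}$, push outside $[-1,1]$ using $|T_k(x)|\leq(\rho r)^k$, and then insert $M=M^*$. The ``stability estimate'' you flag as the hard step is precisely Theorem~\ref{thm:ChebyshevSingularValues}, which gives $\sigma_{M+1}(\mathbf{T}_M(\underline{x}^{equi}))^2\geq 2N/(125(2M+1))$; this yields your $C(M,N)\sim\sqrt{M}$, hence polylogarithmic in $1/\eps$ when $M^*\asymp\log(Q/\eps)$ and polynomial in $N$ when $M^*\asymp\sqrt{N}$. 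The paper routes the amplification bound slightly differently (directly through $\|(A^*A)^{-1}A^*\|$ and an $\ell_\infty$ coefficient bound rather than your $\ell_2$/Cauchy--Schwarz chain), but the content is the same.

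One small repair is needed. Your factor $\big(\sum_{k\leq M}T_k(x)^2\big)^{1/2}\leq (\rho r)^{M+1}/\sqrt{(\rho r)^2-1}$ blows up as $x\to 1^+$ (where $\rho r\to 1$), so the bound as written does not cover the left endpoint of $I_\rho$, and the parenthetical claim that $1/\sqrt{(\rho r)^2-1}$ is comparable to $1/(1-r)$ is off (it is bounded at the right endpoint and singular at the left, the opposite of $1/(1-r)$). The fix is immediate: replace the geometric-series estimate by $\sum_{k\leq M}(\rho r)^{2k}\leq (M+1)(\rho r)^{2M}$, valid for all $\rho r\geq 1$; the extra $\sqrt{M+1}$ is absorbed into $C_{\rho,\eps}$ or $C_{\rho,N}$. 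This is exactly what the paper does, using $\sum_{k\leq M}|T_k(x)|\leq (M+1)(\rho r)^M$ with the $\ell_\infty$ coefficient bound in Theorem~\ref{thm:extrapolation}.
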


Note that $\alpha(x)$ is strictly decreasing in $x\in I_\rho$ with $\alpha(1) = 1$ (the error is proportional to $\eps$ at $x=1$, as expected) to $\alpha( (\rho+\rho^{-1})/2) = 0$ where the Bernstein ellipse meets the real axis (there is no expectation of control over the extrapolation error at $x = (\rho+\rho^{-1})/2$ since $f$ could be a rational function with a pole outside the Bernstein ellipse). For $1 < x < (\rho+\rho^{-1})/2$, it is surprising that the minimum extrapolation error is not proportional to $\eps$ itself, but an $x$-dependent fractional power of it. Note that the factor $1/(1 - r(x))$ also blows up at the endpoint at $x = (\rho+\rho^{-1})/2$.
Figure~\ref{fig:alphaPlots} (left) shows the fractional power of $\eps$ that is achieved by our extrapolant in the oversampled case and 
Figure~\ref{fig:alphaPlots} (right) shows the bound in~\eqref{eq:oversampled} without the constants for extrapolating the function $1/(1+x^2)$ in double 
precision. 

\begin{figure} 
\centering 
\begin{minipage}{.49\textwidth}
\begin{overpic}[width=\textwidth]{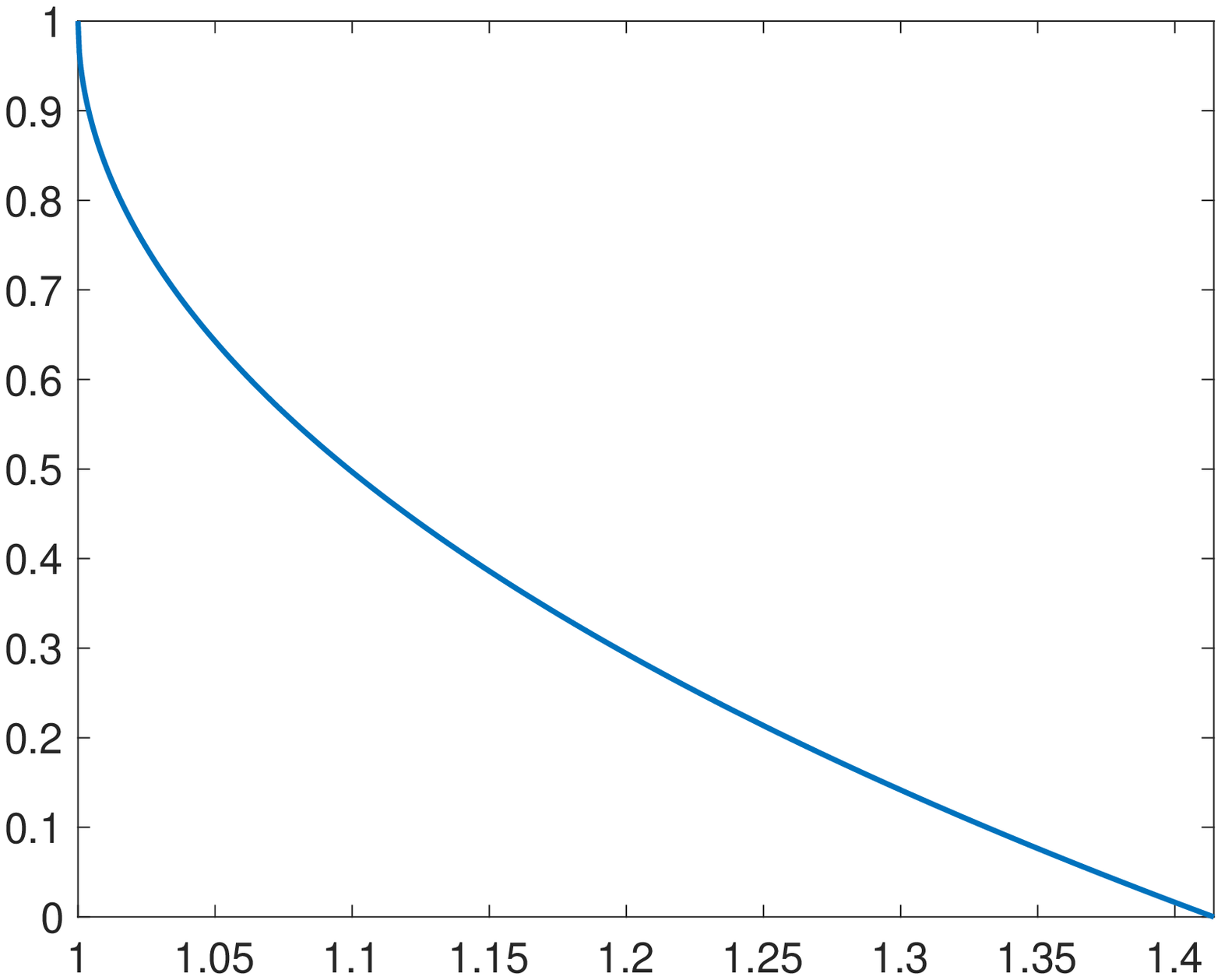} 
\put(50,0) {$x$} 
\put(27, 58) {$\alpha(x) = -\log r(x)/\log\rho$}
\end{overpic} 
\end{minipage}
\begin{minipage}{.49\textwidth}
\begin{overpic}[width=\textwidth]{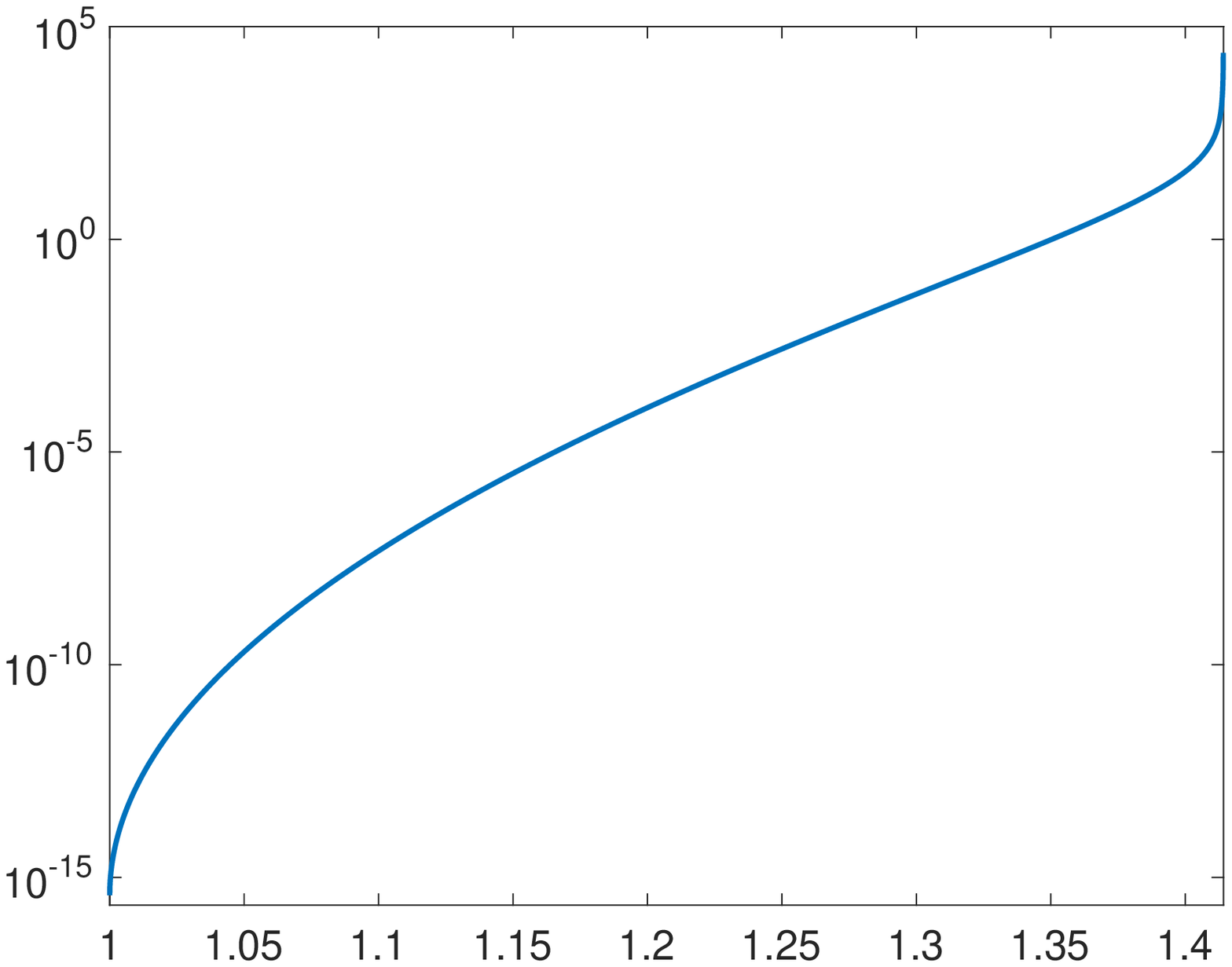} 
\put(50,0) {$x$} 
\put(43, 58) {$\frac{\eps^{-\alpha(x)}}{1-r(x)}$}
\end{overpic} 
\end{minipage}
\caption{In the oversampled case, the near-optimal extrapolant for $f(x)$ in $I_\rho = [1,(\rho+\rho^{-1})/2)$ is given by $e(x) = p_{M^*}(x)$, where $M^* = \lfloor \log(Q/\eps)/\log\rho \rfloor$. The accuracy of extrapolation depends on a fractional power of $\eps$ multiplied by $\smash{(1-r(x))^{-1}}$, i.e., $\eps^{-\alpha(x)}/(1-r(x))$, where $\alpha(x) = \log r(x)/\log\rho$. Here, $\alpha(x)$ (left) and 
the extrapolation error with the constant (right) is shown for the function $f(x) = 1/(1+x^2)$, with $\rho = 1+\sqrt{2}$ ($f\in B_{\rho'}(Q')$ for any $\rho'<1+\sqrt{2}$), and $\eps = 2.2\times 10^{-16}$. In the oversampled case, no linear or nonlinear scheme can provide an asymptotically more accurate extrapolant in general than this bound (see Proposition~\ref{teo:minimax}).}
\label{fig:alphaPlots} 
\end{figure} 

The bound~\eqref{eq:oversampled} in Theorem~\ref{teo:main} cannot be meaningfully improved, as the following proposition shows.

\begin{proposition}\label{teo:minimax}
Consider the extrapolation problem with parameters $(N, \rho, \eps, Q)$ such that~\eqref{eq:sampling} holds. Then, there exists a function $g \in B_{\rho'}(Q')$ for all $\rho' < \rho$ such that 
\[
\max_{x \in [-1,1]} | g(x) | \leq \eps,
\]
and, for $x \in  I_\rho$ and some $c_\rho > 0$,
\[
| g(x) | \geq c_\rho \; \frac{1}{1 - r(x)} \; \eps^{\alpha(x)}.
\]
\end{proposition}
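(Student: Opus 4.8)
\emph{Proof strategy.} The plan is to build a single function $g$, analytic with Bernstein parameter $\rho$ (indeed $g\in B_{\rho'}(Q')$ for every $\rho'<\rho$), with $\max_{[-1,1]}|g|\le\eps$, that is nonetheless as large on $I_\rho$ as the bound~\eqref{eq:oversampled} permits; the minimax claim in the caption of Figure~\ref{fig:alphaPlots} then follows from the standard two-point argument applied to $\{0,g\}$, since both functions have samples within $\eps$ of the all-zero data. The target rate $\eps^{\alpha(x)}/(1-r(x))$ shapes the construction: the factor $1/(1-r(x))$ calls for a singularity exactly at the endpoint $(\rho+\rho^{-1})/2$ of $I_\rho$, while the fractional power $\eps^{\alpha(x)}$ (rather than $\eps$ itself) calls for also grafting polynomial growth of degree $\approx\log(1/\eps)/\log\rho$ onto it. Concretely, set $n=\lceil\log(1/\eps)/\log\rho\rceil$ and
\[
g(x)=\Bigl(1-\tfrac1\rho\Bigr)\,\eps\sum_{k=0}^{\infty}\rho^{-k}\,T_{n+k}(x),
\]
with $T_m$ the degree-$m$ Chebyshev polynomial.

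Two quick checks come first. On $[-1,1]$ the bound $|T_m|\le1$ and summation of the geometric series give $\max_{[-1,1]}|g|\le(1-\tfrac1\rho)\eps\cdot\tfrac{\rho}{\rho-1}=\eps$. And since the Chebyshev coefficients of $g$ decay geometrically at rate $1/\rho$, the classical analyticity criterion for Chebyshev series (see, e.g., \cite{demanet-cheb}) shows that $g$ is analytic in the open Bernstein ellipse $E_\rho$, while the estimate $|T_m(z)|\le(\rho')^m$ on $\overline{E_{\rho'}}$ gives $g\in B_{\rho'}(Q')$ with $Q'=Q'(\rho,\rho',\eps)<\infty$ for each $\rho'<\rho$. (Resumming via $\sum_m t^mT_m(x)=\tfrac{1-tx}{1-2tx+t^2}$ at $t=1/\rho$ exhibits $g$ as an explicit rational function minus a polynomial, analytic except for a simple pole at $(\rho+\rho^{-1})/2\in\partial E_\rho$, which is the source of the $1/(1-r(x))$ growth.)

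The heart of the matter is the lower bound on $I_\rho$. Put $\phi=\phi(x)=x+\sqrt{x^2-1}=\rho\,r(x)\in[1,\rho)$; since $T_{n+k}(x)=\tfrac12(\phi^{n+k}+\phi^{-(n+k)})\ge\tfrac12\phi^{n+k}$ and every summand is positive for $x\ge1$,
\[
g(x)\ \ge\ \Bigl(1-\tfrac1\rho\Bigr)\frac{\eps\,\phi^{n}}{2}\sum_{k\ge0}\Bigl(\tfrac{\phi}{\rho}\Bigr)^{k}\ =\ \Bigl(1-\tfrac1\rho\Bigr)\frac{\eps\,\phi^{n}}{2}\cdot\frac{1}{1-r(x)}.
\]
Writing $\phi^{n}=\rho^{n}r(x)^{n}$, the choice of $n$ gives $\rho^{n}\ge\rho^{\log(1/\eps)/\log\rho}=1/\eps$, while $r(x)^{n}=r(x)^{\beta}r(x)^{n-\beta}$ with $\beta:=\log(1/\eps)/\log\rho$, $n-\beta\in[0,1)$, and $r(x)\in[1/\rho,1)$, so $r(x)^{n-\beta}\ge r(x)\ge1/\rho$; combined with the identity $r(x)^{\beta}=\eps^{-\log r(x)/\log\rho}=\eps^{\alpha(x)}$ this yields $\eps\,\phi^{n}\ge\eps^{\alpha(x)}/\rho$, hence $|g(x)|=g(x)\ge\tfrac{\rho-1}{2\rho^{2}}\,\eps^{\alpha(x)}/(1-r(x))$, and $c_\rho=\tfrac{\rho-1}{2\rho^{2}}$ works.

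The step I expect to need the most care is the uniformity of $c_\rho$ over all of $I_\rho$: the constant must not degenerate either as $x\to1$ (where $r(x)\to1/\rho$) or as $x$ approaches the endpoint (where $r(x)\to1$), and this is exactly what $|n-\beta|<1$ together with $r(x)\ge1/\rho$ secures, by keeping $r(x)^{n-\beta}$ bounded below. Note, finally, that hypothesis~\eqref{eq:sampling} is never used in constructing $g$; it is assumed only so that the resulting lower bound matches, up to the polylogarithmic constant and the $Q$-dependence, the upper bound~\eqref{eq:oversampled} of Theorem~\ref{teo:main}.
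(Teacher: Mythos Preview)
Your proof is correct and follows essentially the same approach as the paper: both construct the same Chebyshev tail $g(x)=c\sum_{m\ge K}\rho^{-m}T_m(x)$ (you use $K=\lceil\log(1/\eps)/\log\rho\rceil$ and an extra factor $\eps\rho^{K}$, the paper uses the floor and no extra factor, but these differ only by a constant in $[1,\rho]$), verify $|g|\le\eps$ on $[-1,1]$ by the geometric series, and obtain membership in every $B_{\rho'}(Q')$ from the geometric decay of the Chebyshev coefficients. The only notable difference is in the lower-bound computation on $I_\rho$: the paper sums the tail in closed form via the partial generating function $\sum_{m\ge K}\rho^{-m}T_m(x)=(\rho^{-K-1}T_{K+1}-\rho^{-K-2}T_K)/(1-2\rho^{-1}x+\rho^{-2})$ and then estimates the resulting rational expression, whereas you bound termwise via $T_m(x)\ge\tfrac12\phi^m$ and resum the geometric series directly. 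Your route is a bit more elementary and even yields the slightly larger constant $c_\rho=(\rho-1)/(2\rho^2)$ versus the paper's $(\rho-1)^2/(2\rho^3)$.
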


In other words, $g(x)$ is a valid extrapolant to $f(x) = 0$, to within a tolerance of $\eps$ on $x \in [-1,1]$, yet it departs 
from zero at the same asymptotic rate as the upper bound in Theorem \ref{teo:main} for $x>1$.  This means that there is no other linear or nonlinear 
procedure for constructing an extrapolant from samples on $[-1,1]$ that can do asymptotically better than the extrapolant that we 
construct in Theorem~\ref{teo:main}. For example, an extrapolant constructed by Chebyshev interpolation, piecewise polynomials, rational functions, or any other linear
or nonlinear procedure cannot deliver an extrapolation error that is better than~\eqref{eq:oversampled} in any meaningful way. 

%


\subsection{Discussion}

The number of equally spaced function samples $N+1$ separates two important regimes:
\begin{itemize}[leftmargin=*]
\item {\em Oversampled regime.} If $N$ is sufficiently large that~\eqref{eq:sampling} holds, then further refining of the grid does not improve the extrapolation error. In this regime it is the value of $\eps$ that dictates the error~\eqref{eq:oversampled}. The problem is essentially one of (deterministic) statistics.
\item {\em Undersampled regime.} If $\eps$ is sufficiently small that~\eqref{eq:sampling} does not hold, then the accuracy of the extrapolant is mostly blind to the fact that there is a perturbation level at all. In this regime, it is the number of function samples that dictates the error~\eqref{eq:undersampled}. The problem is essentially one of (classical) numerical analysis.
\end{itemize}

A similar phenomenon appears in the related problem of super-resolution from bandlimited measurements, where it is also the perturbation level of the function samples that determines the recovery error, provided the number of samples is above a certain threshold~\cite{Demanet_13_01,Demanet_15_01}.

In the oversampled case, there exists a perturbation vector for which the actual extrapolation error nearly matches the error bound for the proposed extrapolant $e(x)$ in (\ref{eq:e(x)}). This implies that $e(x)$ is a {\em minimax} estimator for $f(x)$, in the sense that it nearly attains the best possible error
\[
E_{\mbox{\scriptsize minmax}}(x) = \inf_{\hat{e}} \sup_{f, \underline{\eps}} |f(x) - \hat{e}(x)|,
\]
where the infimum is taken over all possible mappings from the perturbed samples to functions of $x \in I_\rho$, and the supremum assumes that $f \in B_{\rho}(Q)$ and $\| \underline{\eps} \|_{\infty} \leq \eps$. This paper does not address the question of whether $e(x)$ is also minimax in the undersampled case.

The statement that ``the value of $N$ does not matter provided it is sufficiently large" should not be understood as ``acquiring more function samples does not matter for extrapolation". The threshold phenomenon is specific to the model of a deterministic perturbation of level $\eps$, which is independent of $N$. If instead the entries of 
the perturbation vector $\underline{\eps}$ are modeled as independent and identically distributed Gaussian entries, $\mathcal{N}(0,s^2)$, then the approximation and extrapolation errors include an extra factor $1/\sqrt{N}$, linked to the local averaging implicitly performed in the least-squares polynomial fits. In this case the extrapolant converges pointwise to $f$ as $N\rightarrow\infty$, though only at the so-called parametric rate expected from statistics, not at the subexponential rate~\eqref{eq:undersampled} expected from numerical analysis (see Section~\ref{sec:ExtrapolationWithNoise}).

\subsection{Auxiliary results of independent interest}
Before we can begin to analyze how to extrapolate analytic functions, we derive results regarding the conditioning and approximation power of least squares approximation as well as its robustness to perturbed function samples. These 
results become useful in Section~\ref{sec:Extrapolation} for 
understanding how to do extrapolation successfully. 

Our auxiliary results may be independent interest so we summarize them here:
\begin{itemize}[leftmargin=*]
 \item Theorem~\ref{thm:LegendreVandermondeSingularValues}: The condition number of the rectangular $(N+1)\times (M+1)$ Legendre--Vandermonde 
 matrix at equally spaced points (see~\eqref{eq:LegendreVandermondeMatrix}) with $M\leq \constant\sqrt{N}$ is bounded by $\sqrt{5(2M+1)}$.
  \item Theorem~\ref{thm:ChebyshevSingularValues}: The condition number of the rectangular $(N+1)\times (M+1)$ Chebyshev--Vandermonde 
 matrix at equally spaced points (see~\eqref{eq:ChebyshevVandermondeMatrix}) with $M\leq \constant\sqrt{N}$ is bounded by $\sqrt{375(2M+1)/2}$.
 \item Theorem~\ref{thm:AnalyticLeastSquares}: When $M\leq \constant\sqrt{N}$, $\|f-p_M\|_\infty = \sup_{x\in[-1,1]}|f(x)-p_M(x)|$ converges geometrically to zero 
 as $M\rightarrow\infty$.
 \item Corollary~\ref{cor:ApproximationPowerWithNoise}: When $M\leq \constant\sqrt{N}$ is fixed and the function samples from $f$
 are perturbed by Gaussian noise with a variance of $s^2$, the expectation of $\|f-p_M\|_\infty$ converges to zero as $N\rightarrow \infty$ 
 like $\mathcal{O}(s/\sqrt{N})$.
 \item Theorem~\ref{thm:extrapolation}: When $M\leq \constant\sqrt{N}$ and the function samples are noiseless the 
 extrapolation error $|f(x)-p_M(x)|$ for each $x\in I_\rho$ converges geometrically to zero as $M\rightarrow\infty$. 
 \item Corollary~\ref{cor:NoisyErrorBound}: If one exponentially oversamples on $[-1,1]$, i.e., 
 $M\leq c\log(N)$ for a small constant $c$ and the function samples are perturbed by Gaussian noise, 
 then $|f(x)-p_M(x)|$ converges to zero as $M\rightarrow\infty$ for each $x\in I_\rho$.
 \end{itemize}
 
Note that Theorem~\ref{thm:AnalyticLeastSquares} shows that the convergence of $p_M(x)$ is geometrically 
fast with respect to $M$, but subexponential with respect in $N$ when $M = \lfloor\constant\sqrt{N}\rfloor$. One cannot achieve
a better convergence rate with respect to $N$ by using any other stable linear or nonlinear approximation scheme based on equally spaced 
function samples~\cite{Platte_11_01}.

Readers familiar with the paper by Adcock and Hansen~\cite{Adcock_12_01}, which shows how to
stably recover functions from its Fourier coefficients may consider 
Section~\ref{sec:EquallySpacedLeastSquares} and Section~\ref{sec:ApproximationPower} as a 
discrete and nonperiodic analogue of their work.  Related work based on Fourier expansions, 
includes the recovery of piecewise analytic functions from Fourier modes~\cite{Adcock_15_01} 
and a detailed analysis of the stability barrier in~\cite{Adcock_14_01}.  

%
%

\subsection{Notation and background material}\label{sec:NotationBackground}  
The polynomial $p_M(x)$ in~\eqref{eq:leastSquares} can be represented in any 
polynomial basis for $\mathcal{P}_M$. We use the 
Chebyshev polynomial basis because it is convenient for practical computations. 
That is, we express $p_M(x)$ in a Chebyshev expansion given by 
\begin{equation}
 p_M(x) = \sum_{k=0}^M c_k^{cheb} T_k(x), \qquad  T_k(x) = \cos(k\cos^{-1} x ), \qquad x\in[-1,1],
\label{eq:ChebyshevExpansion}
\end{equation} 
where $T_k$ is the Chebyshev polynomial of degree $k$, and we seek
the vector of Chebyshev coefficients $\underline{c}^{cheb} = (c_0^{cheb},\ldots,c_M^{cheb})^T$ 
so that $p_M(x)$ minimizes the $\ell_2$-norm in~\eqref{eq:leastSquares}.

The vector of Chebyshev coefficients $\underline{c}^{cheb}$ for $p_M(x)$ in~\eqref{eq:leastSquares} satisfies the so-called {\em normal 
equations}~\cite[Alg.~5.3.1]{Golub_96_01} written as
\begin{equation}
 \mathbf{T}_M(\underline{x}^{equi})^*\mathbf{T}_M(\underline{x}^{equi})\underline{c}^{cheb} = \mathbf{T}_M(\underline{x}^{equi})^*\left(\underline{f}+\underline{\eps}\right),
\label{eq:normalEquations}
\end{equation} 
where $\underline{f} = f(\underline{x}^{equi})$ is the vector of equally spaced
samples and $\mathbf{T}_M(\underline{x}^{equi})$ denotes the $(N+1)\times (M+1)$ 
Chebyshev--Vandermonde\footnote{The Chebyshev--Vandermonde matrix in~\eqref{eq:ChebyshevVandermondeMatrix} is 
the same as the familiar Vandermonde matrix except the monomials are 
replaced by Chebyshev polynomials.} matrix, 
\begin{equation}
 \mathbf{T}_M(\underline{x}^{equi}) =\begin{bmatrix} 
 T_0(x_0^{equi}) & \cdots & T_M(x_0^{equi}) \cr 
 \vdots & \ddots & \vdots \cr 
 T_0(x_N^{equi}) & \cdots & T_M(x_N^{equi}) 
\end{bmatrix}.
\label{eq:ChebyshevVandermondeMatrix}
\end{equation} 
This converts~\eqref{eq:leastSquares} into a routine linear 
algebra task that can be solved by Gaussian elimination and 
hence, the computation of $p_M(x)$ in~\eqref{eq:leastSquares} is simple.

If $f$ is analytic with a Bernstein parameter $\rho$,
then $f$ has a Chebyshev expansion $f(x) = \sum_{n=0}^{\infty} a_n^{cheb}T_n(x)$ for $x\in[-1,1]$ with 
coefficients that decay geometrically to zero as $n\rightarrow\infty$.
\begin{proposition} 
 Let $f$ be analytic with a Bernstein parameter $\rho>1$ and $Q<\infty$. Then,
 there are coefficients $a^{cheb}_n$ for $n\geq 0$ such that
\begin{itemize}[leftmargin=*]
 \item $f(x) = \sum_{n=0}^{\infty} a_n^{cheb}T_n(x)$, where the series converges uniformly and absolutely to $f$,
 \item $|a_0^{cheb}|\leq Q$ and $|a_n^{cheb}|\leq 2Q\rho^{-n}$ for $n\geq 1$, and
 \item $\sup_{x\in[-1,1]}|f(x) - f_N(x)| \leq 2Q\rho^{-N}/(\rho-1)$, where $f_N(x) = \sum_{n=0}^{N} a_n^{cheb}T_n(x)$ and $N\geq 0$.  
\end{itemize}
\label{prop:AnalyticConvergence} 
\end{proposition}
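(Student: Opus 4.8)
The plan is to change variables via the Joukowski map $x = (z+z^{-1})/2$ and read off the Chebyshev coefficients as Laurent coefficients. The map $z\mapsto (z+z^{-1})/2$ carries the circle $|z|=r$ onto the Bernstein ellipse $E_r$ for $r>1$, carries $|z|=1/r$ onto that same ellipse, and carries $|z|=1$ two-to-one onto $[-1,1]$; hence the open annulus $A_\rho=\{z:\rho^{-1}<|z|<\rho\}$ is mapped onto the open region enclosed by $E_\rho$. Since $f$ is analytic on that region and $|f|\le Q$ there, the function $g(z):=f\big((z+z^{-1})/2\big)$ is analytic on $A_\rho$, satisfies $g(z)=g(1/z)$, and obeys $|g(z)|\le Q$ on every circle $|z|=r$ with $1\le r<\rho$, because $(z+z^{-1})/2$ then lies in $\overline{E_r}\subset E_\rho$.

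First I would expand $g$ in its Laurent series $g(z)=\sum_{n\in\Z}b_nz^n$ on $A_\rho$, with $b_n=\frac{1}{2\pi i}\oint_{|z|=r}g(z)z^{-n-1}\,dz$ for every $\rho^{-1}<r<\rho$. The symmetry $g(z)=g(1/z)$ forces $b_{-n}=b_n$, and substituting $T_n(x)=(z^n+z^{-n})/2$ on $|z|=1$ gives $f(x)=b_0+2\sum_{n\ge1}b_nT_n(x)$, so the claimed coefficients are $a_0^{cheb}=b_0$ and $a_n^{cheb}=2b_n$ for $n\ge1$. A standard Cauchy estimate of the contour integral over $|z|=r<\rho$ gives $|b_n|\le \big(\max_{|z|=r}|g|\big)\,r^{-n}\le Qr^{-n}$; letting $r\to\rho^-$ yields $|b_n|\le Q\rho^{-n}$ for $n\ge1$, and taking $r=1$ gives $|b_0|\le Q$. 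This is exactly $|a_0^{cheb}|\le Q$ and $|a_n^{cheb}|\le 2Q\rho^{-n}$.

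Given the coefficient bounds, the other two assertions are immediate. Since $|T_n(x)|\le 1$ on $[-1,1]$, the series $\sum_n|a_n^{cheb}T_n(x)|$ is dominated on $[-1,1]$ by the $x$-independent convergent sum $Q+\sum_{n\ge1}2Q\rho^{-n}$, so the Weierstrass $M$-test gives absolute and uniform convergence; and the sum equals $f$ because the Laurent series of $g$ already converges to $g$ on the circle $|z|=1$, which lies inside $A_\rho$. For the truncation bound, $\sup_{x\in[-1,1]}|f(x)-f_N(x)|\le\sum_{n=N+1}^\infty 2Q\rho^{-n}=2Q\rho^{-(N+1)}/(1-\rho^{-1})=2Q\rho^{-N}/(\rho-1)$.

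The only genuinely delicate step is the geometric bookkeeping in the first paragraph: verifying that the Joukowski map sends $A_\rho$ onto the interior of $E_\rho$ (the two-sheeted behaviour being reconciled by the $z\leftrightarrow1/z$ symmetry of $g$) and that $\overline{E_r}\subset E_\rho$ for $r<\rho$, so the bound $|f|\le Q$ on $E_\rho$ transfers to the contours $|z|=r$. Everything after that is a routine Cauchy estimate followed by summing a geometric series. (The statement is classical — essentially Theorems 8.1--8.2 of Trefethen's \emph{Approximation Theory and Approximation Practice} — so one could alternatively just cite it.)
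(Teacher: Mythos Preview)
Your argument is correct and is essentially the classical proof via the Joukowski change of variables; the paper, however, does not supply a proof at all but simply cites Trefethen's \emph{Approximation Theory and Approximation Practice}, Theorems~8.1--8.2. You anticipated this in your final parenthetical remark, and indeed the argument you wrote out is the one Trefethen gives, so there is no real divergence in approach---you have simply unpacked the citation.
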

\begin{proof} 
 See~\cite[Thm.~8.1]{Trefethen_13_01} and~\cite[Thm.~8.2]{Trefethen_13_01}.
\end{proof}

Proposition~\ref{prop:AnalyticConvergence} says that the degree $N$ 
polynomial $f_N$, constructed by truncating the Chebyshev expansion of 
$f$, converges geometrically to $f$. In general, one cannot expect faster 
convergence for polynomial approximants of analytic 
functions. 
However, it is rare in practical applications for the 
Chebyshev expansion of $f$ to be known in advance. Instead, one 
usually attempts to emulate the degree $N$ 
polynomial $f_N$ by a polynomial interpolant constructed 
from $N+1$ samples of $f$. 
When the samples are taken from Chebyshev points or 
Gauss--Legendre nodes on $[-1,1]$ a polynomial interpolant can 
be constructed in a fast and stable manner~\cite{Driscoll_14_01,Hale_15_01}.
The same cannot be said for equally spaced samples
on $[-1,1]$~\cite{Platte_11_01}. In this paper we explore the least squares
polynomial approximation as a practical alternative to polynomial interpolation when 
equally spaced samples are known. 

For the convenience of the reader we summarize our main notation in Table~\ref{tab:notation}. 
\begin{table}
\begin{tabular}{cl}
\toprule
Notation & Description \cr 
\midrule
 $B_\rho(Q)$&  A function $f$ that is analytic in $E_\rho$ and $|f(z)|\leq Q$ for $z\in E_\rho$,\\[5pt]
 &  where $E_\rho$ is the region enclosed by a ellipse with foci at $\pm1$ and\\[5pt]
 & semimajor and semiminor axis lengths summing to $\rho$ \\[5pt]
 $f$ & An analytic function on $[-1,1]$ with Bernstein parameter $\rho>1$\\[5pt]
 $N+1$ & The number of equally spaced function samples from $[-1,1]$\\[5pt]
 $M$ & The desired degree of a polynomial approximation to $f$\\[5pt]
 $p_M$ & The least squares polynomial approximation of $f$, see~\eqref{eq:leastSquares}\\[5pt]
 $T_k(x)$ &  Chebyshev polynomial (1st kind) of degree $k$ \\[5pt]
 $P_k(x)$ &  Legendre polynomial of degree $k$ \\[5pt]
 $\underline{x}^{equi}$ & Vector of equally spaced points on $[-1,1]$, i.e., $x_k^{equi}=2k/N-1$\\[5pt]
 $\underline{f}$, $f(\underline{x}^{equi})$ & Vector of equally spaced function samples of $f$ \\[5pt]
 $\underline{\eps},\eps$ & Vector of perturbations in the function samples of $f$, $\|\underline{\eps}\|_\infty\leq \eps$ \\[5pt]
 $\mathbf{T}_M(\underline{x})$ & The matrix $\begin{bmatrix} 
 T_0(x_0) & \cdots & T_M(x_0) \cr 
 \vdots & \ddots & \vdots \cr 
 T_0(x_N) & \cdots & T_M(x_N) 
\end{bmatrix} \in\mathbb{R}^{(N+1)\times (M+1)}$ \\[5pt]
$\Lambda_N(\underline{x})$ & Lebesgue constant of $x_0,\ldots,x_N$, see Definition~\ref{def:Lebesgue} \\[5pt]
$S$ & Change of basis matrix from Legendre to Chebyshev coefficients\\[5pt]
& $S_{ij} = \begin{cases}\frac{1}{\pi}\Psi\left(\frac{j}{2}\right)^2, & 0=i\leq j\leq M, \text{ }j \text{ even},\cr \frac{2}{\pi}\Psi\left(\frac{j-i}{2}\right)\Psi\left(\frac{j+i}{2}\right),& 0<i\leq j\leq M, \text{ } i+j \text{ even}, \cr 0, & \text{otherwise},\end{cases}$\\[5pt] 
&  where $\Psi(i) = \Gamma(i + 1/2)/\Gamma(i+1)$ and $\Gamma(x)$ is the Gamma function\\[5pt]
$\sigma_{k}(A)$ & The $k$th largest singular value of the matrix $A$ \\[5pt]
$\kappa_2(A)$ & The $2$-norm condition number given by $\kappa_2(A) = \|A\|_2\|A^{-1}\|_2$\\[5pt]
$\mathcal{N}(\mu,s^2)$ & Gaussian distribution with mean $\mu$ and variance $s^2$ \\[5pt]
$\mathbb{E}[X]$ & The expectation of the random variable $X$ \\[5pt]
\bottomrule
\end{tabular}
\caption{A summary of our notation.}
\label{tab:notation} 
\end{table} 

\subsection{Structure of the paper} 
The paper is structured as follows. In Section~\ref{sec:EquallySpacedInterpolation} 
we further investigate the exponential ill-conditioning associated to polynomial 
interpolation.  In Section~\ref{sec:EquallySpacedLeastSquares} we
show that the normal equations associated with~\eqref{eq:leastSquares} are well-conditioned.  
In Section~\ref{sec:ApproximationPower}
we prove that for analytic functions the least squares polynomial fit is asymptotically optimal 
for a well-conditioned linear approximation scheme when $M\leq\constant\sqrt{N}$ and in Section~\ref{sec:StableLeastSquares} we show that it is 
also robust to noisy function samples. In Section~\ref{sec:Extrapolation} we show 
that the solution $p_M(x)$ from~\eqref{eq:leastSquares} can be used to extrapolate outside
of $[-1,1]$ if significant care is taken and we construct the asymptotically best extrapolant $e(x)$
as a polynomial. Finally, in Section~\ref{sec:LeastSquaresAlgorithm} we describe a direct 
algorithm for solving~\eqref{eq:leastSquares} in $\mathcal{O}(M^3+MN)$ operations based on 
Toeplitz and Hankel matrices. 

\section{How bad is equally spaced polynomial interpolation?}\label{sec:EquallySpacedInterpolation}
First, we explore how bad equally spaced polynomial interpolation is in practice by 
taking $M = N$ in~\eqref{eq:leastSquares} and showing that the condition number 
of the $(N+1)\times (N+1)$ Chebyshev--Vandermonde matrix $\mathbf{T}_N(\underline{x}^{equi})$ in~\eqref{eq:ChebyshevVandermondeMatrix} grows exponentially with $N$. 

When $M = N$ the polynomial $p_M(x)$ that minimizes the $\ell_2$-norm in~\eqref{eq:leastSquares} also 
interpolates $f$ at $\underline{x}^{equi}$ and 
the vector of Chebyshev coefficients $\underline{c}^{cheb}$ for 
$p_M(x)$ in~\eqref{eq:ChebyshevExpansion} satisfies the linear system
\begin{equation}
 \mathbf{T}_N(\underline{x}^{equi})\underline{c}^{cheb} = \left(\underline{f}+\underline{\eps}\right).
\label{eq:linearSystem}
\end{equation} 
By the Lagrange interpolation theorem, $\mathbf{T}_N(\underline{x}^{equi})$ 
is invertible and mathematically there is a unique solution vector $\underline{c}^{cheb}$ to~\eqref{eq:linearSystem}.
Unfortunately, it turns out that $\mathbf{T}_N(\underline{x}^{equi})$ is exponentially close to 
being singular and the vector $\underline{c}^{cheb}$ is far too sensitive to the
perturbations in $\underline{f}+\underline{\eps}$ for~\eqref{eq:linearSystem} to be of practical use when $N$ is large. 

We explain why the condition number of $\mathbf{T}_N(\underline{x}^{equi})$ grows
exponentially with $N$ by relating it to the poorly behaved 
{\em Lebesgue constant} of $\underline{x}^{equi}$. 
\begin{definition}[Lebesgue constant]
 Let $x_0,\ldots,x_N$ be a set of $N+1$ distinct points in $[-1,1]$. Then, the 
 Lebesgue constant of $\underline{x}=(x_0,\ldots,x_N)^T$ is defined by 
 \begin{equation}
  \Lambda_N(\underline{x}) = \sup_{x\in[-1,1]}\sum_{j=0}^N|\ell_j(x)|,\qquad \ell_j(x) = \prod_{k=0,k\neq j}^N \frac{x-x_k}{x_j-x_k}.
 \label{eq:Lebesgue}
 \end{equation} 
 \label{def:Lebesgue}
\end{definition}

To experts the fact that $\Lambda_N(\underline{x})$ and the condition number of 
$\mathbf{T}_N(\underline{x}^{equi})$ are related is not too surprising 
because polynomial interpolation is a linear approximation scheme~\cite{Platte_11_01}.
However, the Lebesgue constant $\Lambda_N(\underline{x})$ is usually interpreted 
as a number that describes how good
polynomial interpolation of $f$ at $x_0,\ldots,x_N$ is in comparison to the 
best minimax polynomial approximation of degree $N$. That is, the polynomial 
interpolant of $f$ at $\underline{x}$ is suboptimal by a factor of at 
most $1+\Lambda_N(\underline{x})$~\cite[p.~24]{Powell_81_01}. Using 
$\|\cdot\|_\infty$ to denote the absolute maximum norm of a function on $[-1,1]$, this
can be expressed as
\[
 \|f - p_N \|_\infty \leq \left(1 + \Lambda_N(\underline{x})\right)\inf_{q\in\mathcal{P}_M}\|f - q\|_\infty, 
\]
where $p_N$ is the polynomial of degree at most $N$ such that $p_N(x_k) = f(x_k)$ for $0\leq k\leq N$. 
For example, when the interpolation nodes are the Chebyshev points (of the first kind), i.e., 
\begin{equation}
 x_k^{cheb} = \cos((k+1/2)\pi/(N+1)),\qquad  0\leq k\leq N,
\label{eq:ChebyshevPoints}
\end{equation} 
the Lebesgue constant $\Lambda_N(\underline{x}^{cheb})$ grows modestly with $N$ and
is bounded by $\tfrac{2}{\pi}\log(N+1)+1$~\cite{Brutman_96_01}. Thus, the polynomial interpolant 
of $f$ at $\underline{x}^{cheb}$ is near-best (off by at most a logarithmic factor).
In addition, we have\footnote{To show that 
$\kappa_2\left(\mathbf{T}_N(\underline{x}^{cheb})\right) = \sqrt{2}$, note that 
$\mathbf{T}_N(\underline{x}^{cheb})$ is the discrete cosine transform (of type III)~\cite{Strang_99_01}. Thus, 
$\mathbf{T}_N(\underline{x}^{cheb})D^{-1/2}$ is an orthogonal matrix with
$D = {\rm diag}(N+1,(N+1)/2,\ldots,(N+1)/2)$.} 
$\kappa_2\left(\mathbf{T}_N(\underline{x}^{cheb})\right) = \sqrt{2}$. 
This means that polynomial interpolants at Chebyshev points are a powerful
tool for approximating functions even when polynomial degrees are in the thousands or 
millions~\cite{Driscoll_14_01}. 

In stark contrast, the Lebesgue constant for equally spaced points 
explodes exponentially with $N$ and we have~\cite[Thm.~2]{Trefethen_91_01}
\[
 \frac{2^{N-2}}{N^2} < \Lambda_N(\underline{x}^{equi}) < \frac{2^{N+3}}{N}.
\]
Therefore, an equally spaced polynomial interpolant of
$f$ can be exponentially worse than the best minimax polynomial approximation of the same degree. 
Moreover, in Theorem~\ref{thm:illconditioned} we show that 
$\kappa_2\left(\mathbf{T}_N(\underline{x}^{equi})\right)$ is related 
to $\Lambda_N(\underline{x}^{equi})$ and grows at an exponential rate, making practical computations
in floating point arithmetic difficult.

\begin{theorem} 
Let $\underline{x} = (x_0,\ldots,x_N)$ be a vector of $N+1$ distinct points on $[-1,1]$. Then,
\[
\Lambda_N(\underline{x}) \leq \kappa_2 \left( \mathbf{T}_N(\underline{x}) \right) \leq \sqrt{2}(N+1) \Lambda_N(\underline{x}),
\]
where $\kappa_2$ is the $2$-norm condition number of a matrix, 
$\Lambda_N(\underline{x})$ is the Lebesgue constant of $\underline{x}$, and
$\ell_j(x)$ for $0\leq j\leq N$ is given in~\eqref{eq:Lebesgue}.  
\label{thm:illconditioned}
\end{theorem}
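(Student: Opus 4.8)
The plan is to relate the condition number of $\mathbf{T}_N(\underline{x})$ to the Lebesgue constant by exploiting the fact that the rows of $\mathbf{T}_N(\underline{x})^{-1}$ (or rather, the action of the inverse) encode exactly the coefficients of the Lagrange cardinal functions $\ell_j$ in the Chebyshev basis. More precisely, let $V = \mathbf{T}_N(\underline{x})$. Then $V$ is invertible by Lagrange interpolation, and for any data vector $\underline{y}\in\mathbb{C}^{N+1}$, the vector $\underline{c} = V^{-1}\underline{y}$ is the vector of Chebyshev coefficients of the unique polynomial $p\in\mathcal{P}_N$ with $p(x_k) = y_k$. Since $p(x) = \sum_{j=0}^N y_j \ell_j(x)$, the $j$th column of $V^{-1}$ is precisely the Chebyshev-coefficient vector of $\ell_j$.

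For the upper bound $\kappa_2(V)\le \sqrt{2}(N+1)\Lambda_N(\underline{x})$: I would bound $\|V\|_2$ and $\|V^{-1}\|_2$ separately. For $\|V\|_2$, every entry $T_k(x_j)$ has modulus at most $1$, so $\|V\|_2 \le \|V\|_F \le N+1$ (or more crudely $\le \sqrt{2}(N+1)$ with room to spare; in fact $\|V\|_2\le\sqrt{(N+1)(N+1)}=N+1$, and even $\|V\|_2\le\sqrt{N+1}\,\|V\|_{1\to 2}$-type estimates work, but $N+1$ suffices). For $\|V^{-1}\|_2$, I use that $\|V^{-1}\|_2 \le \sqrt{N+1}\,\max_j \|(V^{-1})\text{'s }j\text{th column}\|_2$... actually more carefully, $\|V^{-1}\|_2 \le \|V^{-1}\|_F = \bigl(\sum_j \|c^{(j)}\|_2^2\bigr)^{1/2}$ where $c^{(j)}$ is the Chebyshev coefficient vector of $\ell_j$. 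By orthogonality of Chebyshev polynomials with respect to the weight $(1-x^2)^{-1/2}$, one has $\|c^{(j)}\|_2^2 \le \tfrac{2}{\pi}\int_{-1}^1 \ell_j(x)^2 (1-x^2)^{-1/2}\,dx \le \tfrac{2}{\pi}\cdot\pi\cdot\|\ell_j\|_\infty^2 = 2\|\ell_j\|_\infty^2$, hmm, this needs the constant $\tfrac{2}{\pi}$ massaged using $c_0$-versus-$c_k$ normalization; in any case $\|c^{(j)}\|_2 \le \sqrt{2}\,\|\ell_j\|_\infty$, and then $\sum_j \|\ell_j\|_\infty \le$ ... no — I need $\sum_j \|\ell_j\|_\infty^2$, which is at most $\bigl(\sum_j\|\ell_j\|_\infty\bigr)^2$, but $\sum_j \sup_x|\ell_j(x)|$ is not the same as $\sup_x\sum_j|\ell_j(x)| = \Lambda_N$. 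The cleaner route is: $\|V^{-1}\underline{y}\|_2 \le \sqrt{2}\,\|p\|_{L^2_w} \le \sqrt{2}\,\|p\|_\infty$ where $p = \sum y_j\ell_j$, and $\|p\|_\infty \le \Lambda_N(\underline{x})\,\|\underline{y}\|_\infty \le \Lambda_N(\underline{x})\,\|\underline{y}\|_2$, giving $\|V^{-1}\|_2 \le \sqrt{2}\,\Lambda_N(\underline{x})$. Combining, $\kappa_2(V) \le (N+1)\cdot\sqrt{2}\,\Lambda_N(\underline{x})$, as claimed.

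For the lower bound $\Lambda_N(\underline{x}) \le \kappa_2(V)$: I would produce, for any fixed evaluation point $x^*\in[-1,1]$, a unit vector on which $V$ or $V^{-1}$ is large. Take the data vector $\underline{y}$ with $y_j = \mathrm{sign}(\ell_j(x^*))$, so $\|\underline{y}\|_\infty = 1$ but $\|\underline{y}\|_2 = \sqrt{N+1}$; the corresponding interpolant satisfies $p(x^*) = \sum_j|\ell_j(x^*)|$. Evaluating $p(x^*) = \langle \underline{c}, (T_0(x^*),\dots,T_N(x^*))^T\rangle$ and bounding $|p(x^*)| \le \|\underline{c}\|_2\,\|(T_k(x^*))_k\|_2 \le \|\underline{c}\|_2 \sqrt{N+1}$ gives $\|\underline{c}\|_2 \ge \sum_j|\ell_j(x^*)|/\sqrt{N+1}$, while $\|\underline{c}\|_2 = \|V^{-1}\underline{y}\|_2 \le \|V^{-1}\|_2\sqrt{N+1}$, hence $\|V^{-1}\|_2 \ge \sum_j|\ell_j(x^*)|/(N+1)$. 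That only recovers $\Lambda_N/(N+1) \le \|V^{-1}\|_2 \le \kappa_2(V)$, which is weaker than stated; so I expect the actual argument to be slightly sharper — perhaps choosing $\underline{y}$ supported to make $\|\underline{y}\|_2$ comparable to $1$ (e.g. a single nonzero entry, giving $\|\ell_j\|_\infty \le \|V^{-1}\|_2$ for each $j$, then $\Lambda_N \le \sum_j\|\ell_j\|_\infty$ is again too lossy) or else evaluating the quadratic form of $VV^*$ against the Lagrange structure directly. The cleanest sharp argument: $\|V^{-1}\|_\infty$-type norms give $\Lambda_N$ exactly, then convert $\|\cdot\|_\infty$ to $\|\cdot\|_2$ using norm equivalence with the $\sqrt{N+1}$ factors landing on the correct side. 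I anticipate the main obstacle is precisely this bookkeeping of $1,2,\infty$ norm conversions so that the $\sqrt{2}(N+1)$ on the upper side and the clean $\Lambda_N$ with no prefactor on the lower side both come out, rather than any conceptual difficulty; the conceptual content — columns of $V^{-1}$ are Chebyshev coordinates of Lagrange functions — is immediate.
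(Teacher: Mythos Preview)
Your upper bound is correct and takes a genuinely different route from the paper. The paper factors $V=\mathbf{T}_N(\underline{x}) = C\,\mathbf{T}_N(\underline{x}^{cheb})$ through the Chebyshev nodes, uses that $\mathbf{T}_N(\underline{x}^{cheb})D^{-1/2}$ is orthogonal, and bounds $\|C^{-1}\|_\infty \le \Lambda_N(\underline{x})$ via the explicit Cauchy-inverse formula $(C^{-1})_{ij}=\ell_j(x_i^{cheb})$; after $\infty$-to-$2$ norm conversion this yields $\|V^{-1}\|_2 \le \sqrt{2}\,\Lambda_N$. You obtain the same bound directly from the weighted-$L^2$ orthogonality of the $T_k$: $\|V^{-1}\underline{y}\|_2 \le \sqrt{2/\pi}\,\|p\|_{L^2_w} \le \sqrt{2}\,\|p\|_\infty \le \sqrt{2}\,\Lambda_N\|\underline{y}\|_\infty \le \sqrt{2}\,\Lambda_N\|\underline{y}\|_2$. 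This is cleaner --- no auxiliary point set, no Cauchy matrix inversion --- and both routes combine with $\|V\|_2 \le \|V\|_F \le N+1$ to give the stated upper bound.

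Your lower bound is genuinely incomplete, and the obstacle is not mere bookkeeping. The paper's argument is essentially your first attempt: from the extremal polynomial with $|p(x_k)|\le 1$ and $|p(x^*)|=\Lambda_N$ it gets $\|\underline{c}^{cheb}\|_2 \ge \Lambda_N/(N+1)$ and hence $\|V^{-1}\|_2 \ge \Lambda_N/(N+1)^{3/2}$. To finish, the paper asserts $\|V\|_2 \ge (N+1)^{3/2}$, but this contradicts $\|V\|_2 \le N+1$ used a few lines earlier for the upper bound. In fact the lower inequality $\Lambda_N(\underline{x}) \le \kappa_2(V)$ is false as stated: for $N=1$, $x_0=0$, $x_1=1$ one has $\ell_0(x)=1-x$, $\ell_1(x)=x$, so $\Lambda_1=3$ (attained at $x=-1$), while $V=\bigl(\begin{smallmatrix}1&0\\1&1\end{smallmatrix}\bigr)$ has $\kappa_2(V)=(3+\sqrt{5})/2\approx 2.618$. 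What either argument legitimately delivers is only $\kappa_2(V) \gtrsim \Lambda_N(\underline{x})/\sqrt{N+1}$, so your inability to close the gap reflects a real defect in the claim, not a missing trick.
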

\begin{proof} 
The vector $\underline{x}$ contains $N+1$ distinct points so that $\mathbf{T}_N(\underline{x})$ is an invertible matrix. 
We write $\kappa_2\left( \mathbf{T}_N(\underline{x}) \right) = \| \mathbf{T}_N(\underline{x})\|_2\| \mathbf{T}_N(\underline{x})^{-1}\|_2$ 
and proceed by bounding $\| \mathbf{T}_N(\underline{x})\|_2$ and $\| \mathbf{T}_N(\underline{x})^{-1}\|_2$ separately.

Since $|T_k(x)|\leq 1$ for $k\geq 0$ and $x\in[-1,1]$, we have $\|\mathbf{T}_N(\underline{x})\|_2\leq N+1$.
To bound $\|\mathbf{T}_N(\underline{x})^{-1}\|_2$ we note 
that $\mathbf{T}_N(\underline{x}^{cheb})$ is the discrete cosine transform (of type III)~\cite{Strang_99_01}, 
where $\underline{x}^{cheb}$ is the vector of Chebyshev points in~\eqref{eq:ChebyshevPoints}.
Hence, $\mathbf{T}_N(\underline{x}^{cheb})D^{-1/2}$ is an orthogonal matrix with 
$D = {\rm diag}(N+1, (N+1)/2,\ldots,(N+1)/2)$. By the Lagrange interpolation formula~\cite[Sec.~4.1]{Powell_81_01} (applied 
to each entry of $\mathbf{T}_N(\underline{x})$) we have the following matrix decomposition:
\begin{equation}
 \mathbf{T}_N(\underline{x}) = C \mathbf{T}_N(\underline{x}^{cheb}),\qquad C_{ij} = \prod_{k=0,k\neq j}^N \frac{x_i-x_k^{cheb}}{x_j^{cheb}-x_k^{cheb}}.
\label{eq:CauchyDecomposition}
\end{equation} 
Since $\mathbf{T}_N(\underline{x}^{cheb})D^{-1/2}$ is an orthogonal matrix we find that 
\[
 \|\mathbf{T}_N(\underline{x})^{-1}\|_2 = \| D^{-1/2} (\mathbf{T}_N(\underline{x}^{cheb})D^{-1/2})^{-1} C^{-1} \|_2 \leq \sqrt{2}(N+1)^{-1/2}\|C^{-1}\|_2.
\]
We must now bound $\|C^{-1}\|_2$. From~\eqref{eq:CauchyDecomposition} we see that 
$C$ is a generalized Cauchy matrix and hence, there is an explicit formula for 
its inverse given by~\cite[Thm.~1]{Schechter_59_01}  
\begin{equation}
  (C^{-1})_{ij} = \prod_{k=0,k\neq j}^N \frac{x_i^{cheb}-x_k}{x_j-x_k} := \ell_j(x_i^{cheb}), \qquad 0\leq i,j\leq N.
\label{eq:cauchyInverseFormula}
\end{equation} 
By the equivalence of matrix norms, we have $(N+1)^{-1/2}\|C^{-1}\|_\infty\leq\|C^{-1}\|_2\leq(N+1)^{1/2}\|C^{-1}\|_\infty$ 
and from~\eqref{eq:cauchyInverseFormula} we find that
\[
 \|C^{-1}\|_\infty = \sup_{0\leq i\leq N} \sum_{j=0}^N |\ell_j(x_i^{cheb})|\leq \Lambda_N(\underline{x}).
\] 
The upper bound in the statement of the theorem follows by combining the calculated upper bounds 
for $\|\mathbf{T}_N(\underline{x})\|_2$ and $\| \mathbf{T}_N(\underline{x})^{-1}\|_2$. 

For the lower bound, note that there exists a polynomial $p^*$ of degree $N$ such that $|p(x_k)|\leq 1$ and an $x^*\in[-1,1]$ 
such that $|p(x^*)| = \Lambda_N(\underline{x})$. Let $p(x) = \sum_{k=0}^\infty c_k^{cheb}T_k(x)$. Since $|T_k(x)|\leq 1$
and $|p(x^*)| = \Lambda_N(\underline{x})$, there exists an $0\leq k\leq N$ such that $|c_k^{cheb}|\geq \Lambda_N(\underline{x})/(N+1)$. 
Hence, $\|\underline{c}^{cheb}\|_2\geq \Lambda_N(\underline{x})/(N+1)$ and we have
\[
\frac{\Lambda_N(\underline{x})}{N+1}\leq \|\underline{c}^{cheb}\|_2\leq \|\mathbf{T}_N(\underline{x})^{-1}\|_2\|p(\underline{x})\|_2 \leq \sqrt{N+1}\|\mathbf{T}_N(\underline{x})^{-1}\|_2.
\]
The lower bound in the statement of the theorem follows 
from $\|\mathbf{T}_N(\underline{x})\|_2\geq (N+1)^{1/2}\|\mathbf{T}_N(\underline{x})\|_1\geq (N+1)^{3/2}$. 
\end{proof}

Theorem~\ref{thm:illconditioned} explains why  
$\kappa_2\left( \mathbf{T}_N(\underline{x}^{equi})\right)$ 
grows exponentially with $N$ and confirms that one should expect 
severe numerical issues with equally spaced polynomial 
interpolation, in addition to the possibility of Runge's phenomenon.  

It is not the Chebyshev polynomials that should be blamed for the exponential growth of $\kappa_2(\mathbf{T}_N(\underline{x}^{equi}))$
with $N$, but the equally spaced points on $[-1,1]$. 
In a different direction, others have focused on finding $N+1$ points $\underline{x}$ 
such that $\mathbf{T}_N(\underline{x})$ 
is well-conditioned. Reichel and Opfer showed that $\mathbf{T}_N(\underline{x})$ 
is well-conditioned when $\underline{x}$ is a set of points on a certain
Bernstein ellipse~\cite{Reichel_91_01}. Gautschi in~\cite[(27)]{Gautschi_11_01} gives 
an explicit formula for the condition number of $\mathbf{T}_N(\underline{x})$ for any
point set $\underline{x}$ in the Frobenius norm and showed that $\mathbf{T}_N(\underline{x}^{cheb})D^{-1/2}$ 
is the only perfectly conditioned matrix among all so-called Vandermonde-like matrices~\cite{Gautschi_11_01}. A survey of this research area can be 
found here~\cite[Sec.~V]{Gautschi_90_01}. 

%


\section{How good is equally spaced least squares polynomial fitting?}\label{sec:EquallySpacedLeastSquares}
We now turn our attention to solving the least squares problem in~\eqref{eq:leastSquares}, 
where $M<N$. We are interested in the normal equations in~\eqref{eq:normalEquations}
and the condition number of the $(M+1)\times (M+1)$ matrix 
$\mathbf{T}_M(\underline{x}^{equi})^*\mathbf{T}_M(\underline{x}^{equi})$. We 
show that the situation is very different from in Section~\ref{sec:EquallySpacedInterpolation}
if we take $M\leq\constant\sqrt{N}$. In particular, $\kappa_2(\mathbf{T}_M(\underline{x}^{equi})^*\mathbf{T}_M(\underline{x}^{equi}))$
is bounded with $N$ and grows modestly with $M$ if $M\leq \constant\sqrt{N}$.  This 
means that the Chebyshev coefficients for $p_M(x)$ in~\eqref{eq:leastSquares} are not 
sensitive to the perturbations in $\underline{f}+\underline{\eps}$ and can be computed accurately in 
double precision. 

To bound the condition number of 
$\mathbf{T}_M(\underline{x}^{equi})^*\mathbf{T}_M(\underline{x}^{equi})$ we 
can no longer use the matrix decomposition 
in~\eqref{eq:CauchyDecomposition} as that is not applicable when $M<N$. 
Instead, we first consider the normal equations for the Legendre--Vandermonde\footnote{The Legendre--Vandermonde 
matrix in~\eqref{eq:LegendreVandermondeMatrix} is 
the same as the Chebyshev--Vandermonde matrix except the Chebyshev polynomials 
are replaced by Legendre polynomials.} 
matrix 
\begin{equation} 
 \mathbf{P}_M(\underline{x}^{equi}) =\begin{bmatrix} 
 P_0(x_0^{equi}) & \cdots & P_M(x_0^{equi}) \cr 
 \vdots & \ddots & \vdots \cr 
 P_0(x_N^{equi}) & \cdots & P_M(x_N^{equi}) 
\end{bmatrix}\in\mathbb{R}^{(N+1)\times(M+1)}
\label{eq:LegendreVandermondeMatrix}
\end{equation} 
and $P_k(x)$ is the Legendre polynomial of degree $k$~\cite[Sec.~18.3]{Olver_10_01}. Legendre 
polynomials are theoretically convenient for us because they are orthogonal with respect to 
the standard $L^2$ inner-product~\cite[(18.2.1) \& Tab.~18.3.1]{Olver_10_01}, i.e., 
\begin{equation}
 \int_{-1}^1 P_m(x)P_n(x)dx = \begin{cases}\frac{2}{2n+1}, & m=n,\\0, & m\neq n, \end{cases} \qquad 0\leq m,n\leq M.
\label{eq:LegendreOrthogonality}
\end{equation}
Afterwards, in Theorem~\ref{thm:ChebyshevSingularValues} we go back to 
consider $\kappa_2(\mathbf{T}_M(\underline{x}^{equi})^*\mathbf{T}_M(\underline{x}^{equi}))$.

To bound the condition number of $\mathbf{P}_M(\underline{x}^{equi})^*\mathbf{P}_M(\underline{x}^{equi})$
our key insight is to view the $(m,n)$ entry of 
$\tfrac{2}{N}\mathbf{P}_M(\underline{x}^{equi})^*\mathbf{P}_M(\underline{x}^{equi})$
as essentially a trapezium rule approximation of the integral 
in~\eqref{eq:LegendreOrthogonality}. 
Since $\kappa_2(\mathbf{P}_M(\underline{x}^{equi})^*\mathbf{P}_M(\underline{x}^{equi}))$ equals $\sigma_{1}(\mathbf{P}_M(\underline{x}^{equi}))^2$ divided by $\sigma_{M+1}(\mathbf{P}_M(\underline{x}^{equi}))^2$, 
Theorem~\ref{thm:LegendreVandermondeSingularValues} focuses on bounding the squares of the maximum and minimum 
singular values of $\mathbf{P}_M(\underline{x}^{equi})$. 

\begin{theorem}
For any integers $M$ and $N$ satisfying $M\leq \constant\sqrt{N}$ we have
 \[
  \sigma_{1}\!\left(\mathbf{P}_M(\underline{x}^{equi})\right)^2 \leq 2N, \quad 
 \sigma_{M+1}\!\left(\mathbf{P}_M(\underline{x}^{equi})\right)^2 \geq \frac{2N}{5(2M+1)}.
\]
(Tighter but messy bounds can be found in~\eqref{eq:s1} and~\eqref{eq:s2}.) 
\label{thm:LegendreVandermondeSingularValues}
\end{theorem}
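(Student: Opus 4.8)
The plan is to pass to the Gram matrix $G=\tfrac{2}{N}\,\mathbf{P}_M(\underline{x}^{equi})^{*}\mathbf{P}_M(\underline{x}^{equi})\in\R^{(M+1)\times(M+1)}$, which has full rank (the $x_k^{equi}$ are distinct) and eigenvalues $\tfrac{2}{N}\,\sigma_k(\mathbf{P}_M(\underline{x}^{equi}))^{2}$; thus it suffices to prove $\lambda_{\max}(G)\le 4$ and $\lambda_{\min}(G)\ge \tfrac{4}{5(2M+1)}$. Writing $h=2/N$, the entry $G_{mn}=h\sum_{k=0}^{N}P_m(x_k^{equi})P_n(x_k^{equi})$ is a closed trapezoidal-type quadrature sum for the integral in~\eqref{eq:LegendreOrthogonality}. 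Since $P_mP_n$ is a polynomial of degree $m+n\le 2M$, the Euler--Maclaurin formula terminates and yields the exact identity
\[
G_{mn}=\int_{-1}^{1}P_mP_n\,dx+\frac{h}{2}\bigl(P_m(-1)P_n(-1)+P_m(1)P_n(1)\bigr)+\sum_{\nu\ge1}\frac{B_{2\nu}}{(2\nu)!}\,h^{2\nu}\Bigl[(P_mP_n)^{(2\nu-1)}\Bigr]_{-1}^{1}.
\]
The first term equals $\tfrac{2}{2n+1}\delta_{mn}$ by~\eqref{eq:LegendreOrthogonality}. Because the equally spaced grid is symmetric about the origin and $P_n$ has parity $(-1)^n$, every entry with $m+n$ odd vanishes exactly, while for $m+n$ even the endpoint term is $h$ and each bracket collapses to $2(P_mP_n)^{(2\nu-1)}(1)$. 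Hence $G=D+E^{(0)}+E^{(1)}$, where $D=\operatorname{diag}\bigl(2/(2n+1)\bigr)$, the endpoint correction $E^{(0)}=h(uu^{T}+vv^{T})$ (with $u,v$ the indicators of the even and odd indices in $\{0,\dots,M\}$) is positive semidefinite of rank $\le 2$, and $E^{(1)}$ is the symmetric Euler--Maclaurin tail.

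The second step is to control $E^{(0)}$ and $E^{(1)}$. Since $E^{(0)}\succeq0$ and $\|E^{(0)}\|_2=h(\lfloor M/2\rfloor+1)\le (M+2)/N$, adding $E^{(0)}$ can only raise eigenvalues, so $\lambda_{\min}(D+E^{(0)})\ge 2/(2M+1)$ and $\lambda_{\max}(D+E^{(0)})\le 2+(M+2)/N$. For the tail I would use the closed form $0\le P_n^{(j)}(1)=\tfrac{(n+j)!}{2^{j}j!(n-j)!}\le \tfrac{(2n)^{2j}}{2^{j}j!}$, Leibniz's rule, and a Vandermonde binomial identity to get $\bigl|(P_mP_n)^{(2\nu-1)}(1)\bigr|\le 8^{2\nu-1}M^{4\nu-2}/(2\nu-1)!$. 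The hypothesis $M\le\constant\sqrt N$ gives $8hM^{2}\le4$, so together with $|B_{2\nu}/(2\nu)!|\le \pi^{2}/\bigl(3(2\pi)^{2\nu}\bigr)$ the series $\sum_{\nu\ge1}|B_{2\nu}/(2\nu)!|\,h^{2\nu}\,8^{2\nu-1}M^{4\nu-2}/(2\nu-1)!$ is summable with an explicit constant, which yields $|E^{(1)}_{mn}|=\mathcal{O}(1/M^{2})$ and hence, estimating $\|E^{(1)}\|_2$ by its (same-parity) row sums, $\|E^{(1)}\|_2=\mathcal{O}(1/M)$ with a small explicit constant. Weyl's inequality then gives $\lambda_{\min}(G)\ge 2/(2M+1)-\|E^{(1)}\|_2\ge 4/(5(2M+1))$ and $\lambda_{\max}(G)\le 2+(M+2)/N+\|E^{(1)}\|_2\le4$; keeping the constants unrounded gives the sharper statements~\eqref{eq:s1}--\eqref{eq:s2}.

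I expect the Euler--Maclaurin tail estimate to be the crux: the constant in $\|E^{(1)}\|_2\le c/M$ must land below the tight lower-bound budget $6/(5(2M+1))$. Two features make this succeed: (i) the only genuinely $\mathcal{O}(1/M)$ piece of $G-D$, namely $E^{(0)}$, is positive semidefinite and therefore free of charge for the lower bound; and (ii) $M\le\constant\sqrt N$ keeps $8hM^{2}$ bounded, so the $(2\pi)^{-2\nu}$ decay of the Bernoulli coefficients beats the $(8hM^{2})^{2\nu}$ growth and the tail sums to a constant rather than diverging. The remaining ingredients — the parity cancellations, the termination of Euler--Maclaurin on polynomials, and Weyl's inequality — are routine. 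The handful of small values of $M$ for which the asymptotic constants are not yet inside the budget can be verified directly (for instance, when $M=1$ the matrix $G$ is already diagonal).
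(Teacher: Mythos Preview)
Your approach is correct and shares the paper's backbone --- Euler--Maclaurin for the Gram matrix followed by Weyl's inequality --- but you handle the endpoint correction in a genuinely different and cleaner way. The paper writes $\mathbf{P}_M^{*}\mathbf{P}_M = D + C + \tfrac{N}{2}E$ and then invokes an auxiliary Gershgorin lemma (with a diagonal similarity transform) to bound $\lambda_{1}(D+C)$ and $\lambda_{M+1}(D+C)$; you instead observe that $E^{(0)}=hC=h(uu^{T}+vv^{T})\succeq 0$, so $\lambda_{\min}(D+E^{(0)})\ge\lambda_{\min}(D)=2/(2M+1)$ comes for free and the full Gershgorin argument is bypassed. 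For the tail, the paper uses the Markov brothers inequality and the Frobenius norm $\|E\|_{F}$, while you use the closed form $P_n^{(j)}(1)=\tfrac{(n+j)!}{2^{j}j!(n-j)!}$ with the Vandermonde identity $\sum_j\binom{s}{j}^{2}=\binom{2s}{s}$ and bound $\|E^{(1)}\|_2$ by row sums; working through your estimate gives $|E^{(1)}_{mn}|\le \tfrac{\pi}{6M^{2}}\sinh(2/\pi)\approx 0.36/M^{2}$, hence $\|E^{(1)}\|_2\lesssim 0.18/M$, which does sit inside the budget $6/(5(2M+1))$ for $M\ge 2$, with $M\in\{0,1\}$ checked directly. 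What your route buys is the elimination of Lemma~\ref{lem:GcirclesDplusC}; what the paper's route buys is that the Markov brothers bound is basis-agnostic and can be recycled verbatim for the Chebyshev case in Theorem~\ref{thm:ChebyshevSingularValues}. One small inaccuracy: ``keeping the constants unrounded gives the sharper statements~\eqref{eq:s1}--\eqref{eq:s2}'' is not literally true --- your decomposition yields analogous but numerically different intermediate bounds (e.g.\ leading term $N/(2M+1)$ rather than $(N-M^{2}/2)/(2M+1)$ in the lower bound), though both simplify to the stated theorem.
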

\begin{proof} 
If $M = 0$, then $\mathbf{P}_M(\underline{x}^{equi})$ is the $(N+1)\times 1$ 
vector of all ones. Thus, $\sigma_{1}\!\left(\mathbf{P}_M(\underline{x}^{equi})\right)^2 = N$ and 
the bounds above hold. For the remainder of this proof we assume 
that $M\geq 1$ and hence, $N\geq 4$. 
 
From the orthogonality of Legendre polynomials in~\eqref{eq:LegendreOrthogonality} we define
\[
D_{mn} = \frac{N}{2}\int_{-1}^1 P_m(x)P_n(x)dx = \begin{cases}\frac{N}{2n+1}, & m=n,\\0, & m\neq n, \end{cases} \qquad 0\leq m,n\leq M.
\]
The $(N+1)$-point trapezium rule (see~\eqref{eq:trapeziumRule}) provides another expression for $D$, 
\begin{equation}
D = \mathbf{P}_M(\underline{x}^{equi})^* \mathbf{P}_M(\underline{x}^{equi}) - C - \frac{N}{2}E, \qquad C_{mn} = \begin{cases}1,&m+n \text{ is even},\cr 0,&m+n,\text{ is odd},\end{cases}
\label{eq:Cmatrix}
\end{equation} 
where $C$ is the matrix that halves the contributions at the endpoints and 
$E$ is the matrix of trapezium rule errors. By the Euler--Maclaurin error 
formula~\cite[Cor.~3.3]{Javed_14_01} we have, for $0\leq m,n\leq M$, 
\[
E_{mn} = 2\sum_{s=1,s \text{ odd}}^{m+n-1} \frac{((P_m(1)P_n(1))^{(s)} - (P_m(-1)P_n(-1))^{(s)})2^sB_{s+1}}{N^{s+1}(s+1)!},
\]
where $B_{s}$ is the $s$th Bernoulli number and $(P_m(1)P_n(1))^{(s)}$ is the $s$th 
derivative of $P_m(x)P_n(x)$ evaluated at $1$. 
By Markov's brother inequality $|(P_m(1)P_n(1))^{(s)}|\leq 2^s s!(m+n)^{2s}/(2s)!$~\cite[p.~254]{Borwein_95_01}
and since $|B_{s+1}|\leq 4(s+1)!(2\pi)^{-s-1}$~\cite[(24.9.8)]{Olver_10_01} we have 
\[
|E_{mn}| \leq \frac{8}{\pi N} \sum_{s=1,s \text{ odd}}^{m+n-1} \left(\frac{8}{\pi}\right)^s\frac{s!}{(2s)!}\frac{((m+n)/2)^{2s}}{N^{s}} \leq \frac{3(m+n)^2}{\pi N^2},
\]
where in the last inequality we used $((m+n)/2)^2/N\leq M^2/N\leq 1$ and the fact that
$\sum_{s=1,s \text{ odd}}^{m+n-1} (8/\pi)^ss!/(2s)!\leq 3/2$.
Using $\|E\|_2 \leq \|E\|_F$, $(\sum_{m,n=0}^M(m+n)^4)^{1/2}\leq 9M^3/2$, 
and $M\leq \constant \sqrt{N}$, where $\|\cdot\|_F$ denotes the matrix Frobenius 
norm, we obtain
\begin{equation}
 \|E\|_2 \leq \frac{27M^3}{2\pi N^2} \leq \frac{27}{16\pi \sqrt{N}}. 
\label{eq:Ebound}
\end{equation}
By Weyl's inequality on the eigenvalues of perturbed Hermitian 
matrices~\cite{Weyl_1912_01}, we conclude that 
\[
\left|\lambda_{k}\left(\mathbf{P}_M(\underline{x}^{equi})^* \mathbf{P}_M(\underline{x}^{equi})\right) -\lambda_k(D + C)\right|\leq \tfrac{N}{2}\|E\|_2, \qquad 1\leq k\leq M+1,
\]
where $\lambda_k(A)$ denotes the $k$th eigenvalue of the Hermitian matrix $A$. 
By Lemma~\ref{lem:GcirclesDplusC} we have $\lambda_1(D+C)\leq (2N+M+3)/2$ and 
$\lambda_{M+1}(D+C)\geq (N-M^2/2)/(2M+1)$. Since 
$\sigma_{k}(A)^2 = \lambda_{k}(A^*A)$ for any real matrix $A$, we obtain
\begin{equation}
 \sigma_{1}\!\left(\mathbf{P}_M(\underline{x}^{equi})\right)^2 \leq \frac{2N+M+3}{2} + \frac{27\sqrt{N}}{32\pi}
\label{eq:s1}
\end{equation}
and
\begin{equation}
 \sigma_{M+1}\!\left(\mathbf{P}_M(\underline{x}^{equi})\right)^2 \geq \frac{N-M^2/2}{2M+1} - \frac{27\sqrt{N}}{32\pi}.
\label{eq:s2}
\end{equation}
The statement follows since for $M\leq \constant\sqrt{N}$ and $N\geq 4$ we have $ (2N+M+3)/2 + (27\sqrt{N})/(32\pi)\leq 2N$
and $(N-M^2/2)/(2M+1) - (27\sqrt{N})/(32\pi)\geq 2N/(5(2M+1))$.
\end{proof}

Theorem~\ref{thm:LegendreVandermondeSingularValues} shows that if $M\leq \constant\sqrt{N}$, then
\[
\kappa_2(\mathbf{P}_M(\underline{x}^{equi})^*\mathbf{P}_M(\underline{x}^{equi})) \leq 5(2M+1).
\]
This means that when $M\leq \constant\sqrt{N}$ 
we can solve for the Legendre coefficients of $p_M(x)$ in~\eqref{eq:leastSquares} via 
the normal equations, 
\begin{equation}
 \mathbf{P}_M(\underline{x}^{equi})^*\mathbf{P}_M(\underline{x}^{equi})\underline{c}^{leg} = \mathbf{P}_M(\underline{x}^{equi})^*\left(\underline{f}+\underline{\eps}\right),
\label{eq:LegendreNormalEquations}
\end{equation}
without severe ill-conditioning. Here, $\underline{c}^{leg}$ is the vector of coefficients so that
\[
 p_M(x) = \sum_{k=0}^{M} c_k^{leg} P_k(x). 
\]
Hence, the least squares problem 
in~\eqref{eq:leastSquares} is a practical way to construct a polynomial 
approximant of a function from equally spaced samples. 

The bounds in Theorem~\ref{thm:LegendreVandermondeSingularValues} are essentially tight. In Figure~\ref{fig:LegendreVandermondeEigenvalues} we compare the bounds 
in~\eqref{eq:s1} and~\eqref{eq:s2} to 
computed values of the square of maximum and minimum singular values of $\mathbf{P}_M(\underline{x}^{equi})$ when $M=\lfloor \constant\sqrt{N}\rfloor$.
The jagged nature of the bound in Figure~\ref{fig:LegendreVandermondeEigenvalues} is due to the floor function in the formula
for $M$ to ensure it is an integer. This causes jumps in the bounds at each square number. 

\begin{figure}
\centering 
\begin{overpic}[width=.6\textwidth]{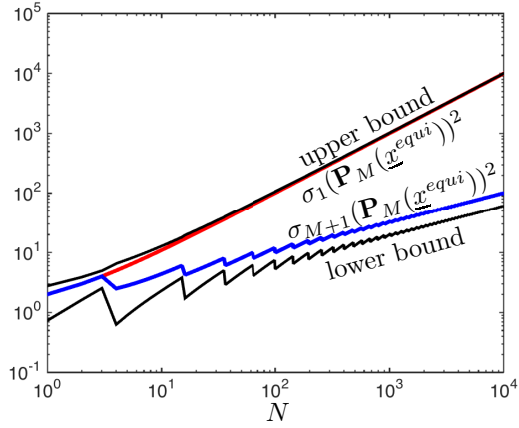}
 \put(50,0) {$N$}
 \put(53,31) {\rotatebox{15}{$\sigma_{M+1}(\mathbf{P}_M(\underline{x}^{equi}))^2$}}
 \put(55,37) {\rotatebox{25}{$\sigma_{1}(\mathbf{P}_M(\underline{x}^{equi}))^2$}}
 \put(55,43) {\rotatebox{27}{upper bound}}
 \put(60,24) {\rotatebox{15}{lower bound}}
\end{overpic}
 \caption{Illustration of the bounds on the squares of the maximum and minimum 
 singular values of $\mathbf{P}_M(\underline{x}^{equi})$ as found in~\eqref{eq:s1} and~\eqref{eq:s2} when $M=\lfloor \constant\sqrt{N} \rfloor$. 
 The statement of Theorem~\ref{thm:LegendreVandermondeSingularValues} provides simplified and slightly weaker bounds.}
\label{fig:LegendreVandermondeEigenvalues}
\end{figure}


One can also use Theorem~\ref{thm:LegendreVandermondeSingularValues} to safely compute the Chebyshev 
coefficients of the polynomial $p_M(x)$ in~\eqref{eq:leastSquares} too. 
Let $S$ be the $(M+1)\times (M+1)$ change of basis matrix that takes Legendre 
coefficients to Chebyshev coefficients. The entries of $S$ have an explicit formula given by~\cite[(2.18)]{Alpert_91_01}
\begin{equation}
 S_{ij} = \begin{cases}\frac{1}{\pi}\Psi\left(\frac{j}{2}\right)^2, & 0=i\leq j\leq M, \text{ }j \text{ even},\cr \frac{2}{\pi}\Psi\left(\frac{j-i}{2}\right)\Psi\left(\frac{j+i}{2}\right),& 0<i\leq j\leq M, \text{ } i+j \text{ even}, \cr 0, & \text{otherwise},\end{cases}
\label{eq:S}
\end{equation} 
where $\Psi(i) = \Gamma(i + 1/2)/\Gamma(i+1)$ and $\Gamma(x)$ is the Gamma function. Theorem~\ref{thm:LegendreVandermondeSingularValues} shows that 
when $M\leq \constant\sqrt{N}$ the Legendre coefficients $\underline{c}^{leg}$ of $p_M(x)$ in~\eqref{eq:leastSquares} can 
be computed accurately via the normal equations in~\eqref{eq:LegendreNormalEquations}.
Afterwards, the Legendre coefficients, $\underline{c}^{leg}$, for $p_M(x)$ can be converted into Chebyshev coefficients, $\underline{c}^{cheb}$,
for $p_M(x)$ by a matrix-vector product, i.e., 
$\underline{c}^{cheb} = S\underline{c}^{leg}$. For fast algorithms to compute the matrix-vector product
$S\underline{c}^{leg}$, see~\cite{Alpert_91_01,Hale_14_01}. 

We rarely compute the Chebyshev coefficients of $p_M(x)$ via the Legendre coefficients from~\eqref{eq:LegendreNormalEquations}. 
Instead, we directly compute the Chebyshev coefficients via the normal 
equations in~\eqref{eq:normalEquations} because we have a fast direct algorithm (see Section~\ref{sec:LeastSquaresAlgorithm}). 
Here, is the bound we obtain on  
$\kappa_2(\mathbf{T}_M(\underline{x}^{equi})^*\mathbf{T}_M(\underline{x}^{equi}))$. 

\begin{theorem} 
For any integers $M$ and $N$ satisfying $M\leq \constant\sqrt{N}$ we have
\[
 \sigma_1\!\left(\mathbf{T}_M(\underline{x}^{equi})\right)^2 \leq 3N,\qquad  \sigma_{M+1}\!\left(\mathbf{T}_M(\underline{x}^{equi})\right)^2 \geq \frac{1}{25} \sigma_{M+1}\!\left(\mathbf{P}_M(\underline{x}^{equi})\right)^2.
\]
\label{thm:ChebyshevSingularValues}
\end{theorem}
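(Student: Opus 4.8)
The plan is to prove the two inequalities separately: the first mirrors the trapezium‑rule argument of Theorem~\ref{thm:LegendreVandermondeSingularValues}, while the second reduces everything to a clean bound on the Legendre‑to‑Chebyshev change of basis matrix $S$.

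For $\sigma_1(\mathbf{T}_M(\underline{x}^{equi}))^2\le 3N$ I would repeat the proof of Theorem~\ref{thm:LegendreVandermondeSingularValues} with Chebyshev polynomials in place of Legendre polynomials. Applying the $(N{+}1)$-point trapezium rule to $\int_{-1}^1 T_m(x)T_n(x)\,dx$ gives the splitting $\mathbf{T}_M(\underline{x}^{equi})^*\mathbf{T}_M(\underline{x}^{equi}) = D_T + C + \tfrac{N}{2}E$, where $(D_T)_{mn}=\tfrac{N}{2}\int_{-1}^1 T_mT_n\,dx$, the matrix $C_{mn}$ equals $1$ when $m+n$ is even and $0$ otherwise, and $E$ is the Euler--Maclaurin remainder. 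The remainder is controlled exactly as before: Markov's brother inequality applied to $T_mT_n$ together with $|B_{s+1}|\le 4(s{+}1)!(2\pi)^{-s-1}$ gives $\|E\|_2=\mathcal{O}(M^3/N^2)$, so $\tfrac{N}{2}\|E\|_2=\mathcal{O}(\sqrt N)$ when $M\le\constant\sqrt N$. The matrix $C$ factors as $C=\underline v\,\underline v^{\,*}+\underline w\,\underline w^{\,*}$, where $\underline v$ and $\underline w$ are the indicator vectors of the even and odd indices in $\{0,\dots,M\}$, so $C\succeq 0$ and $\|C\|_2=\lceil(M{+}1)/2\rceil=\mathcal{O}(\sqrt N)$. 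The only new ingredient is the estimate $\|D_T\|_2\le\tfrac{\pi N}{2}$, which follows from $\underline c^{\,*}\bigl[\int_{-1}^1 T_mT_n\,dx\bigr]\underline c = \int_0^\pi|\textstyle\sum_n c_n\cos n\theta|^2\sin\theta\,d\theta \le \int_0^\pi|\sum_n c_n\cos n\theta|^2\,d\theta \le \pi\|\underline c\|_2^2$ after the substitution $x=\cos\theta$ and Parseval for the cosine system. Combining the three pieces gives $\sigma_1^2\le\tfrac{\pi N}{2}+\mathcal{O}(\sqrt N)\le 3N$ for $N$ above a fixed threshold, the remaining finitely many cases being checked directly.

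For the lower bound I would use the change‑of‑basis identity $\mathbf{T}_M(\underline{x}^{equi})\,S=\mathbf{P}_M(\underline{x}^{equi})$, which holds because a polynomial of degree $\le M$ with Legendre coefficient vector $\underline c^{leg}$ and Chebyshev coefficient vector $\underline c^{cheb}=S\underline c^{leg}$ has identical values on the grid, i.e.\ $\mathbf{T}_M(\underline{x}^{equi})\underline c^{cheb}=\mathbf{P}_M(\underline{x}^{equi})\underline c^{leg}$. The elementary inequality $\sigma_{M+1}(AB)\le\sigma_{M+1}(A)\|B\|_2$, valid for $A\in\R^{(N+1)\times(M+1)}$ and invertible $B\in\R^{(M+1)\times(M+1)}$ (apply $B^{-1}$ to the smallest right singular vector of $A$), then yields $\sigma_{M+1}(\mathbf{P}_M(\underline{x}^{equi}))\le\sigma_{M+1}(\mathbf{T}_M(\underline{x}^{equi}))\|S\|_2$, so the theorem reduces to showing $\|S\|_2\le 5$.

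The bound $\|S\|_2\le 5$ is the heart of the matter, and the step I expect to be the main obstacle. It is of no help to route it through the identity $\|\underline c^{cheb}\|_2^2\le\tfrac{2}{\pi}\int_{-1}^1|p(x)|^2(1-x^2)^{-1/2}\,dx$, because the weighted Legendre Gram matrix $\bigl[\int_{-1}^1 P_mP_n(1-x^2)^{-1/2}\,dx\bigr]$ is exactly $S^*\operatorname{diag}(\pi,\tfrac{\pi}{2},\tfrac{\pi}{2},\dots)\,S$, so estimating it is the same problem up to a factor of two; one must work with the entries of $S$ directly. Since $S_{ij}=0$ unless $i\equiv j\pmod 2$, $S$ splits into an even‑index block and an odd‑index block, and on each block the entries are $\tfrac{2}{\pi}\Psi\bigl(\tfrac{j-i}{2}\bigr)\Psi\bigl(\tfrac{j+i}{2}\bigr)=2b_{(j-i)/2}b_{(j+i)/2}$ with $b_k=\binom{2k}{k}4^{-k}=\Psi(k)/\sqrt\pi$. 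I would then run a weighted Schur test: exhibit a positive weight sequence $\underline u$, which I expect to behave like $u_i\asymp(i{+}1)^{-1/2}$, and a constant $\gamma\le 5$ with $\sum_j S_{ij}u_j\le\gamma u_i$ and $\sum_i S_{ij}u_i\le\gamma u_j$ for all $i,j$, which gives $\|S\|_2\le\gamma$. The tools for the sum estimates are the sharp central‑binomial bound $b_k^2\le 1/(k{+}1)$ (equivalently $\Psi(k)^2\le\pi/(k{+}1)$), the log‑convexity and monotone decay of $k\mapsto b_k$, the convolution identity $\sum_k b_k b_{m-k}=1$ (equivalent to the already‑noted fact that each column of $S$ sums to $1$, itself a consequence of $P_j(1)=T_j(1)=1$), and the summability $\sum_k b_k^3<\infty$. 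With these, the one‑sided column sums telescope against the convolution identity and the resulting double sums collapse to absolutely convergent series in the $b_k$. The genuinely hard part is purely quantitative: arranging the weights so that $\gamma$ lands below $5$ rather than merely finite, since the first crude estimates one writes down leave it a little too large.
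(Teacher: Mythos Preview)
Your proposal is correct and tracks the paper's proof closely. For $\sigma_1^2\le 3N$, both arguments use the trapezium-rule splitting $\mathbf{T}_M^*\mathbf{T}_M=F+C+\tfrac{N}{2}\tilde E$ with the same Euler--Maclaurin bound on $\tilde E$; the paper then applies Gerschgorin to $F+C$ (Lemma~\ref{lem:GcirclesDplusC2}) to get $\lambda_1(F+C)\le(4N+M+1)/2$, whereas your Parseval argument gives the slightly sharper $\|F\|_2\le\pi N/2$ and treats $C$ separately --- a harmless and arguably cleaner variation. For the lower bound both you and the paper reduce to $\|S\|_2\le 5$ via $\mathbf{T}_M S=\mathbf{P}_M$. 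The paper proves this (Lemma~\ref{lem:BS}) by passing to the numerical radius, $\|S\|_2\le 2\rho(S^+)$ with $S^+=(S+S^*)/2$, and then applying Gerschgorin to $PS^+P^{-1}$ with $P=\operatorname{diag}(1,\sqrt{1},\sqrt{2},\ldots,\sqrt{M})$, which is exactly your weight guess $u_i\asymp i^{-1/2}$; the row-sum estimates use Wendel's inequality $\Psi(k)^2\le 1/k$ and integral comparison to land at $\rho(S^+)\le 2.45$. Since $S$ is upper triangular, the $i$th Gerschgorin row sum of $PS^+P^{-1}$ equals $\tfrac12(R_i+C_i)$, where $R_i$ and $C_i$ are precisely your Schur row and column quantities at index $i$; hence the paper's estimates already give $R_i,C_i\le 4.9$ individually, and your Schur test closes with the same work and without the factor-of-two detour.
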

\begin{proof}
By the trapezium rule (see~\eqref{eq:integralTmTn}) we have 
\[
 \mathbf{T}_M(\underline{x}^{equi})^*\mathbf{T}_M(\underline{x}^{equi}) = F + C + \tfrac{N}{2}\tilde{E},
\]
where $\tilde{E}$ is the matrix of trapezium errors, $C$ is given in~\eqref{eq:Cmatrix}, and $F$ is given by
\begin{equation}
  F_{mn} = \frac{N}{2}\int_{-1}^1 T_m(x)T_n(x)dx = \frac{N}{2}\left[\frac{1}{1-(m+n)^2}+\frac{1}{1-(m-n)^2}\right].
\label{eq:F}
\end{equation} 
By the same argument as in Theorem~\ref{thm:LegendreVandermondeSingularValues} we have $\|\tilde{E}\|_2\leq 27/(16\pi/\sqrt{N})$, see~\eqref{eq:Ebound}.
Also by Lemma~\ref{lem:GcirclesDplusC2} we have $\lambda_1(F+C)\leq (4N+M+1)/2$ and hence, using Weyl's 
inequality on the eigenvalues of perturbed Hermitian matrices~\cite{Weyl_1912_01} we obtain
\[
 \sigma_1\!\left(\mathbf{T}_M(\underline{x}^{equi})\right)^2\leq \frac{4N+M+1}{2} + \frac{27\sqrt{N}}{32\pi}\leq 3N,
\]
where the last inequality holds since $M\leq \constant \sqrt{N}$. 

Next, by the definition of the matrix $S$ in~\eqref{eq:S} we have 
$\mathbf{T}_M(\underline{x}^{equi})S = \mathbf{P}_M(\underline{x}^{equi})$. Hence,
\[
\sigma_{M+1}\left(\mathbf{T}_M(\underline{x}^{equi})\right)^2 \geq \|S\|_2^{-2}\sigma_{M+1}\left(\mathbf{P}_M(\underline{x}^{equi})\right)^2.
\]
The lower bound on $\sigma_{M+1}\left(\mathbf{T}_M(\underline{x}^{equi})\right)$ 
follows from Lemma~\ref{lem:BS}, which proves $\|S\|_2\leq 5$. 
\end{proof}

Theorem~\ref{thm:ChebyshevSingularValues} bounds the condition number of $\mathbf{T}_M(\underline{x}^{equi})^*\mathbf{T}_M(\underline{x}^{equi})$.
If $M\leq \constant\sqrt{N}$, then
\begin{equation}
\kappa_2(\mathbf{T}_M(\underline{x}^{equi})^*\mathbf{T}_M(\underline{x}^{equi})) \leq \frac{375}{2}(2M+1).
\label{eq:ChebyshevConditionNumber}
\end{equation} 
Therefore, the Chebyshev coefficients, $\underline{c}^{cheb}$, of $p_M(x)$ in~\eqref{eq:leastSquares} can be computed accurately 
via the normal equations in~\eqref{eq:normalEquations}. 

The lower bound on the minimum singular value of $\mathbf{T}_M(\underline{x}^{equi})$ 
shows that the solution vector $\underline{c}^{cheb}$ is not sensitive to small perturbations 
in the function samples and is the key result for sections~\ref{sec:StableLeastSquares} and~\ref{sec:Extrapolation}. 

The $M\leq \constant\sqrt{N}$ assumption in Theorem~\ref{thm:ChebyshevSingularValues} can in practice be slightly violated
without consequence, for example, $M\leq 2\sqrt{N}$ gives the same qualitative behavior. We can even improve 
the restriction in Theorem~\ref{thm:ChebyshevSingularValues} 
to $M\leq 0.95\sqrt{N}$ and use the same argument to show that $\kappa_2(\mathbf{T}_M(\underline{x}^{equi})^*\mathbf{T}_M(\underline{x}^{equi}))$ 
grows linearly with $M$. Unfortunately, the derived bounds are so awkward to write down that they are
not worthwhile in a paper of this nature. We certainly do not pretend that the constants
in Theorem~\ref{thm:LegendreVandermondeSingularValues} and Theorem~\ref{thm:ChebyshevSingularValues} are tight, 
though we are pleased that the bounds are explicit. 

During the final stages of writing this paper we were made aware 
of~\cite[Thm.~5.1]{Adcock_15_02}, which as a special case also gives a similar, but non-explicit, bound 
as in Theorem~\ref{thm:ChebyshevSingularValues}. Since we are using very specific techniques,
the bounds in Theorem~\ref{thm:ChebyshevSingularValues} have explicit constants. 

\section{Approximation power of least squares polynomial fitting}\label{sec:ApproximationPower}
In this section, we derive results to understand how well $p_M$ approximates $f$ on $[-1,1]$ under the 
assumption that $\eps = 0$, i.e., the function samples are not perturbed. 

The following theorem allows for any $M\leq N$, though afterwards we restrict $M\leq\constant\sqrt{N}$
so that $\sigma_{M+1}(\mathbf{T}_M(\underline{x}^{equi}))$ can be bounded from below using Theorem~\ref{thm:ChebyshevSingularValues}. 
\begin{theorem}
Let $M$ and $N$ be integers satisfying $M\leq N$ and $\eps = 0$. Let $f$ be an analytic function with Bernstein parameter $\rho>1$ 
and $\underline{c}^{cheb}$ be the vector of Chebyshev coefficients of the degree $M$ 
polynomial $p_M$ in~\eqref{eq:leastSquares}. Then, we have
\[
|c_k^{cheb}| \leq 2Q\left[\rho^{-k} + \frac{(N+1)^{1/2}}{\sigma_{M+1}(\mathbf{T}_M(\underline{x}^{equi}))}\frac{\rho^{-M}}{\rho-1}\right],\qquad 0\leq k\leq M, 
\]
and 
\[
 \|f - p_M\|_\infty \leq 2Q\left[1 + \frac{(M+1)(N+1)^{1/2}}{\sigma_{M+1}(\mathbf{T}_M(\underline{x}^{equi}))}\right]\frac{\rho^{-M}}{\rho-1},
\]
where $\|f - p_M\|_\infty=\sup_{x\in[-1,1]} |f(x)-p_M(x)|$. 
\label{thm:AnalyticLeastSquares}
\end{theorem}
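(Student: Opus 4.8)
\emph{Proof plan.} The plan is to compare $p_M$ with $f_M$, the degree-$M$ truncation of the Chebyshev series of $f$, and to exploit that the least squares map is linear: it reproduces the ``head'' of the sampled data exactly, so only the geometrically small Chebyshev tail of $f$ contributes to the error, amplified by at most the reciprocal of the smallest singular value of $\mathbf{T}_M(\underline{x}^{equi})$.

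Concretely, I would write $f=\sum_{n\ge 0}a_n^{cheb}T_n$, set $f_M=\sum_{n=0}^M a_n^{cheb}T_n$ and $\underline{a}=(a_0^{cheb},\dots,a_M^{cheb})^T$, and split the sample vector as $\underline{f}=\mathbf{T}_M(\underline{x}^{equi})\underline{a}+\underline{r}$ with $r_j=f(x_j^{equi})-f_M(x_j^{equi})$. By Proposition~\ref{prop:AnalyticConvergence}, applied with truncation degree $M$, we have $\|f-f_M\|_\infty\le 2Q\rho^{-M}/(\rho-1)$, so $\|\underline{r}\|_2\le (N+1)^{1/2}\,2Q\rho^{-M}/(\rho-1)$. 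Since the $N+1\ge M+1$ equally spaced points are distinct, $\mathbf{T}_M(\underline{x}^{equi})$ has full column rank (a nonzero polynomial of degree $\le M$ has at most $M$ zeros), so the least squares solution is unique and given by $\underline{c}^{cheb}=\mathbf{T}_M(\underline{x}^{equi})^{+}\underline{f}$, where $\mathbf{T}_M(\underline{x}^{equi})^{+}=(\mathbf{T}_M(\underline{x}^{equi})^*\mathbf{T}_M(\underline{x}^{equi}))^{-1}\mathbf{T}_M(\underline{x}^{equi})^*$ satisfies $\mathbf{T}_M(\underline{x}^{equi})^{+}\mathbf{T}_M(\underline{x}^{equi})=I$ and $\|\mathbf{T}_M(\underline{x}^{equi})^{+}\|_2=1/\sigma_{M+1}(\mathbf{T}_M(\underline{x}^{equi}))$. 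Substituting the splitting gives $\underline{c}^{cheb}=\underline{a}+\mathbf{T}_M(\underline{x}^{equi})^{+}\underline{r}$, hence
\[
\|\underline{c}^{cheb}-\underline{a}\|_2 \le \frac{\|\underline{r}\|_2}{\sigma_{M+1}(\mathbf{T}_M(\underline{x}^{equi}))} \le \frac{(N+1)^{1/2}}{\sigma_{M+1}(\mathbf{T}_M(\underline{x}^{equi}))}\,\frac{2Q\rho^{-M}}{\rho-1}.
\]
The coefficient bound is then immediate from $|c_k^{cheb}|\le |a_k^{cheb}|+\|\underline{c}^{cheb}-\underline{a}\|_2$ and $|a_k^{cheb}|\le 2Q\rho^{-k}$ (Proposition~\ref{prop:AnalyticConvergence}, which also covers $k=0$ since $|a_0^{cheb}|\le Q\le 2Q$).

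For the uniform estimate I would split $f-p_M=(f-f_M)+(f_M-p_M)$. The first term is controlled by $2Q\rho^{-M}/(\rho-1)$ as above. The second is a polynomial with Chebyshev coefficients $a_k^{cheb}-c_k^{cheb}$, so using $|T_k|\le 1$ on $[-1,1]$ and $\|\underline{c}^{cheb}-\underline{a}\|_\infty\le\|\underline{c}^{cheb}-\underline{a}\|_2$,
\[
\|f_M-p_M\|_\infty \le \sum_{k=0}^M|a_k^{cheb}-c_k^{cheb}| \le (M+1)\,\|\underline{c}^{cheb}-\underline{a}\|_2 \le \frac{(M+1)(N+1)^{1/2}}{\sigma_{M+1}(\mathbf{T}_M(\underline{x}^{equi}))}\,\frac{2Q\rho^{-M}}{\rho-1}.
\]
Adding the two contributions and factoring out $2Q\rho^{-M}/(\rho-1)$ yields the stated inequality.

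I do not expect a genuine obstacle: once one observes that linearity makes the head $\mathbf{T}_M(\underline{x}^{equi})\underline{a}$ reproduced exactly, only the tail $\underline{r}$ — which is geometrically small by analyticity — survives, amplified by at most $1/\sigma_{M+1}(\mathbf{T}_M(\underline{x}^{equi}))$, and the rest is triangle and Cauchy--Schwarz inequalities. The only step deserving care is the full-column-rank claim that legitimizes $\mathbf{T}_M(\underline{x}^{equi})^{+}\mathbf{T}_M(\underline{x}^{equi})=I$ and the identification of $\|\mathbf{T}_M(\underline{x}^{equi})^{+}\|_2$ with $1/\sigma_{M+1}(\mathbf{T}_M(\underline{x}^{equi}))$. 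The bound is of course only useful once $\sigma_{M+1}(\mathbf{T}_M(\underline{x}^{equi}))$ is bounded from below, which is precisely what Theorem~\ref{thm:ChebyshevSingularValues} provides in the regime $M\le\constant\sqrt{N}$.
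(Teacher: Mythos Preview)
Your proof is correct and follows essentially the same approach as the paper: split $f=f_M+(f-f_M)$, use that the least squares map reproduces $\mathbf{T}_M(\underline{x}^{equi})\underline{a}$ exactly, and bound $\underline{c}^{cheb}-\underline{a}$ via $\|\mathbf{T}_M(\underline{x}^{equi})^{+}\|_2=1/\sigma_{M+1}$ together with the geometric tail estimate from Proposition~\ref{prop:AnalyticConvergence}. The only cosmetic difference is that the paper picks up the $(N+1)^{1/2}$ factor from the inequality $\|A\|_\infty\le(N+1)^{1/2}\|A\|_2$ applied to the pseudoinverse, whereas you pick it up from $\|\underline{r}\|_2\le(N+1)^{1/2}\|\underline{r}\|_\infty$; both routes yield the identical bound.
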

\begin{proof} 
Let $f_M$ be the polynomial of degree $M$ constructed by truncating the 
Chebyshev expansion for $f$ after $M+1$ terms, see Proposition~\ref{prop:AnalyticConvergence}. Then, 
\[
 f(\underline{x}^{equi}) = f_M(\underline{x}^{equi}) + f(\underline{x}^{equi})- f_M(\underline{x}^{equi}) = \mathbf{T}_M(\underline{x}^{equi})\underline{a}_M + (f- f_M)(\underline{x}^{equi}),
\]
where $\underline{a}_M$ is the vector of the first $M+1$ Chebyshev coefficients for $f$. The vector $\underline{c}^{cheb}$ 
satisfies the normal equations, $\mathbf{T}_M(\underline{x}^{equi})^*\mathbf{T}_M(\underline{x}^{equi})\underline{c}^{cheb}=\mathbf{T}_M(\underline{x}^{equi})^*f(\underline{x}^{equi})$, 
and since $f_M(\underline{x}^{equi})=\mathbf{T}_M(\underline{x}^{equi})\underline{a}_M$ we have
\[
\mathbf{T}_M(\underline{x}^{equi})^*\mathbf{T}_M(\underline{x}^{equi}) (\underline{c}^{cheb}-\underline{a}_M) = \mathbf{T}_M(\underline{x}^{equi})^*(f - f_M)(\underline{x}^{equi}).
\]
Noting that $\|(A^*A)^{-1}A^*\|_2 = 1/\sigma_{\min}(A)$ for any matrix $A$, we have the following bound:
\begin{equation}
\begin{aligned} 
\|\underline{c}^{cheb}-\underline{a}_M\|_\infty 
&\leq \|(\mathbf{T}_M(\underline{x}^{equi})^*\mathbf{T}_M(\underline{x}^{equi}) )^{-1}\mathbf{T}_M(\underline{x}^{equi})^*\|_\infty\|(f - f_M)(\underline{x}^{equi})\|_\infty\\
&\leq 2Q \frac{(N+1)^{1/2}}{\sigma_{M+1}(\mathbf{T}_M(\underline{x}^{equi}))}\frac{\rho^{-M}}{\rho-1},
\end{aligned} 
\label{eq:UsefulInequality}
\end{equation}
where in the last inequality we used 
$\|(f-f_M)(\underline{x}^{equi})\|_\infty\leq 2Q\rho^{-M}/(\rho-1)$ (see Proposition~\ref{prop:AnalyticConvergence}) and 
$\|A\|_\infty\leq \sqrt{N+1}\|A\|_2$ for matrices of size $(M+1)\times (N+1)$. The
bound on $|c_k^{cheb}|$ follows since $|c_k^{cheb}|\leq |a_k| + \|\underline{c}^{cheb}-\underline{a}_M\|_\infty$ and 
$|a_k|\leq 2Q\rho^{-k}$ for $k\geq 0$ (see Proposition~\ref{prop:AnalyticConvergence}).

For a bound on $\|f-p_M\|_\infty$, note that $T_k(x)\leq 1$ for $k\geq 0$ and $x\in[-1,1]$. Hence, 
\[
 \|f-p_M\|_\infty \leq (M+1)\|\underline{c}^{cheb}-\underline{a}_M\|_\infty + \sum_{k=M+1}^\infty\!\! |a_k| \leq 2Q\!\left[1 + \frac{(M+1)(N+1)^{1/2}}{\sigma_{M+1}(\mathbf{T}_M(\underline{x}^{equi}))}\right]\!\frac{\rho^{-M}}{\rho-1},
\]
where we again used $|a_k|\leq 2Q\rho^{-k}$ for $k\geq 0$. 
\end{proof}

When $M \leq \constant\sqrt{N}$ we can use Theorem~\ref{thm:AnalyticLeastSquares} together with the lower 
bound on $\sigma_{M+1}(\mathbf{T}_M(\underline{x}^{equi}))$ from Theorem~\ref{thm:ChebyshevSingularValues} to conclude that 
\begin{equation}
 \|f - p_M\|_\infty \leq 2Q\left[1 + 10\sqrt{5}(M+1)^{3/2}\right]\frac{\rho^{-M}}{\rho-1}.
\label{eq:ApproximationBound}
\end{equation} 
Thus, with respect to $M$, the least squares polynomial fit $p_M$ converges geometrically to $f$ with order $\rho$. 
Along with the bound on the condition number 
of the normal equations in~\eqref{eq:ChebyshevConditionNumber}, it confirms that least squares polynomial approximation 
is a practical tool for approximating analytic functions given equally spaced samples. 

It is common to refer to~\eqref{eq:ApproximationBound} as a subexponential convergence rate 
because one needs to take $\mathcal{O}(N)$ equally spaced samples to realize an approximation error of 
$\mathcal{O}(\rho^{-\sqrt{N}})$. We now use the noisy bounds to consider the case when the 
function samples are perturbed, i.e., $\eps>0$. 
%

\section{Least squares polynomial fitting is robust to noisy samples}\label{sec:StableLeastSquares}
Polynomial interpolation at equally spaced points is sensitive to noisy function samples
samples, and this is a considerable drawback. 
In contrast, when there is sufficient oversampling, i.e., $M\leq \constant\sqrt{N}$, least squares 
polynomial fits are robust to perturbed function samples.
In this section we consider two cases:  
The vector of function samples $f(\underline{x}^{equi})$ is 
perturbed by either a vector of independent Gaussian random variables with mean $0$ and known variance $s^2$ or 
a vector of deterministic errors given by $\underline{\eps}$ with known maximum 
amplitude $\|\underline{\eps}\|_\infty$. 

\subsection{Least squares polynomial fitting with Gaussian noise}\label{sec:GaussianError}
First suppose that the samples are given by 
$f(\underline{x}^{equi}) + \underline{\eps}$, where 
$\underline{\eps} = (\eps_0,\ldots,\eps_N)^T$ is a vector with 
entries that are independent
Gaussian random variables with mean $0$ and variation $s^2$, i.e.,
$\eps_k\sim\mathcal{N}(0,s^2)$ for $0\leq k\leq N$. 
We refer to the standard deviation of the noise, $s$, as the {\em noise level}. 

Thus, we seek the solution to 
the perturbed normal equations,
\begin{equation}
 \mathbf{T}_M(\underline{x}^{equi})^*\mathbf{T}_M(\underline{x}^{equi})\underline{c}^{cheb} = 
\mathbf{T}_M(\underline{x}^{equi})^*\left(\underline{f}+\underline{\eps}\right).
\label{eq:normalequationsUnderNoise}
\end{equation} 
The vector of Chebyshev coefficients $\underline{c}^{cheb}$ for the least 
squares fit $p_M(x)$ are now a vector of random variables.  It is easy to see that the 
expectation of the vector $\underline{c}^{cheb}$ is given by
\[
 \mathbb{E}\left[ \underline{c}^{cheb} \right] = (\mathbf{T}_M(\underline{x}^{equi})^*\mathbf{T}_M(\underline{x}^{equi}))^{-1}\mathbf{T}_M(\underline{x}^{equi})^*\underline{f},
\]
which verifies that the expectation of the coefficients is the same as in the noiseless case. 
To get a bound on the expectation of the approximation error $\|f-p_M\|_\infty$ we need to 
bound the variance of $\underline{c}^{cheb}$. Here, is one 
such bound that we state for any $M\leq N$, though afterwards we restrict ourselves to 
$M\leq \constant\sqrt{N}$.
\begin{lemma} 
Let $M$ and $N$ be integers satisfying $M\leq N$ 
and let $\underline{\eps}\in\mathbb{R}^{(N+1)\times 1}$ be a vector with entries that are realizations from 
independent and identically distributed Gaussian random variables with mean $0$ and variance $s^2$. Then, 
for the vector $\underline{c}^{cheb}$ satisfying~\eqref{eq:normalequationsUnderNoise} we have
\[
 \mathbb{E}\left[\left\|\underline{c}^{cheb}-\mathbb{E}\left[\underline{c}^{cheb}\right]\right\|_2^2\right] \leq \frac{(M+1)s^2}{\sigma_{M+1}(\mathbf{T}_M(\underline{x}^{equi}))^2},
\] 
where $\mathbb{E}[X]$ denotes the expectation of the random variable $X$.
\label{lem:varianceNoise}
\end{lemma}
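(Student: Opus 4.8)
The plan is to use the closed-form solution of the normal equations to express the fluctuation of $\underline{c}^{cheb}$ around its mean as a fixed linear image of the Gaussian noise vector, and then bound the second moment by the squared operator norm of that linear map. Writing $A = \mathbf{T}_M(\underline{x}^{equi})$, the solution of~\eqref{eq:normalequationsUnderNoise} is $\underline{c}^{cheb} = (A^*A)^{-1}A^*(\underline{f}+\underline{\eps})$, while $\mathbb{E}[\underline{c}^{cheb}] = (A^*A)^{-1}A^*\underline{f}$, so $\underline{c}^{cheb} - \mathbb{E}[\underline{c}^{cheb}] = (A^*A)^{-1}A^*\underline{\eps} =: G\underline{\eps}$, where $G$ is the $(M+1)\times(N+1)$ Moore--Penrose pseudoinverse of $A$. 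The quantity of interest is $\mathbb{E}[\|G\underline{\eps}\|_2^2]$.

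First I would compute this expectation exactly. Since $\underline{\eps}$ has i.i.d.\ $\mathcal{N}(0,s^2)$ entries, $\mathbb{E}[\underline{\eps}\,\underline{\eps}^*] = s^2 I_{N+1}$, and hence $\mathbb{E}[\|G\underline{\eps}\|_2^2] = \mathbb{E}[\tr(G\underline{\eps}\,\underline{\eps}^*G^*)] = s^2\,\tr(GG^*)= s^2\,\|G\|_F^2$. It then remains to bound $\|G\|_F^2$. Writing the singular values of $A$ as $\sigma_1 \geq \cdots \geq \sigma_{M+1} > 0$ (the matrix has full column rank by Theorem~\ref{thm:ChebyshevSingularValues} since $\sigma_{M+1}(A)>0$), the nonzero singular values of $G = A^\dagger$ are exactly $1/\sigma_1,\ldots,1/\sigma_{M+1}$, so
\[
\|G\|_F^2 = \sum_{k=1}^{M+1} \frac{1}{\sigma_k^2} \leq \frac{M+1}{\sigma_{M+1}(\mathbf{T}_M(\underline{x}^{equi}))^2},
\]
which is the claimed bound after multiplying by $s^2$.

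There is essentially no hard step here — the proof is a short trace computation — so the only thing to be careful about is bookkeeping: justifying that $A$ has full column rank so that $(A^*A)^{-1}$ exists (guaranteed by $\sigma_{M+1}(\mathbf{T}_M(\underline{x}^{equi}))>0$, which holds for $M\leq N$ since the equally spaced points are distinct and a degree-$M$ polynomial vanishing at $N+1 > M$ distinct points is zero), and identifying the singular values of the pseudoinverse correctly. Linearity of expectation handles the interchange of $\mathbb{E}$ and $\tr$ for a fixed finite matrix, so no integrability subtleties arise. The bound is stated for general $M\leq N$; the specialization $M\leq\constant\sqrt{N}$ is only invoked afterward, when Theorem~\ref{thm:ChebyshevSingularValues} is used to make $\sigma_{M+1}(\mathbf{T}_M(\underline{x}^{equi}))^2$ explicit.
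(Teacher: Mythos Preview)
Your proof is correct and follows the same overall line as the paper: write the fluctuation as $G\underline{\eps}$ with $G=(A^*A)^{-1}A^*$ and bound $\mathbb{E}[\|G\underline{\eps}\|_2^2]$ via the singular values of $A$. The only technical difference is that you compute the expectation \emph{exactly} as $s^2\|G\|_F^2 = s^2\sum_k \sigma_k^{-2}$ via the trace identity and then bound the sum, whereas the paper inserts the orthogonal projection $P=A(A^*A)^{-1}A^*$, uses the pointwise inequality $\|GP\underline{\eps}\|_2^2 \le \|G\|_2^2\,\|P\underline{\eps}\|_2^2$, and separately computes $\mathbb{E}[\|P\underline{\eps}\|_2^2]=(M+1)s^2$ via a QR factorization. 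Your route is marginally more direct and yields the same bound; the paper's projection step makes the factor $M+1$ appear as the rank of $P$ rather than as the number of singular values, but the two arguments are otherwise equivalent.
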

\begin{proof} 
Let $A = \mathbf{T}_M(\underline{x}^{equi})$ and 
note that 
\[
 \mathbb{E}\left[\left\|\underline{c}^{cheb}-\mathbb{E}\left[\underline{c}^{cheb}\right]\right\|_2^2\right] = \mathbb{E}\!\left[\left\|(A^*A)^{-1}A^*\underline{\eps}\right\|_2^2\right].
\]
Let $P = A(A^*A)^{-1}A^*\in\mathbb{C}^{(N+1)\times (N+1)}$ be the orthogonal projection of $\mathbb{C}^{N+1}$ onto the range of $A$. Since 
$(A^*A)^{-1}A^* = (A^*A)^{-1}A^*P$,
$\|(A^*A)^{-1}A^*\|_2 = \sigma_{M+1}(A)^{-1}$, and\footnote{Let $A = QR$ be the reduced QR factorization of $A$, where $Q=\left[\underline{q}_0\,|\,\cdots\,|\,\underline{q}_{M}\right]$. Then, $\|P\underline{\eps}\|_2^2 = \underline{\eps}^*P^*P\underline{\eps} = \underline{\eps}^*QQ^*\underline{\eps}=\sum_{k=0}^M|\underline{q}_k^*\underline{\eps}|^2$. Since $\mathbb{E}[|\underline{q}_k^*\underline{\eps}|^2]= s^2$ we have $\mathbb{E}[\|P\underline{\eps}\|_2^2]\leq (M+1)s^2$.} $\mathbb{E}[\|P\underline{\eps}\|_2^2]\leq (M+1)s^2$, we have 
\[
\mathbb{E}\!\left[\|(A^*A)^{-1}A^*\underline{\eps}\|_2^2\right]\! =\! \mathbb{E}\!\left[\|(A^*A)^{-1}A^*P\underline{\eps}\|_2^2\right]\!\leq\! \|(A^*A)^{-1}A^*\|_2^2\,\mathbb{E}\!\left[\|P\underline{\eps}\|_2^2\right]\!\leq\! \frac{(M+1)s^2}{\sigma_{M+1}(A)^2}, 
\]
as required.
\end{proof}

Lemma~\ref{lem:varianceNoise} shows that the sum of the variances of the entries of 
$\underline{c}^{cheb}$ is comparable to the sum of the variances of $\underline{\eps}$, 
provided that $\sigma_{M+1}(\mathbf{T}_M(\underline{x}^{equi}))$ is not too small. Thus, 
if $\sigma_{M+1}(\mathbf{T}_M(\underline{x}^{equi}))$ is sufficiently large, then we expect
$p_M(x)$ to be stable under small perturbations of the function samples. We show this by bounding 
the expected maximum uniform error between $f$ and $p_M$.

\begin{corollary}
Suppose the assumptions of Lemma~\ref{lem:varianceNoise} hold, $f$ is 
an analytic function with Bernstein parameter $\rho>1$, and $p_M$ is the least squares polynomial 
fit of degree $M$ in~\eqref{eq:leastSquares}. Then, 
 \begin{equation}
 \mathbb{E}\left[\left\|f - p_M\right\|_\infty\right] \leq \frac{(M+1)^{3/2}s}{\sigma_{M+1}(\mathbf{T}_M(\underline{x}^{equi}))} + 2Q\left[1 + \frac{(M+1)(N+1)^{1/2}}{\sigma_{M+1}(\mathbf{T}_M(\underline{x}^{equi}))}\right]\frac{\rho^{-M}}{\rho-1}.
\label{eq:NoiseApproxPower}
\end{equation}
Moreover, when $M\leq \constant\sqrt{N}$ we have
\begin{equation}
  \mathbb{E}\left[\left\|f - p_M\right\|_\infty\right] \leq \frac{5\sqrt{5}(M+1)^{2}s}{N^{1/2}} + 2Q\left[1 + 10\sqrt{5}(M+1)^{3/2}\right]\frac{\rho^{-M}}{\rho-1},
\label{eq:NoiseApproximationPower}
\end{equation}
where $\|f - p_M\|_\infty=\sup_{x\in[-1,1]} |f(x)-p_M(x)|$.
\label{cor:ApproximationPowerWithNoise}
\end{corollary}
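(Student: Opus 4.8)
The plan is to split the random least squares fit into its mean — which is exactly the noiseless least squares polynomial analyzed in Section~\ref{sec:ApproximationPower} — plus a zero-mean fluctuation, and estimate the two pieces separately. Write $p_M^{(0)}$ for the polynomial whose Chebyshev coefficient vector is $\mathbb{E}[\underline{c}^{cheb}] = (\mathbf{T}_M(\underline{x}^{equi})^*\mathbf{T}_M(\underline{x}^{equi}))^{-1}\mathbf{T}_M(\underline{x}^{equi})^*\underline{f}$; as noted right after~\eqref{eq:normalequationsUnderNoise}, this is precisely the degree-$M$ least squares fit to the \emph{unperturbed} samples, so Theorem~\ref{thm:AnalyticLeastSquares} applies to it verbatim. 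The triangle inequality then gives
\[
\|f - p_M\|_\infty \leq \|f - p_M^{(0)}\|_\infty + \|p_M^{(0)} - p_M\|_\infty ,
\]
and, taking expectations, only the second (random) term remains to be controlled, the first being deterministic.

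Next I would reduce the random term to a coefficient estimate. Since $p_M^{(0)} - p_M$ has Chebyshev coefficient vector $\mathbb{E}[\underline{c}^{cheb}] - \underline{c}^{cheb}$ and $|T_k(x)|\leq 1$ on $[-1,1]$, one gets the $x$-free bound $\|p_M^{(0)} - p_M\|_\infty \leq \sum_{k=0}^M |c_k^{cheb} - \mathbb{E}[c_k^{cheb}]|$; this is the step that makes the supremum harmless. Taking expectations and using $\mathbb{E}|X - \mathbb{E}X| \leq (\operatorname{Var} X)^{1/2}$ termwise, each coordinate variance is dominated by the full sum of variances, which Lemma~\ref{lem:varianceNoise} bounds by $(M+1)s^2/\sigma_{M+1}(\mathbf{T}_M(\underline{x}^{equi}))^2$. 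Summing the $M+1$ identical-looking terms yields
\[
\mathbb{E}\bigl[\|p_M^{(0)} - p_M\|_\infty\bigr] \leq \frac{(M+1)^{3/2}\,s}{\sigma_{M+1}(\mathbf{T}_M(\underline{x}^{equi}))},
\]
which, combined with the bound on $\|f-p_M^{(0)}\|_\infty$ from Theorem~\ref{thm:AnalyticLeastSquares}, is exactly~\eqref{eq:NoiseApproxPower}. (One could instead apply Cauchy--Schwarz and Jensen to replace $(M+1)^{3/2}$ by $(M+1)$, but the stated form suffices.)

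The specialization~\eqref{eq:NoiseApproximationPower} then follows by substituting the $M\leq\constant\sqrt N$ lower bound $\sigma_{M+1}(\mathbf{T}_M(\underline{x}^{equi}))^2 \geq \tfrac{1}{25}\sigma_{M+1}(\mathbf{P}_M(\underline{x}^{equi}))^2 \geq \tfrac{2N}{125(2M+1)}$ from Theorem~\ref{thm:ChebyshevSingularValues} and Theorem~\ref{thm:LegendreVandermondeSingularValues} into both terms, using $2M+1\leq 2(M+1)$ and $N+1\leq\tfrac54 N$ (valid for $N\geq 4$) to absorb stray factors into the displayed constants — the same bookkeeping already performed for~\eqref{eq:ApproximationBound}. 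There is no genuine analytic obstacle here: all the real work is contained in Lemma~\ref{lem:varianceNoise} and Theorem~\ref{thm:AnalyticLeastSquares}. The only places needing care are passing from $\mathbb{E}[\sup_x|\cdot|]$ to a deterministic quantity (handled by the $|T_k|\leq 1$ bound above) and, in the final step, making sure that the looser $(M+1)^{3/2}$ contribution is the one that survives the constant-tracking.
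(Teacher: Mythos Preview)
Your proposal is correct and follows essentially the same route as the paper: split $p_M$ into its noiseless mean $p_M^{(0)}$ plus a zero-mean fluctuation, invoke Theorem~\ref{thm:AnalyticLeastSquares} for the deterministic part, and bound the fluctuation via $|T_k|\le 1$ together with Lemma~\ref{lem:varianceNoise}. The only cosmetic difference is in how the $\ell_1$ norm of $\underline{c}^{cheb}-\mathbb{E}[\underline{c}^{cheb}]$ is controlled: the paper applies Jensen to $\|\cdot\|_1$ as a whole and then uses $\|v\|_1^2\le (M+1)^2\|v\|_2^2$, whereas you bound each coordinate variance by the total; both yield the same $(M+1)^{3/2}$ factor (and, as you note, Cauchy--Schwarz would sharpen it to $(M+1)$).
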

\begin{proof} 
 The same reasoning as in Theorem~\ref{thm:AnalyticLeastSquares}, but with an extra term allowing for the noisy samples, gives the bound   
 \[
 \begin{aligned} 
  \mathbb{E}\left[\left|f(x) - p_M(x)\right|\right] \leq 2Q&\left[1 + \frac{(M+1)(N+1)^{1/2}}{\sigma_{M+1}(\mathbf{T}_M(\underline{x}^{equi}))}\right]\frac{\rho^{-M}}{\rho-1} \\
  &\qquad + \mathbb{E}\left[\left|\sum_{k=0}^M (\underline{c}^{cheb} - \mathbb{E}[\underline{c}^{cheb}])_kT_k(x)\right|\right].
 \end{aligned}
 \]
 Since $|T_k(x)|\leq 1$ and $\mathbb{E}[X]^2\leq \mathbb{E}[X^2]$, this extra term can be bounded as follows: 
 \[
 \begin{aligned}
 \mathbb{E}\left[\left|\sum_{k=0}^M (\underline{c}^{cheb} - \mathbb{E}[\underline{c}^{cheb}])_kT_k(x)\right|\right]^2 &\leq 
 \mathbb{E}\left[\left\|\underline{c}^{cheb} - \mathbb{E}[\underline{c}^{cheb}]\right\|_1^2\right] \\ 
 & \leq (M+1)^2\mathbb{E}\left[\left\|\underline{c}^{cheb} - \mathbb{E}[\underline{c}^{cheb}]\right\|_2^2\right],
 \end{aligned}
 \]
where Lemma~\ref{lem:varianceNoise} can now be employed. For the second
statement, substitute the bound derived in~\eqref{eq:ApproximationBound} into~\eqref{eq:NoiseApproxPower}.
\end{proof}

Therefore, when $M\leq \constant\sqrt{N}$ the least squares polynomial 
fit $p_M(x)$ in~\eqref{eq:leastSquares} is robust to noisy equally spaced
samples of $f$. 
On closer inspection of the bound in~\eqref{eq:NoiseApproximationPower}, we find 
that $\|f-p_M\|_\infty$ decays geometrically with order $\rho$, until it plateaus 
at roughly $5\sqrt{5}(M+1)^2/\sqrt{N} s$.  Since $M\approx\constant\sqrt{N}$ the 
plateau is proportional to the noise level $s$, even as $N\rightarrow \infty$.

One interesting regime  is to keep $M$ fixed and to increase 
the number of samples $N+1$. We see that $\|f-p_M\|_\infty$ is about $\mathcal{O}(s/\sqrt{N})$ in size. 
Intuitively, this makes sense 
because one could imagine averaging nearby samples onto a coarser equally spaced 
grid and using those averaged samples instead. Since the variance of an average of independent random variables scales like
the reciprocal of the number in the average, we expect $\|f-p_M\|_\infty$ to plateau at about 
$\mathcal{O}(s/\sqrt{N})$.  Figure~\ref{fig:NoisySamples} shows a related phenomenon on the plateau 
of the Chebyshev coefficients of the least squares polynomial fit $p_M(x)$ to $f(x) = 1/(1+25(x-1/100)^2)$ on $[-1,1]$.
If the number of samples is increased by a factor of $100$, then the plateau of the Chebyshev 
coefficients drops by a factor of $10$, which confirms the $s/\sqrt{N}$ behavior.

\begin{figure} 
\centering
\begin{overpic}[width=.6\textwidth]{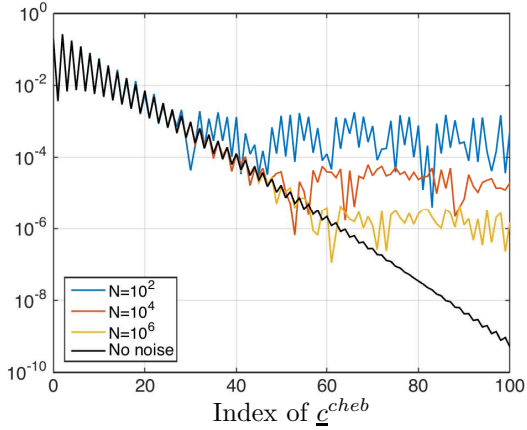}
\put(40,0) {Index of $\underline{c}^{cheb}$}
\end{overpic} 
\caption{The Chebyshev coefficients of $p_M(x)$ in~\eqref{eq:leastSquares} when $M=100$, $f(x) = 1/(1+25(x-1/100)^2)$, and 
the function samples are perturbed by white noise with a standard deviation of $10^{-3}$. The shift of $1/100$ in the definition of $f$ 
is to prevent the function from being even, simplifying the plot. When the number of equally spaced samples is increased by a factor of $100$, the 
plateau in the tail of the coefficients drops by a factor of $10$ (see Corollary~\ref{cor:ApproximationPowerWithNoise}).}
\label{fig:NoisySamples}
\end{figure} 

\subsection{Least squares polynomial fitting with deterministic perturbations}\label{sec:deterministicError}
Now suppose that $f(\underline{x}^{equi})$ is polluted with deterministic error
such as $f(\underline{x}^{equi}) + \underline{\eps}$, where 
$\underline{\eps} = (\eps_0,\ldots,\eps_N)^T$ is a vector such that 
$\|\underline{\eps}\|_\infty=\eps<\infty$. 
We wish to solve  
\[
 \mathbf{T}_M(\underline{x}^{equi})^*\mathbf{T}_M(\underline{x}^{equi})\underline{c}^{cheb} = 
\mathbf{T}_M(\underline{x}^{equi})^*\left(\underline{f}+\underline{\eps}\right)
\] 
and understand the quality of the resulting least squares polynomial fit $p_M$. This is relatively 
easy to do given the proof of Theorem~\ref{thm:AnalyticLeastSquares} so we state it as a corollary. 

\begin{corollary} 
Suppose that the assumptions in Theorem~\ref{thm:AnalyticLeastSquares} are satisfied,
and that the values of $\underline{f}$ are perturbed by a vector $\underline{\eps}$, where 
$\|\underline{\eps}\|_\infty = \eps <\infty$. Then, 
\[
 \|f - p_M\|_\infty \leq 2Q\left[1 + \frac{(M+1)(N+1)^{1/2}}{\sigma_{M+1}(\mathbf{T}_M(\underline{x}^{equi}))}\right]\frac{\rho^{-M}}{\rho-1} + \frac{(M+1)(N+1)^{1/2}}{\sigma_{M+1}(\mathbf{T}_M(\underline{x}^{equi}))}\eps.
\]
\label{cor:deterministicError}
\end{corollary}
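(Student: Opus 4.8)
The plan is to mimic the proof of Theorem~\ref{thm:AnalyticLeastSquares} line by line, carrying an extra contribution through the normal equations to account for the perturbation vector $\underline{\eps}$. As before, let $f_M$ be the truncated Chebyshev expansion of $f$ of degree $M$ with coefficient vector $\underline{a}_M$, so that $f(\underline{x}^{equi}) = \mathbf{T}_M(\underline{x}^{equi})\underline{a}_M + (f-f_M)(\underline{x}^{equi})$. The vector $\underline{c}^{cheb}$ now satisfies $\mathbf{T}_M(\underline{x}^{equi})^*\mathbf{T}_M(\underline{x}^{equi})\underline{c}^{cheb} = \mathbf{T}_M(\underline{x}^{equi})^*(\underline{f}+\underline{\eps})$, so subtracting $\mathbf{T}_M(\underline{x}^{equi})^*\mathbf{T}_M(\underline{x}^{equi})\underline{a}_M = \mathbf{T}_M(\underline{x}^{equi})^*f_M(\underline{x}^{equi})$ gives
\[
\mathbf{T}_M(\underline{x}^{equi})^*\mathbf{T}_M(\underline{x}^{equi})(\underline{c}^{cheb}-\underline{a}_M) = \mathbf{T}_M(\underline{x}^{equi})^*\big((f-f_M)(\underline{x}^{equi}) + \underline{\eps}\big).
\]

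Next I would apply $(A^*A)^{-1}A^*$ with $A = \mathbf{T}_M(\underline{x}^{equi})$, using $\|(A^*A)^{-1}A^*\|_2 = 1/\sigma_{M+1}(A)$ together with the norm conversion $\|v\|_\infty \le \|v\|_2$ and $\|A\|_\infty \le \sqrt{N+1}\|A\|_2$ for the $(M+1)\times(N+1)$ matrix $(A^*A)^{-1}A^*$. Splitting the right-hand side by the triangle inequality into the approximation part and the noise part, the first contributes $2Q\sqrt{N+1}\,\rho^{-M}/((\rho-1)\sigma_{M+1}(A))$ exactly as in~\eqref{eq:UsefulInequality}, while the new piece contributes $\sqrt{N+1}\,\eps/\sigma_{M+1}(A)$ since $\|\underline{\eps}\|_2 \le \sqrt{N+1}\,\eps$. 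Combining, $\|\underline{c}^{cheb}-\underline{a}_M\|_\infty \le 2Q\,(N+1)^{1/2}\rho^{-M}/((\rho-1)\sigma_{M+1}(A)) + (N+1)^{1/2}\eps/\sigma_{M+1}(A)$. Finally, using $|T_k(x)|\le 1$, $\|f-p_M\|_\infty \le (M+1)\|\underline{c}^{cheb}-\underline{a}_M\|_\infty + \sum_{k>M}|a_k|$ and $\sum_{k>M}|a_k|\le 2Q\rho^{-M}/(\rho-1)$, which reassembles exactly into the claimed bound.

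There is essentially no obstacle here: the corollary is a routine bookkeeping variant of Theorem~\ref{thm:AnalyticLeastSquares}, and the only thing to be slightly careful about is tracking that the perturbation term enters linearly in $\eps$ (not squared, as in the Gaussian case) because the deterministic bound $\|\underline{\eps}\|_2 \le \sqrt{N+1}\,\eps$ is used directly rather than through a variance computation. The factor $(M+1)(N+1)^{1/2}/\sigma_{M+1}(\mathbf{T}_M(\underline{x}^{equi}))$ multiplying $\eps$ is precisely the operator-norm amplification of the least-squares map from samples to the degree-$M$ polynomial evaluated on $[-1,1]$; when $M \le \constant\sqrt N$ one would then invoke Theorem~\ref{thm:ChebyshevSingularValues} to replace $\sigma_{M+1}$ by its explicit lower bound, turning this factor into $\mathcal{O}(M^2)$, though the statement as written keeps it in terms of $\sigma_{M+1}$ for generality.
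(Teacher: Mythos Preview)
Your proof is correct and follows exactly the approach the paper intends: the paper's own proof is simply ``The same proof as Theorem~\ref{thm:AnalyticLeastSquares} except with an additional term that is easy to bound due to the vector $\underline{\eps}$,'' and you have filled in those details faithfully. One small slip in your closing remark: substituting the lower bound from Theorem~\ref{thm:ChebyshevSingularValues} gives $(M+1)(N+1)^{1/2}/\sigma_{M+1}(\mathbf{T}_M(\underline{x}^{equi})) = \mathcal{O}(M^{3/2})$, not $\mathcal{O}(M^2)$, but this does not affect the proof of the corollary itself.
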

\begin{proof} 
 The same proof as Theorem~\ref{thm:AnalyticLeastSquares} except with an additional term that is 
 easy to bound due to the vector $\underline{\eps}$.
\end{proof}

By taking $M\leq \constant\sqrt{N}$ and noting that $\sigma_{M+1}(\mathbf{T}_M(\underline{x}^{equi}))^2\geq 2N/(125(2M+1))$ 
we see that $p_M$ does not converge to $f$ as $N\rightarrow\infty$ with deterministic error, though it plateaus 
at around $\mathcal{O}(\eps)$.  If $M$ is fixed and $N\rightarrow\infty$, 
then Corollary~\ref{cor:deterministicError} shows that $\|f - p_M\|_\infty$ remains bounded.
This is to be expected because in this situation one cannot average a dense set of function samples onto a coarse grid and reduce the 
uncertainty in the sampled values. 

\section{Stable extrapolation with least squares polynomial fits}\label{sec:Extrapolation}
Without perturbed function samples a least squares polynomial fit from equally spaced samples can be 
used to extrapolate outside of $[-1,1]$, by a distance that depends on the analyticity of the 
sampled function. However, in practice polynomial extrapolation is sensitive to perturbed samples or 
roundoff errors in floating point arithmetic. 

In sections~\ref{sec:ExtrapolationWithNoise} and~\ref{sec:ExtrapolationWithDeterministicNoise} we go further and 
show that there are two interesting regimes: (1) if the noise is modeled by independent Gaussian random variables and there is 
exponential oversampling, i.e., $N=e^{aM}$, then 
one can stably extrapolate to $x\in I_\rho = [1,(\rho+1/\rho)/2)$, and (2)
if the noise in the function samples is deterministic, then  
there is a degree $M^*$ that (nearly) minimizes $\sup_{x\in I_\rho} |f(x)-p_M(x)|$.  If $M^*<N/2$, then 
the minimum extrapolation error is a $x$-dependent fractional power of $\eps$. 

\subsection{Extrapolation without noise}\label{sec:ExtrapolationWithoutNoise}
Without noise in the function samples, it turns out that one can extrapolate by any $x$ satisfying 
$1\leq x<(\rho+\rho^{-1})/2$. 
In fact, one cannot expect to extrapolation any further than $(\rho+\rho^{-1})/2$ with a polynomial approximant 
because $f$ is only assumed to be bounded and analytic in an ellipse 
that intercepts the $x$-axis at $(\rho+\rho^{-1})/2$.  

\begin{theorem}
Suppose that the assumptions in Theorem~\ref{thm:AnalyticLeastSquares} hold. 
Then, for any  $1 < x < (\rho+\rho^{-1})/2$ we have 
 \[
 |f(x) - p_M(x)|\leq 2Q\left[\frac{(N+1)^{1/2}(M+1)}{\sigma_{M+1}(\mathbf{T}_M(\underline{x}^{equi}))(\rho-1)} + \frac{r}{1-r}\right]r^M,
\]
where $r = (x + \sqrt{x^2-1})/\rho<1$.  In other words, it is possible to use 
$p_M(x)$ to extrapolate outside of $[-1,1]$ by a distance
determined by the analyticity of $f$. 
\label{thm:extrapolation} 
\end{theorem}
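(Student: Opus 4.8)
The plan is to mimic the proof of Theorem~\ref{thm:AnalyticLeastSquares}, splitting the extrapolation error at a point $x$ with $1<x<(\rho+\rho^{-1})/2$ into a truncation part and a coefficient-perturbation part, but now accounting for the growth of the Chebyshev polynomials outside $[-1,1]$. Writing $w = x+\sqrt{x^2-1}$, so that $w = \rho r$ and $r<1$ on this interval, the single new fact needed is that for $x\geq 1$ one has $T_k(x) = \tfrac12(w^k + w^{-k})$, and hence $|T_k(x)|\leq w^k = (\rho r)^k$.

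First I would decompose $f(x) - p_M(x) = \bigl(f(x) - f_M(x)\bigr) + \bigl(f_M(x) - p_M(x)\bigr)$, where $f_M$ is the degree-$M$ truncation of the Chebyshev series of $f$ from Proposition~\ref{prop:AnalyticConvergence}. For the truncation term, I bound $|f(x) - f_M(x)| \leq \sum_{k=M+1}^{\infty} |a_k^{cheb}|\,w^k \leq \sum_{k=M+1}^{\infty} 2Q\rho^{-k} w^k = 2Q\sum_{k=M+1}^{\infty} r^k = 2Q\,\frac{r}{1-r}\,r^M$, using $|a_k^{cheb}|\leq 2Q\rho^{-k}$ and summing the geometric series since $r<1$. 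For the second term I reuse verbatim the normal-equations estimate already established inside the proof of Theorem~\ref{thm:AnalyticLeastSquares}, namely $\|\underline{c}^{cheb} - \underline{a}_M\|_\infty \leq 2Q\,\frac{(N+1)^{1/2}}{\sigma_{M+1}(\mathbf{T}_M(\underline{x}^{equi}))}\,\frac{\rho^{-M}}{\rho-1}$, which is exactly~\eqref{eq:UsefulInequality}. Then $|f_M(x) - p_M(x)| = \bigl|\sum_{k=0}^M (a_k^{cheb} - c_k^{cheb})T_k(x)\bigr| \leq \|\underline{c}^{cheb} - \underline{a}_M\|_\infty \sum_{k=0}^M w^k$, and bounding the final sum crudely by $(M+1)w^M = (M+1)(\rho r)^M$ yields $2Q\,\frac{(N+1)^{1/2}(M+1)}{\sigma_{M+1}(\mathbf{T}_M(\underline{x}^{equi}))(\rho-1)}\, r^M$, the factors $\rho^{-M}$ and $\rho^M$ cancelling. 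Adding the two contributions gives the claimed bound.

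There is no genuine obstacle here — the argument is essentially bookkeeping once the estimate of Theorem~\ref{thm:AnalyticLeastSquares} is in hand — but the point requiring care is that the coefficient-error bound $\|\underline{c}^{cheb}-\underline{a}_M\|_\infty$ is the noiseless one ($\eps=0$, inherited from the hypotheses of Theorem~\ref{thm:AnalyticLeastSquares}), and that it is now amplified by the factor $\sum_{k=0}^M w^k$, which grows like $w^M$ rather than staying $\mathcal{O}(M)$ as it did on $[-1,1]$. This amplification is exactly what will, in the noisy setting of the later sections, force the degree cut-off $M^*$ in~\eqref{eq:M}. One could retain the sharper $\sum_{k=0}^M w^k = (w^{M+1}-1)/(w-1)$ in place of $(M+1)w^M$, but that only improves constants while cluttering the statement, so I would keep the crude form.
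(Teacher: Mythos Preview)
Your proposal is correct and follows essentially the same argument as the paper: the same splitting into truncation and coefficient-perturbation parts, the same use of $|T_k(x)|\leq (x+\sqrt{x^2-1})^k$, the same appeal to~\eqref{eq:UsefulInequality}, and the same crude bound $\sum_{k=0}^M w^k\leq (M+1)w^M$ (which the paper writes as $\sum_{k=0}^M \rho^k r^k\leq (M+1)\rho^M r^M$). Your closing remarks about the noiseless hypothesis and the amplification factor are apt and anticipate exactly how the paper proceeds in the noisy case.
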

\begin{proof} 
Since $|T_k(x)|\leq (x + \sqrt{x^2-1})^k$ for $x>1$ we have, by Theorem~\ref{thm:AnalyticLeastSquares} and Proposition~\ref{prop:AnalyticConvergence},
\[
\begin{aligned} 
 |f(x) - p_M(x)| &\leq \sum_{k=0}^M\! |a_k - c_k^{cheb}||T_k(x)| + \sum_{k=M+1}^\infty |a_k||T_k(x)| \\
 & \leq \|\underline{a}_M - \underline{c}^{cheb}\|_\infty \sum_{k=0}^M|T_k(x)| + 2Q \sum_{k=M+1}^\infty\rho^{-k}|T_k(x)|\\
 & \leq 2Q\left[\frac{(N+1)^{1/2}}{\sigma_{M+1}(\mathbf{T}_M(\underline{x}^{equi}))}\frac{\rho^{-M}}{\rho-1}\sum_{k=0}^M \rho^kr^k + r^{M+1}\sum_{k=0}^\infty r^k\right],
\end{aligned} 
\]
where the last inequality used~\eqref{eq:UsefulInequality} and $r = (x + \sqrt{x^2-1})/\rho$. Since $1<x<(\rho+\rho^{-1})/2$ we have 
$r<1$ and by the sum of a geometric series we conclude that 
\[ 
 |f(x) - p_M(x)|\leq 2Q\left[\frac{(N+1)^{1/2}(M+1)}{\sigma_{M+1}(\mathbf{T}_M(\underline{x}^{equi}))(\rho-1)} + \frac{r}{1-r}\right]r^M,
\] 
where we used the inequality $\sum_{k=0}^M \rho^kr^k\leq (M+1)\rho^{M}r^M$.
\end{proof}

Figure~\ref{fig:StableExtrapolation} verifies Theorem~\ref{thm:extrapolation} for 
$f(x) = 1/(1+x^2)$ and $g(x) = 1/(1+2x^2)$. Let $p$ and $q$ be the least squares polynomial fits to $f$ and $g$ of degree $M$ constructed
by $N+1$ equally spaced samples with $M\leq \constant \sqrt{N}$. 
Since $\rho = 2.42$ is the Bernstein parameter for $f$ and $\rho=4.24$ for $g$, Theorem~\ref{thm:extrapolation} predicts 
that the least squares error $|f(x)-p_M(x)|$ and $|g(x)-q(x)|$ geometrically decays to zero as $M\rightarrow\infty$ for $1<x<\sqrt{2}$ and 
$1<x<1.23$, respectively. This is observed in Figure~\ref{fig:StableExtrapolation}.
 \begin{figure} 
\centering
\begin{minipage}{.49\textwidth} 
 \begin{overpic}[width=\textwidth]{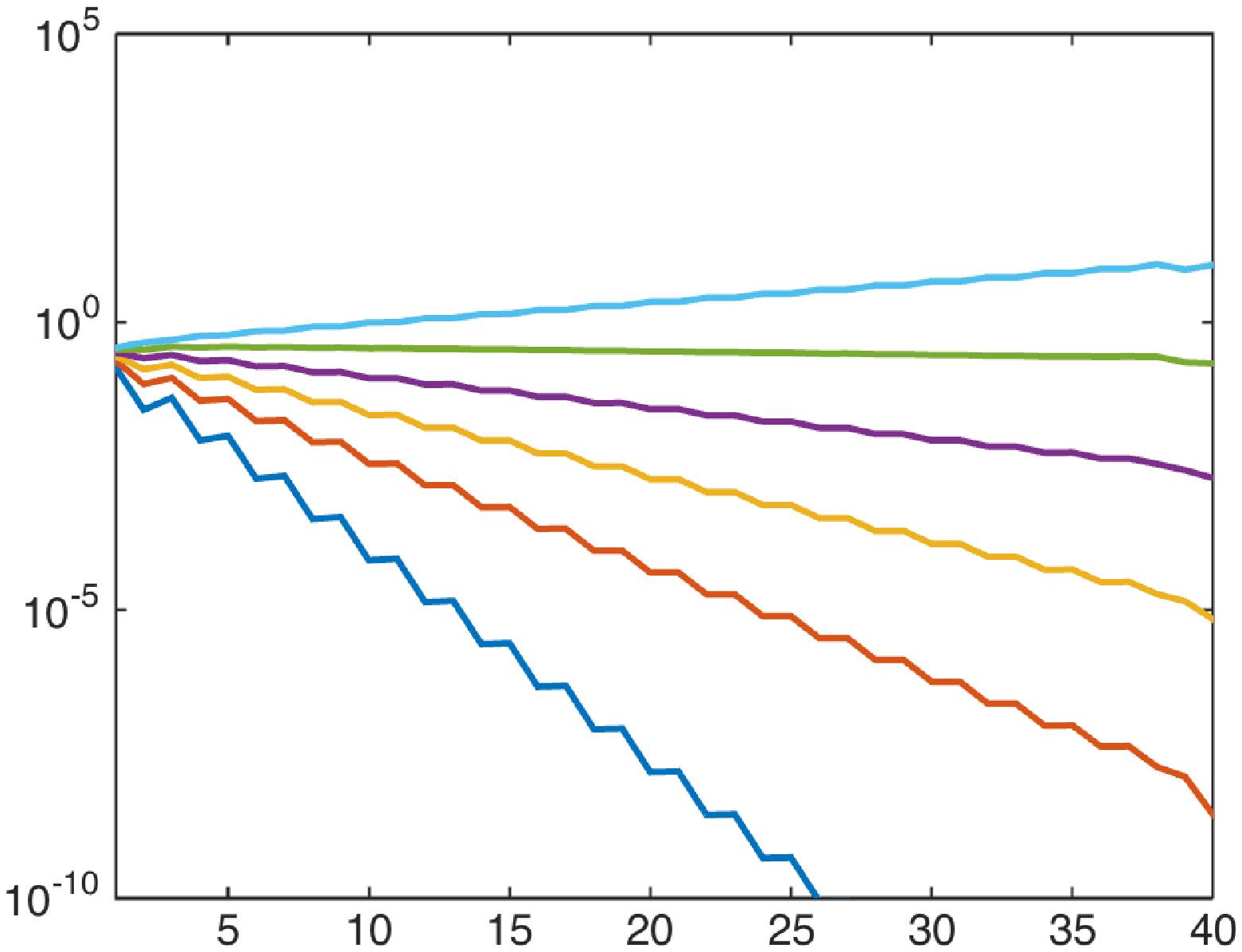}
   \put(50,0) {$M$}
   \put(0,15) {\rotatebox{90}{Extrapolation error}}
   \put(13,64) {$f(x) =1/(1+x^2)$}
  \put(53, 18) {\rotatebox{-37}{$x=1$}}
  \put(70, 25) {\rotatebox{-20}{$x=1.1$}}
  \put(70, 35) {\rotatebox{-14}{$x=1.2$}}
  \put(70, 41) {\rotatebox{-5}{$x=1.3$}}
  \put(70, 47) {\rotatebox{0}{$x=1.4$}}
  \put(70, 53) {\rotatebox{5}{$x=1.5$}}
  \put(14,17) {Decay for}
  \put(14,10) {$1\leq x<1.42$}
\end{overpic} 
\end{minipage}
\begin{minipage}{.49\textwidth} 
 \begin{overpic}[width=\textwidth]{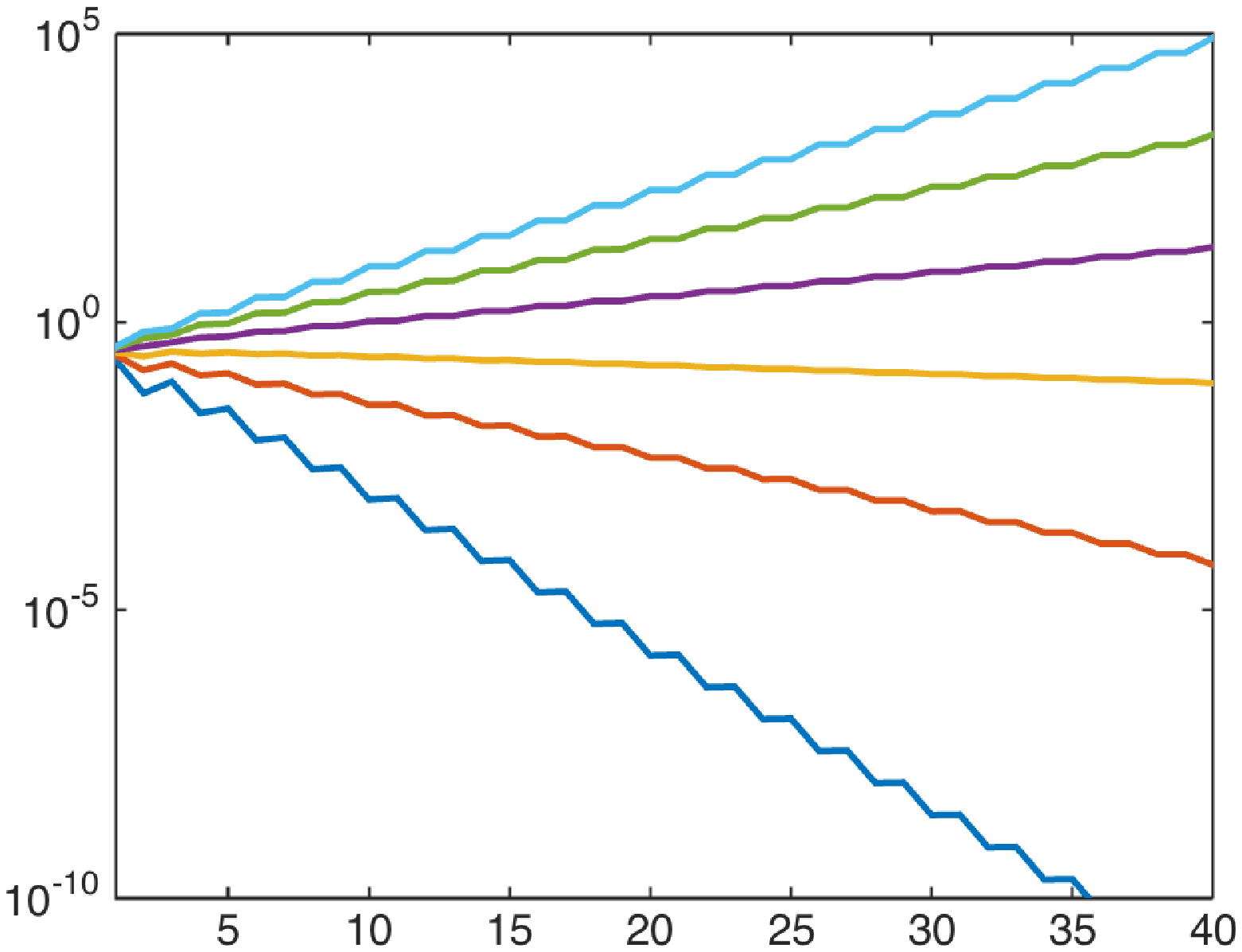}
  \put(50,0) {$M$}
  \put(0,15) {\rotatebox{90}{Extrapolation error}}
  \put(13,64) {$g(x) =1/(1+2x^2)$}
  \put(60, 22) {\rotatebox{-27}{$x=1$}}
  \put(65, 38) {\rotatebox{-11}{$x=1.1$}}
  \put(70, 46) {\rotatebox{-2}{$x=1.2$}}
  \put(70, 53) {\rotatebox{5}{$x=1.3$}}
  \put(70, 59) {\rotatebox{12}{$x=1.4$}}
  \put(60, 61) {\rotatebox{17}{$x=1.5$}}
  \put(14,17) {Decay for}
  \put(14,10) {$1\leq x<1.23$}
 \end{overpic}
\end{minipage}
\caption{Least squares extrapolation error. Left: The error $|f(x) - p_M(x)|$ at $x = 1,1.1,1.2,1.3,1.4,1.5$, 
where $f(x) =1/(1+x^2)$, $p_M$ is the least squares polynomial fit of degree $M$, and $1\leq M\leq 40$. For any $1<x<\sqrt{2}$ the extrapolation error $|f(x) - p_M(x)|$ 
converges geometrically with order $r = (x + \sqrt{x^2-1})/\rho$ to zero as $M\rightarrow\infty$ (see Theorem~\ref{thm:extrapolation}).
Right: The same as the left figure for $g(x) =1/(1+2x^2)$. For any $1< x<1.23$ the extrapolation error $|g(x) - p_M(x)|$ 
converges geometrically to zero as $M\rightarrow\infty$, where $p_M$ is the least squares polynomial fit to $g$ of degree $M$.}
\label{fig:StableExtrapolation} 
\end{figure}

\subsection{Extrapolation with Gaussian noise}\label{sec:ExtrapolationWithNoise}
In the presence of noise in the function samples one must be a little more careful. 
Suppose that the functions samples, $f(\underline{x}^{equi})$, are 
perturbed by noise, $f(\underline{x}^{equi})+\underline{\eps}$, so that each entry of 
$\underline{\eps}$ is modeled by a Gaussian random variable with mean $0$ and variable $s^2$. 
Then, the expected extrapolation error can be bounded as follows: 
\begin{corollary} 
Suppose that the assumptions in Corollary~\ref{cor:ApproximationPowerWithNoise} hold. 
Then, for any  $1 \leq x < (\rho+\rho^{-1})/2$ we have
\[
\begin{aligned}
 \mathbb{E}\left[|f(x) - p_M(x)|\right] &\leq  2Q\left[\frac{(N+1)^{1/2}(M+1)}{\sigma_{M+1}(\mathbf{T}_M(\underline{x}^{equi}))(\rho-1)} + \frac{r}{1-r}\right]r^M\\
 &\qquad\qquad\qquad\qquad\qquad\qquad+\frac{(M+1)^{3/2}s}{\sigma_{M+1}(\mathbf{T}_M(\underline{x}^{equi}))}(\rho r)^M,
\end{aligned}
\]
where $r = (x+\sqrt{x^2-1})/\rho$.
\label{cor:NoisyErrorBound}
\end{corollary}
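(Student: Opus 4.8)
The plan is to combine the argument of Theorem~\ref{thm:extrapolation}, which controls the noiseless extrapolation error, with the variance bound of Lemma~\ref{lem:varianceNoise} in the same way as in the proof of Corollary~\ref{cor:ApproximationPowerWithNoise}. Fix $x$ with $1 \le x < (\rho+\rho^{-1})/2$ and put $r = (x+\sqrt{x^2-1})/\rho$, so that $\rho r = x+\sqrt{x^2-1} \ge 1$ while $r < 1$ (because $\rho r = \rho$ at the right endpoint). Recalling from Section~\ref{sec:ExtrapolationWithNoise} that $\mathbb{E}[\underline{c}^{cheb}]$ equals the noiseless least squares coefficient vector, I would split the error by inserting this expectation:
\begin{align*}
f(x) - p_M(x) &= \sum_{k=M+1}^\infty a_k^{cheb} T_k(x) \;+\; \sum_{k=0}^M \bigl(a_k^{cheb} - \mathbb{E}[c_k^{cheb}]\bigr) T_k(x) \\
&\qquad\qquad +\; \sum_{k=0}^M \bigl(\mathbb{E}[c_k^{cheb}] - c_k^{cheb}\bigr) T_k(x),
\end{align*}
then apply the triangle inequality and take expectations; the first two sums are deterministic.

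The two deterministic sums are estimated exactly as in Theorem~\ref{thm:extrapolation}. Using $|T_k(x)| \le (x+\sqrt{x^2-1})^k = (\rho r)^k$ for $x \ge 1$, Proposition~\ref{prop:AnalyticConvergence} bounds the tail by $2Q\sum_{k\ge M+1}\rho^{-k}(\rho r)^k = 2Q\, r^{M+1}/(1-r) \le 2Q\,\tfrac{r}{1-r}\,r^M$. For the middle sum, inequality~\eqref{eq:UsefulInequality} (applied to $\mathbb{E}[\underline{c}^{cheb}]$, which plays the role of the noiseless coefficients) bounds $\|\underline{a}_M - \mathbb{E}[\underline{c}^{cheb}]\|_\infty$ by $2Q(N+1)^{1/2}\rho^{-M}/(\sigma_{M+1}(\mathbf{T}_M(\underline{x}^{equi}))(\rho-1))$; multiplying by $\sum_{k=0}^M(\rho r)^k \le (M+1)(\rho r)^M$ and using $\rho^{-M}(\rho r)^M = r^M$ gives the term $2Q(N+1)^{1/2}(M+1)\,r^M/(\sigma_{M+1}(\mathbf{T}_M(\underline{x}^{equi}))(\rho-1))$. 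Adding these reproduces the first bracket of the statement.

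For the stochastic sum I would mimic the proof of Corollary~\ref{cor:ApproximationPowerWithNoise}. Since $\rho r \ge 1$, we have $|T_k(x)| \le (\rho r)^k \le (\rho r)^M$ for $0 \le k \le M$, so the absolute value of that sum is at most $(\rho r)^M\|\underline{c}^{cheb} - \mathbb{E}[\underline{c}^{cheb}]\|_1$; then $\mathbb{E}[X]^2 \le \mathbb{E}[X^2]$, the norm bound $\|v\|_1 \le (M+1)\|v\|_2$, and Lemma~\ref{lem:varianceNoise} give
\begin{align*}
\mathbb{E}\Bigl[\,\bigl|\sum_{k=0}^M (\mathbb{E}[c_k^{cheb}]-c_k^{cheb}) T_k(x)\bigr|\,\Bigr]
&\le (\rho r)^M (M+1)\,\sqrt{\frac{(M+1)s^2}{\sigma_{M+1}(\mathbf{T}_M(\underline{x}^{equi}))^2}} \\
&= \frac{(M+1)^{3/2}s}{\sigma_{M+1}(\mathbf{T}_M(\underline{x}^{equi}))}\,(\rho r)^M,
\end{align*}
which is the remaining term, so adding the three contributions completes the argument. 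None of this is delicate; the single point worth flagging is that the harmless estimate $|T_k(x)| \le 1$ available on $[-1,1]$ in Corollary~\ref{cor:ApproximationPowerWithNoise} must be replaced here by the genuine off-interval growth bound $|T_k(x)| \le (x+\sqrt{x^2-1})^k$, which is exactly why the noise contribution carries the factor $(\rho r)^M$ — a quantity exceeding $1$ as soon as $x>1$, so that the bound does not vanish and has to be traded off against the decaying $r^M$ when the degree $M^*$ is later chosen.
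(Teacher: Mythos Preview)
Your proposal is correct and follows exactly the approach the paper indicates: reproduce the noiseless argument of Theorem~\ref{thm:extrapolation} for the deterministic part, and bound the additional stochastic term via Lemma~\ref{lem:varianceNoise} as in Corollary~\ref{cor:ApproximationPowerWithNoise}, with the sole change that the on-interval bound $|T_k(x)|\le 1$ is replaced by $|T_k(x)|\le (\rho r)^k\le (\rho r)^M$. There is nothing to add.
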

\begin{proof} 
Essentially the same proof as Theorem~\ref{thm:extrapolation} with an additional term that is bounded 
using Lemma~\ref{lem:varianceNoise}.
\end{proof}

This shows that extrapolation with noise is unstable since $\rho r>1$ and hence, $(\rho r)^M$ grows 
exponentially with $M$. A closer look reveals a more interesting phenomenon though. When $M\leq\constant \sqrt{N}$, using 
Corollary~\ref{cor:NoisyErrorBound}, Theorem~\ref{thm:ChebyshevSingularValues}, and forgetting quantities that grow like a 
polynomial in $M$, we have
\[
 \mathbb{E}\left[|f(x) - p_M(x)|\right] \lesssim \frac{2Qr^{M+1}}{1-r}, \qquad 1 \leq x < (\rho+\rho^{-1})/2 + \frac{s(\rho r)^M}{\sqrt{N}},
\]
where $r=(x+\sqrt{x^2-1})/\rho<1$. Therefore, if the function is exponentially oversampled, i.e., $N \geq e^{aM}$ for some constant $a$, 
then 
\[
 \mathbb{E}\left[|f(x) - p_M(x)|\right] \lesssim \frac{2Qr^{M+1}}{1-r} + s N^{\frac{1}{a}\log(\rho r)-1/2},
\]
and provided that $a > 2\log(\rho r)$ the expected extrapolation error decays to $0$ as $M\rightarrow\infty$. 
This regime may not be as practical as one might hope because exponential oversampling is quite prohibitive; however, 
it reveals that polynomial extrapolation can not only be stable, but also arbitrarily accurate, with function samples perturbed by Gaussian noise.

\subsection{Extrapolation with deterministic perturbations}\label{sec:ExtrapolationWithDeterministicNoise}
A quite different situation occurs when the function samples are perturbed deterministically. 
That is, we obtain function samples of the form $f(\underline{x}^{equi})+\underline{\eps}$ with
$\|\underline{\eps}\|_\infty<\infty = \eps <\infty$. Here, is the bound that one obtains on the extrapolation error.
\begin{corollary} 
Suppose that the assumptions in Corollary~\ref{cor:deterministicError} hold. 
Then, for any fixed $1 \leq x < (\rho+\rho^{-1})/2$ we have
\begin{equation}
\begin{aligned}
 |f(x) - p_M(x)| &\leq 2Q\left[\frac{(N+1)^{1/2}(M+1)}{\sigma_{M+1}(\mathbf{T}_M(\underline{x}^{equi}))(\rho-1)} + \frac{r}{1-r}\right]r^M\\
 &\qquad\qquad\qquad\qquad\qquad\qquad + \frac{(M+1)(N+1)^{1/2}\eps}{\sigma_{M+1}(\mathbf{T}_M(\underline{x}^{equi}))}(\rho r)^M,
\end{aligned}
\label{eq:DeterministicErrorBound}
\end{equation}
where $r = (x+\sqrt{x^2-1})/\rho$.
\end{corollary}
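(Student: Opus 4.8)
The plan is to mimic the proof of Theorem~\ref{thm:extrapolation} essentially line for line, carrying the deterministic perturbation $\underline{\eps}$ through the single estimate where the data enters. First I would expand both $f$ and $p_M$ in the Chebyshev basis and split the extrapolation error as
\[
|f(x) - p_M(x)| \leq \|\underline{a}_M - \underline{c}^{cheb}\|_\infty \sum_{k=0}^M |T_k(x)| + \sum_{k=M+1}^\infty |a_k^{cheb}||T_k(x)|,
\]
where $\underline{a}_M$ is the vector of the first $M+1$ Chebyshev coefficients of $f$. Since $x>1$ I would then replace $|T_k(x)|$ by the growth bound $|T_k(x)| \leq (x+\sqrt{x^2-1})^k = (\rho r)^k$; this is the only structural change compared with the on-interval estimates, which instead use $|T_k(x)|\leq 1$.

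The tail sum is handled exactly as in Theorem~\ref{thm:extrapolation}: Proposition~\ref{prop:AnalyticConvergence} gives $|a_k^{cheb}|\leq 2Q\rho^{-k}$, so $\sum_{k=M+1}^\infty |a_k^{cheb}||T_k(x)| \leq 2Q\sum_{k=M+1}^\infty r^k = 2Q\, r^{M+1}/(1-r)$, which is finite because $r<1$ for $x<(\rho+\rho^{-1})/2$; this produces the $\tfrac{r}{1-r}r^M$ piece. For the coefficient error I would invoke the bound already established inside the proof of Corollary~\ref{cor:deterministicError}: from $\mathbf{T}_M(\underline{x}^{equi})^*\mathbf{T}_M(\underline{x}^{equi})(\underline{c}^{cheb}-\underline{a}_M) = \mathbf{T}_M(\underline{x}^{equi})^*\big[(f-f_M)(\underline{x}^{equi}) + \underline{\eps}\big]$, combined with $\|(A^*A)^{-1}A^*\|_2 = 1/\sigma_{\min}(A)$, $\|A\|_\infty\leq\sqrt{N+1}\|A\|_2$, and $\|(f-f_M)(\underline{x}^{equi})\|_\infty\leq 2Q\rho^{-M}/(\rho-1)$, one gets
\[
\|\underline{a}_M-\underline{c}^{cheb}\|_\infty \leq \frac{(N+1)^{1/2}}{\sigma_{M+1}(\mathbf{T}_M(\underline{x}^{equi}))}\left(\frac{2Q\rho^{-M}}{\rho-1} + \eps\right).
\]

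Finally I would multiply this by $\sum_{k=0}^M |T_k(x)| \leq \sum_{k=0}^M (\rho r)^k \leq (M+1)(\rho r)^M$ (each of the $M+1$ summands is at most $(\rho r)^M$ since $\rho r>1$). Using $\rho^{-M}(\rho r)^M = r^M$, the $2Q\rho^{-M}$ part becomes $2Q\,\tfrac{(N+1)^{1/2}(M+1)}{\sigma_{M+1}(\mathbf{T}_M(\underline{x}^{equi}))(\rho-1)}r^M$, matching the first bracket of~\eqref{eq:DeterministicErrorBound}, while the $\eps$ part becomes exactly $\tfrac{(M+1)(N+1)^{1/2}\eps}{\sigma_{M+1}(\mathbf{T}_M(\underline{x}^{equi}))}(\rho r)^M$; adding the tail contribution completes the estimate. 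There is no genuine obstacle here: the statement is a routine corollary of Theorem~\ref{thm:extrapolation} and Corollary~\ref{cor:deterministicError}. The only point that demands care is the bookkeeping of the two competing geometric rates — because $x>1$ forces $\rho r = x+\sqrt{x^2-1}>1$, the factor $\sum_{k=0}^M(\rho r)^k$ is genuinely growing and must be retained as $(M+1)(\rho r)^M$ rather than bounded by a constant; correctly pairing $\rho^{-M}$ against $(\rho r)^M$ is what collapses the noiseless part to a pure power of $r$, whereas the perturbation term keeps the growing factor $(\rho r)^M$, which is precisely the source of the instability discussed in the remainder of Section~\ref{sec:ExtrapolationWithDeterministicNoise}.
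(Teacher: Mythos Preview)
Your proposal is correct and follows exactly the route the paper indicates: the paper's own proof is the one-line remark that this is ``essentially the same proof as Theorem~\ref{thm:extrapolation} with an additional term depending on $\eps$,'' and you have spelled out precisely that computation. The only cosmetic imprecision is that the statement allows $x=1$, where $\rho r=1$; the bound $\sum_{k=0}^M(\rho r)^k\leq (M+1)(\rho r)^M$ still holds there as an equality, so nothing changes.
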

\begin{proof} 
 Essentially the same proof as Theorem~\ref{thm:extrapolation} with an 
 additional term depending on $\eps$ that is relatively simple to bound.
\end{proof}

Since $\rho r>1$, the upper bound in~\eqref{eq:DeterministicErrorBound} does not decay to 
zero as $M\rightarrow\infty$. However, there is again a more interesting phenomenon here to investigate. 
Given an $1 \leq x < (\rho+\rho^{-1})/2$ and a perturbation level $\eps$, we 
can select an integer $M$ that (nearly) minimizes the bound in~\eqref{eq:DeterministicErrorBound}. 
Under the assumption that $M\leq\constant\sqrt{N}$, using Theorem~\ref{thm:ChebyshevSingularValues},
and by ignoring quantities that grow like a polynomial in $M$ (and otherwise depend on $\rho$), we have
\begin{equation}
  |f(x) - p_M(x)| \lesssim  \frac{Qr}{1-r} r^M + (\rho r)^M\eps. 
\label{eq:LastBound}
\end{equation} 

We now turn to the proof of Theorem~\ref{teo:main}. We wish to find an integer $\tilde{M}$ that 
approximately balances the orders of magnitude of the two terms in~\eqref{eq:LastBound}. A simple choice is
\begin{equation}
\tilde{M} = \lfloor \log(Q/\eps)/\log\rho \rfloor.
\label{eq:OptimalM}
\end{equation}
In this case, we get
\begin{equation}
|f(x) - p_{\tilde{M}}(x)| \lesssim \frac{Q}{1-r} \left( \frac{\|\underline{\eps}\|_\infty}{Q} \right)^{-\log r/\log\rho}.
\label{eq:Lastlastbound}
\end{equation} 
Notice that the integer rounding that occurs in~\eqref{eq:OptimalM} only contributes at most a factor $\rho$ to the bound in~\eqref{eq:Lastlastbound} and this 
is absorbed in the constant. We are now ready to prove Theorem~\ref{teo:main}. 

In the oversampled case, i.e., $\tilde{M} < \frac{1}{2} \sqrt{N}$, we can let $M^* = \tilde{M}$, and the bound in~\eqref{eq:Lastlastbound} is the desired result
from Theorem~\ref{teo:main}. 

In the undersampled case, i.e., $\tilde{M} \geq \frac{1}{2} \sqrt{N}$, the value of $\tilde{M}$ is too large to be admissible, so we let $\smash{M^* = \tfrac{1}{2} \sqrt{N}}$ instead. In this case, the term $(Qr/(1-r)) r^M$ dominates in equation~\eqref{eq:LastBound}, and we get
\begin{equation}
|f(x) - p_M(x)| \lesssim \frac{Q}{1-r} r^{\frac{1}{2} \sqrt{N}}.
\label{eq:Lastlastbound-under}
\end{equation}
This concludes the proof of Theorem \ref{teo:main}.  

\subsection{Minimax rate for extrapolation with deterministic perturbations}
One may wonder if it is possible to construct a more accurate extrapolant from perturbed equally spaced samples 
with piecewise polynomials, rational functions, or some other procedure.  Here, we turn our attention to the proof 
of Proposition~\ref{teo:minimax}, which shows that this is not possible.  We achieve this by constructing an analytic function 
$g(x)$ such that $\sup_{x\in[-1,1]} |g(x)|\leq\eps$ and grows as fast as possible for $x>1$.  Any extrapolation procedure cannot distinguish between 
$g(x)$ and the zero function (because function values can be perturbed by $\eps$) and 
therefore, no extrapolation procedure can deliver an accuracy better than $|g(x)|/2$ at $x\in I_\rho$, for both
$g$ and the zero function simultaneously.

Consider the function defined by 
\[
g(x) = \frac{\rho - 1}{\rho} \sum_{n \geq K} \rho^{-n} T_n(x),\qquad K = \lfloor \log(1/\eps)/\log\rho \rfloor.
\]
For $x \in [-1,1]$, it is simple to bound $g(x)$ as follows:
\[
|g(x)| \leq \frac{\rho - 1}{\rho} \sum_{n \geq K} \rho^{-n} = \rho^{-K-1} \leq \eps. 
\]
To formulate a lower bound on $|g(x)|$ for $x \geq 1$, it is helpful to make use of the ``partial generating function" given by
\[
\sum_{n \geq K} \rho^{-n} T_n(x) = \frac{\rho^{-K-1} T_{K+1}(x) - \rho^{-K-2} T_K(x)}{1 - 2 \rho^{-1} x + \rho^{-2}}, \qquad K\geq 1,
\]
which can easily be proved by induction on $K$. The denominator can also be written as 
$1 - 2 \rho^{-1} x + \rho^{-2} = 2\rho^{-1} ( \tfrac{1}{2}(\rho + \rho^{-1}) - x )$,
which readily shows that $f \in B_{\rho'}(Q')$ for every $\smash{\rho' < \rho}$, and for some $\smash{Q' > 0}$. We can now let $\smash{\rho r = x + \sqrt{x^2 - 1}}$, and use the formula $T_n(x) = ((\rho r)^n + (\rho r)^{-n})/2$ to obtain
\begin{align*}
2 \frac{\rho^{K+2}}{\rho - 1} (1 - 2 \rho^{-1} x + \rho^{-2}) g(x) &= (\rho r)^{K+1} + (\rho r)^{-(K+1)} - \rho^{-1} (\rho r)^K - \rho^{-1} (\rho r)^{-K} \\ 
&= (\rho r - \rho^{-1}) (\rho r)^K + ((\rho r)^{-1} - \rho^{-1}) (\rho r)^{-K} \\
&\geq (1 - \rho^{-1}) (\rho r)^K,
\end{align*}
where in the last inequality we used $1 \leq \rho r \leq \rho$.
Next, it is easy to see that
\[
1 - 2 \rho^{-1} x + \rho^{-2} = (1 - \rho^{-1} \rho_{+}) (1 - \rho^{-1} \rho_{-}),
\]
where $\rho_{\pm} = x \pm \sqrt{x^2 - 1}$. We have $\rho^{-1} \rho_{+} = r$, while $0 \leq \rho^{-1} \rho_{-} \leq \rho^{-1}$, so 
\[
1 - 2 \rho^{-1} x + \rho^{-2} \leq 1-r.
\]
Therefore, we conclude that 
\[
g(x) \geq \frac{\rho^{-2}(1-\rho^{-1}) (\rho - 1)}{2} \frac{r^K}{1-r} \equiv c_\rho \frac{r^K}{1-r},\qquad 1\leq x< \frac{\rho+\rho^{-1}}{2},
\]
where $c_\rho$ is a constant that only depends on $\rho$.  By recalling that the value of $K$ is $\lfloor \log(1/\eps)/\log\rho \rfloor$, we obtain
\[
g(x) \geq c_\rho \frac{1}{1-r} \eps^{- \log r/\log\rho}.
\]
This completes the proof of Proposition~\ref{teo:minimax}.



\section{A faster algorithm for equally spaced least squares polynomial fitting}\label{sec:LeastSquaresAlgorithm}
While conducting numerical experiments for this paper, we derived a faster direct 
algorithm for constructing the normal equations. We describe this algorithm now. 

When $M<N$, the least squares problem in~\eqref{eq:leastSquares} is solved 
by the normal equations in~\eqref{eq:normalEquations}, which is an 
$(M+1)\times (M+1)$ linear system for the 
Chebyshev coefficients of $p_M(x)$. 
Since $\mathbf{T}_M(\underline{x}^{equi})$ is an $(N+1)\times (M+1)$ matrix it naively
costs $\mathcal{O}(M^2N)$ operations to compute the matrix-matrix product 
$\mathbf{T}_M(\underline{x}^{equi})^*\mathbf{T}_M(\underline{x}^{equi})$, 
$\mathcal{O}(MN)$ operations to compute the matrix-vector $\mathbf{T}_M(\underline{x}^{equi})^*\underline{f}$,
and $\mathcal{O}(M^3)$ operations to solve the resulting linear system. 
In this section,  we show how the matrix-matrix 
product $\mathbf{T}_M(\underline{x}^{equi})^*\mathbf{T}_M(\underline{x}^{equi})$ can 
be computed in just $\mathcal{O}(M^3)$ operations. When $M = \lfloor \constant\sqrt{N}\rfloor$
this is a computational saving as it reduces $\mathcal{O}(N^2)$ operations 
to construct and solve the normal equations to $\mathcal{O}(N^{3/2})$ operations.

This is a direct algorithm for constructing and solving the normal equations. Alternatively, 
one may use an iterative method such as the conjugate gradient method on the normal equations, where the nonuniform FFT is employed 
to apply the matrix $\mathbf{T}_M(\underline{x}^{equi})^*\mathbf{T}_M(\underline{x}^{equi})$ to a vector in 
$\mathcal{O}(N\log N)$ operations. While each iteration is fast, the condition number 
of $\mathbf{T}_M(\underline{x}^{equi})^*\mathbf{T}_M(\underline{x}^{equi})$
is $\mathcal{O}(M)$ (see~\ref{eq:ChebyshevConditionNumber}) so one expects the conjugate gradient method 
to require about $\mathcal{O}(M^{1/2})$ iterations. Hence, the 
algorithmic complexity of the iteration approach is $\mathcal{O}(M^{1/2}N\log N)$ operations.  
We prefer the direct approach because the cost of constructing the normal equations is independent 
of $N$.

Our key observation is that the $(m,n)$ entry of $\mathbf{T}_M(\underline{x}^{equi})^*\mathbf{T}_M(\underline{x}^{equi})$
given by $\sum_{k=0}^N T_m(x_k^{equi})T_n(x_k^{equi})$, which can be thought of as 
a trapezium rule approximation of an integral. To see this recall that for a 
continuous function $h(x)$ the trapezium rule approximation to its integral is 
\begin{equation}
 \int_{-1}^1 h(x)dx \approx \frac{1}{N}\left(h(x_0^{equi}) + 2h(x_1^{equi})+\cdots+2h(x_{N-1}^{equi}) + h(x_N^{equi}) \right) 
\label{eq:trapeziumRule}
\end{equation} 
and hence, we have for $0\leq m,n\leq M$
\begin{equation}
\int_{-1}^1 T_m(x)T_n(x)dx =   \frac{2}{N}\sum_{k=0}^N T_m(x_k^{equi})T_n(x_k^{equi}) - \frac{1}{N}(1+(-1)^{m+n}) - \tilde{E}_{mn},
\label{eq:integralTmTn}
\end{equation} 
where $\smash{\tilde{E}_{mn}}$ is the error in the trapezium rule approximation. After rearranging, calculating the integral in~\eqref{eq:integralTmTn} 
analytically, and noting that $\sum_{k=0}^N T_m(x_k^{equi})T_n(x_k^{equi}) = 0$ if $m+n$ is odd, we conclude that 
\begin{equation}
 \sum_{k=0}^N T_m(x_k^{equi})T_n(x_k^{equi}) = \begin{cases} \frac{N}{2(1-(m+n)^2)} +\frac{N}{2(1-(m-n)^2)} + 1 +  \frac{N}{2}\tilde{E}_{mn}, & m+n \text{ is even}, \cr 0, & \text{otherwise}.  \end{cases}
\label{eq:analyticTmTn}
\end{equation}

The sum on the lefthand side of~\eqref{eq:analyticTmTn}, which is used when 
naively computing the $(m,n)$ entry of $\mathbf{T}_M(\underline{x}^{equi})^*\mathbf{T}_M(\underline{x}^{equi})$,
costs $\mathcal{O}(N)$ operations to evaluate. While 
the righthand side requires $\mathcal{O}(M)$ operations because of the 
Euler--Maclaurin error formula for $\tilde{E}_{mn}$~\cite[Cor.~3.3]{Javed_14_01}. 
That is, 
\[
\tilde{E}_{mn} = 2\sum_{s=1,s \text{ odd}}^{m+n-1} \frac{((T_m(1)T_n(1))^{(s)} - (T_m(-1)T_n(-1))^{(s)})2^sB_{s+1}}{N^{s+1}(s+1)!},
\]
where $B_{s}$ is the $s$th Bernoulli number and $(T_m(1)T_n(1))^{(s)}$ is the $s$th 
derivative of $T_m(x)T_n(x)$ evaluated at $1$. By calculating $(T_m(\pm1)T_n(\pm1))^{(s)}$ analytically and rearranging 
we have
\begin{equation}
 \tilde{E}_{mn} = \frac{2}{N} \sum_{s=1,s \text{ odd}}^{m+n-1} \left[\prod_{j=0}^{s-1} \frac{(m-n)^2-j^2}{N(j+1/2)} + \prod_{j=0}^{s-1} \frac{(m+n)^2-j^2}{N(j+1/2)}\right]\frac{B_{s+1}}{(s+1)!}. 
\label{eq:bernoulli}
\end{equation} 
Here, the summand contains two products. The first product depends on $m-n$ and 
the other on $m+n$, in a Toeplitz-plus-Hankel structure. For $0\leq m,n\leq M$ 
there are only $M+1$ possible values for $(m-n)^2$, $2M+1$ possible 
values of $(m+n)^2$, and at most $2M-1$ values of $1\leq s\leq m+n-1$.  This 
means that there are $\mathcal{O}(M^2)$ different products that appear in the set
of formulas for $\tilde{E}_{mn}$, $0\leq m,n\leq M$, and these can be computed in a 
total of $\mathcal{O}(M^2)$ operations.  In principle each $\tilde{E}_{mn}$ for $0\leq m,n\leq M$ 
sums up $\mathcal{O}(M)$ of these products weighted by Bernoulli numbers, requiring
a total of $\mathcal{O}(M^3)$ operations to compute $\tilde{E}$. However, the weighted Bernoulli 
numbers decay so rapidly to zero that we truncate the sums in~\eqref{eq:bernoulli} if $s>10$. Therefore, 
we can compute $\tilde{E}$ in $\mathcal{O}(M^2)$ operations.
Once the matrix $\tilde{E}$ is calculated the matrix $\mathbf{T}_M(\underline{x}^{equi})^*\mathbf{T}_M(\underline{x}^{equi})$ 
can immediately computed from~\eqref{eq:analyticTmTn}. 

For the Bernoulli numbers in~\eqref{eq:bernoulli} we tabulate
$B_{s+1}/(s+1)!$ for $2\leq s\leq 10$ and then use the first six terms in an
asymptotic expansion for $s>10$, i.e.,
\[
 \frac{B_{s+1}}{(s+1)!} \approx (-1)^{(s+3)/2}\frac{2}{(2\pi)^{s+1}}\left(1 + \frac{1}{2^{s+1}} +\frac{1}{3^{s+1}} + \frac{1}{4^{s+1}} + \frac{1}{5^{s+1}}+ \frac{1}{6^{s+1}}\right),\quad \text{$s$ odd}.
\]
This alleviates overflow issues with computing $B_{s+1}$ and $(s+1)!$ separately when $s$ is large.  

Figure~\ref{fig:ComputationalTimings} shows the computational timings for constructing 
the normal equations, 
$\mathbf{T}_M(\underline{x}^{equi})^*\mathbf{T}_M(\underline{x}^{equi})\underline{c} = \mathbf{T}_M(\underline{x}^{equi})^*\underline{f}$, 
using the naive approach and the algorithm described in this section. When $M=\lfloor\constant\sqrt{N}\rfloor$
and $M>50$, it is computationally more efficient to construct the normal equations using this new direct algorithm. 
\begin{figure} 
\centering
 \begin{overpic}[width=.6\textwidth]{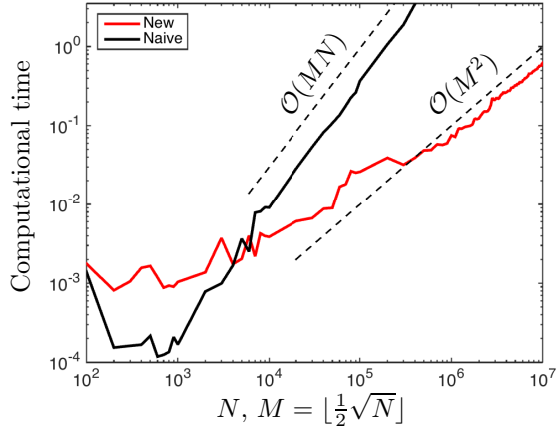}
  \put(0,20) {\rotatebox{90}{Computational time}}
  \put(35,-1) {$N$, $M = \lfloor \constant\sqrt{N}\rfloor$}
  \put(45,50) {\rotatebox{55}{$\mathcal{O}(MN)$}}
  \put(70,50) {\rotatebox{40}{$\mathcal{O}(M^2)$}}
 \end{overpic}
 \caption{Computational times for constructing the normal equations in~\eqref{eq:normalEquations} 
 when $M = \lfloor \constant\sqrt{N}\rfloor$. We compare the naive approach (black) 
 that directly computes the matrix-matrix product $\mathbf{T}_M(\underline{x}^{equi})^*\mathbf{T}_M(\underline{x}^{equi})$ and 
 the approach (red) described in Section~\ref{sec:LeastSquaresAlgorithm}. While the naive approach 
 has a theoretical complexity of $\mathcal{O}(M^2N)$ when $M = \lfloor \constant\sqrt{N}\rfloor$, the 
 dominating computational cost for $N<10^6$ is the $\mathcal{O}(MN)$ cost of 
 evaluating $\mathbf{T}_M(\underline{x}^{equi})$ using a three-term recurrence.} 
 \label{fig:ComputationalTimings}
\end{figure}


\section*{Acknowledgments}
We wish to thank Mohsin Javed for his correspondence regarding the Euler--Maclaurin error formula 
in~\cite{Javed_14_01}. We also thank Ben Adcock for directing us to the literature
on stable reconstruction and telling us about~\cite{Adcock_15_02}.  LD is grateful 
to AFOSR, ONR, NSF, and Total SA for funding.

\appendix 

\section{Three applications of Gerschgorin's circle Theorem}\label{sec:Appendix}
Gerschgorin's circle Theorem can be used to bound the spectrum of a square
matrix as it restricts the eigenvalues of a matrix $A$ to the union 
of disks centered at the diagonal entries of $A$~\cite[p.~320]{Golub_96_01}. 
\begin{theorem}
 Let $A\in\mathbb{C}^{n\times n}$ with entries $a_{ij}$. Then, 
 the eigenvalues of $A$ lie within at least one of the {\em Gerschgorin 
 disks},
 \[
  \left|z-a_{ii}\right|\leq \sum_{j=1,j\neq i}^n |a_{ij}|,\qquad 1\leq i\leq n. 
 \] 
 \label{thm:Gerschgorin}
\end{theorem}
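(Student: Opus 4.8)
The plan is to run the standard eigenvector-component argument. First I would fix an eigenvalue $\lambda$ of $A$ together with an associated eigenvector $x = (x_1,\ldots,x_n)^T \neq 0$, so that $Ax = \lambda x$. The key preparatory step is to choose the index $i$ for which $|x_i|$ is largest, i.e., $|x_i| = \max_{1\le k\le n}|x_k|$; since $x \neq 0$ we have $|x_i| > 0$, which is exactly what will make the eventual division legitimate and what will supply the bound needed on the off-diagonal terms.

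Next I would read off the $i$th row of the eigenvalue equation, $\sum_{j=1}^n a_{ij} x_j = \lambda x_i$, and isolate the diagonal term to obtain $(\lambda - a_{ii}) x_i = \sum_{j\neq i} a_{ij} x_j$. Taking absolute values and applying the triangle inequality gives $|\lambda - a_{ii}|\,|x_i| \le \sum_{j\neq i} |a_{ij}|\,|x_j|$. Now the choice of $i$ pays off: each $|x_j| \le |x_i|$, so the right-hand side is at most $|x_i| \sum_{j\neq i} |a_{ij}|$. Dividing through by $|x_i| > 0$ yields $|\lambda - a_{ii}| \le \sum_{j\neq i} |a_{ij}|$, which is precisely the statement that $\lambda$ lies in the $i$th Gerschgorin disk. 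Since $\lambda$ was an arbitrary eigenvalue of $A$, every eigenvalue lies in at least one such disk.

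There is essentially no hard part here: the entire content is the observation that restricting attention to a component of maximal modulus simultaneously guarantees a nonzero denominator and furnishes $|x_j|\le|x_i|$. The only minor point to be careful about is the (harmless) ambiguity when several components share the maximal modulus — any one of them works. If one instead wanted the refined version (that a connected component of the union of disks built from $k$ rows contains exactly $k$ eigenvalues), I would deform $A$ to its diagonal via $A(t) = D + t(A-D)$ and use continuity of the eigenvalues in $t$; but for the statement as displayed, the one-line computation above is all that is required.
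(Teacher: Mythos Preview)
Your argument is correct and is the standard textbook proof of Gerschgorin's circle theorem. The paper does not actually prove this statement: it is merely stated as a classical result with a citation to Golub and Van Loan, \emph{Matrix Computations}, p.~320, and then used as a tool in the subsequent lemmas. The eigenvector-component argument you give is precisely the proof one finds in that reference.
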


For a given square matrix $A$, the eigenvalue bounds given in 
Theorem~\ref{thm:Gerschgorin} can be quite weak. A standard trick 
is to sharpen the bounds by using a carefully selected similarity 
transform. For any invertible matrix $P$ the matrix $PAP^{-1}$ has the 
same spectrum as $A$, but may have Gerschgorin disks with smaller radii and 
this can sharpen a bound on an eigenvalue of interest. Here, we apply Gerschgorin's circle Theorem to three matrices and 
select diagonal similarity transforms to improve the bounds. 

First, we use Gerschogrin's circle Theorem to bound the spectrum of the matrix $D+C$
from Theorem~\ref{thm:LegendreVandermondeSingularValues}. This result is used 
to then derive a bound on the singular values of $\mathbf{P}_M(\underline{x}^{equi})$. 
In Figure~\ref{fig:Gcircles} we draw the Gerschgorin circles for $D+C$ and 
$P (D+C) P^{-1}$, where $P = {\rm diag}(D_{00},\ldots,D_{MM})$. It is this diagram 
that motivates the proof of the lemma below. 
\begin{figure} 
\begin{minipage}{.49\textwidth} 
\begin{overpic}[width=\textwidth]{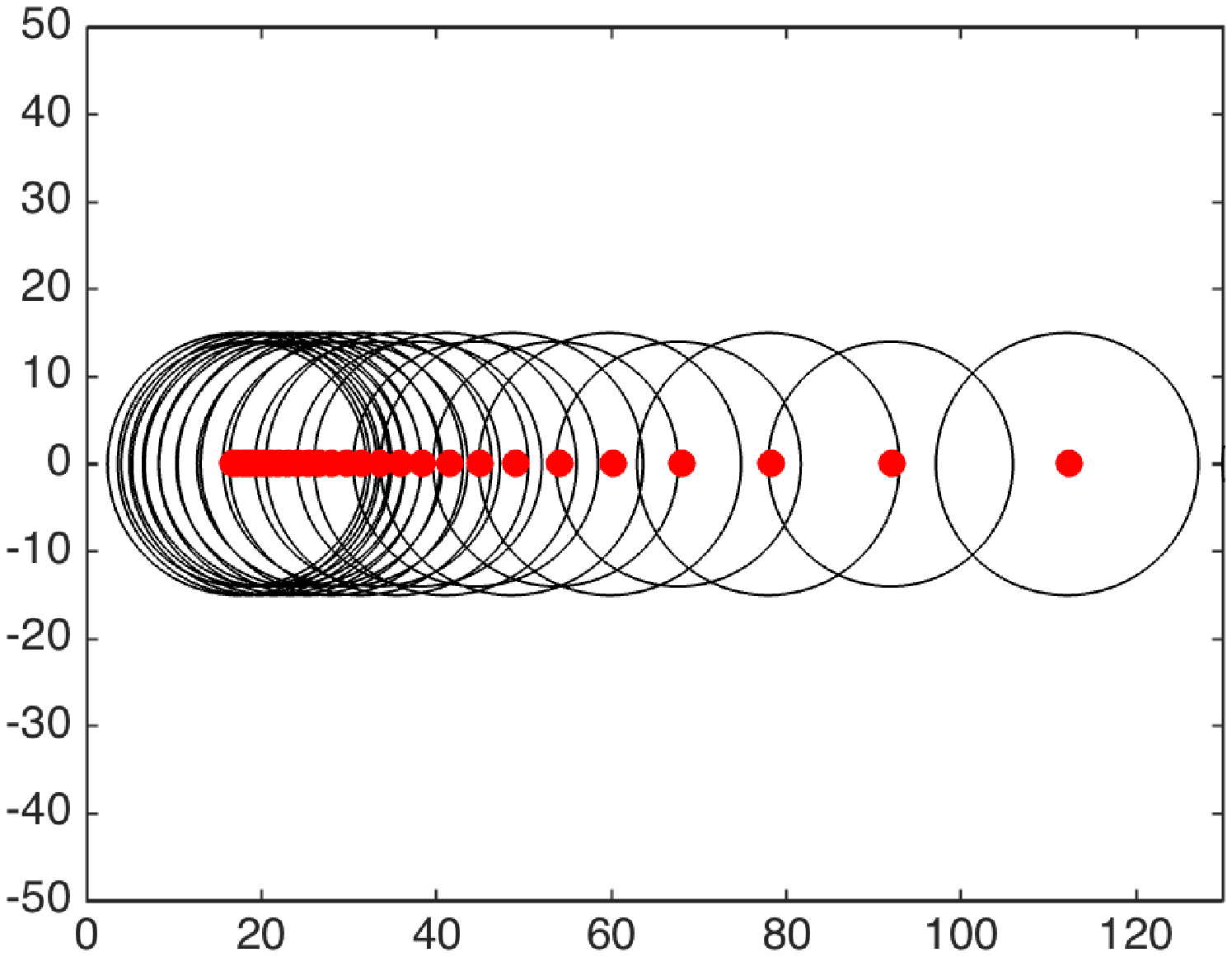}
 \put(47,0) {${\rm Re}(z)$}
 \put(0,30) {\rotatebox{90}{${\rm Im}(z)$}}
\end{overpic}
\end{minipage}
\begin{minipage}{.49\textwidth} 
\begin{overpic}[width=\textwidth]{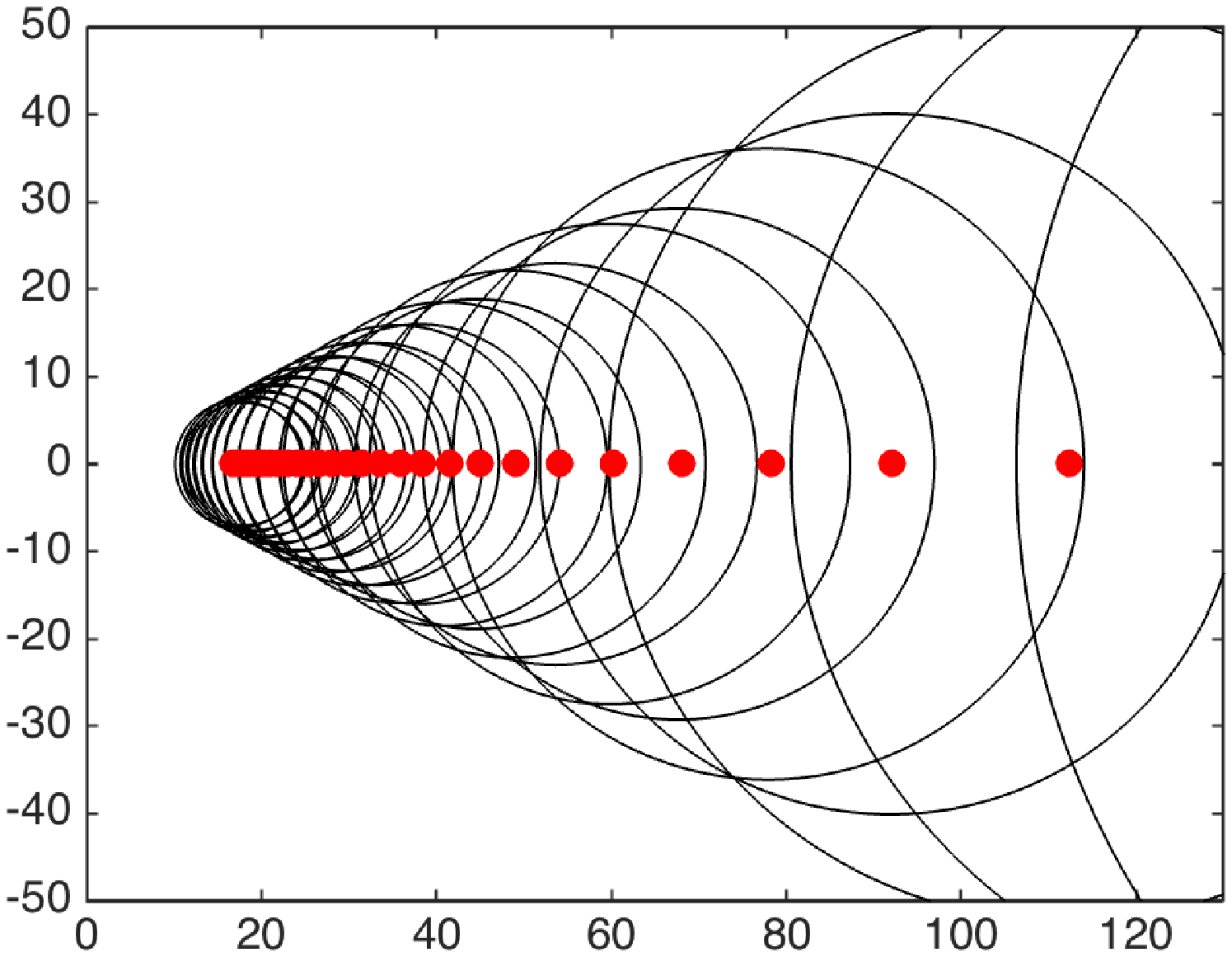}
  \put(47,0) {${\rm Re}(z)$}
 \put(0,30) {\rotatebox{90}{${\rm Im}(z)$}}
\end{overpic}
\end{minipage}
\caption{ Left:   The Gerschgorin disks for the matrix $D + C$ near ${\rm Re}(z)=0$ used in 
Lemma~\ref{lem:GcirclesDplusC} when $N = 1,\!000$ and $M = 30$.
Without a similarity transform the Gerschgorin circles give a poor lower bound on 
$\lambda_{M+1}(D+C)$. Right: The Gerschgorin disks for $P(D+C)P^{-1}$ near ${\rm Re}(z)=0$, where 
$P = {\rm diag}(D_{00},\ldots,D_{MM})$. The Gerschgorin disks now give a 
better lower bound on $\lambda_{M+1}(D+C)$. Another diagram shows that 
a similarity transform is not needed for bounding $\lambda_1(D+C)$.}
 \label{fig:Gcircles}
\end{figure}

\begin{lemma} 
 For integers $M$ and $N$ satisfying $M\leq N$, let $D$ be the diagonal matrix with entries $D_{mm} = N/(2m+1)$ for $0\leq m\leq M$ 
 and $C$ be the $(M+1)\times (M+1)$ matrix given in~\eqref{eq:Cmatrix}.  
 The following bounds on the maximum and minimum eigenvalues hold:
\[
 \lambda_1(D+C)\leq \frac{2N+M+3}{2}, \qquad \lambda_{M+1}(D+C)\geq \frac{N-\tfrac{1}{2}M^2}{2M+1}.
\]
\label{lem:GcirclesDplusC}
\end{lemma}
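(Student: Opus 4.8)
The plan is to apply Gerschgorin's circle Theorem (Theorem~\ref{thm:Gerschgorin}) to the matrix $D+C$, using two different diagonal similarity transforms for the two bounds. Observe first that $D+C$ is a real symmetric matrix, so its eigenvalues are real and the Gerschgorin disks reduce to intervals on the real line. The diagonal entries are $(D+C)_{mm} = N/(2m+1) + 1$, and the off-diagonal entries in row $m$ are $(D+C)_{mn} = 1$ whenever $m+n$ is even (and $m\neq n$) and $0$ otherwise. So in row $m$ there are roughly $M/2$ nonzero off-diagonal entries, each equal to $1$, giving a raw Gerschgorin radius of at most $\lfloor M/2\rfloor$.

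For the upper bound $\lambda_1(D+C) \leq (2N+M+3)/2$, no transform is needed: the largest diagonal entry is $(D+C)_{00} = N+1$, and the row sum of off-diagonal entries in row $0$ is at most $\lceil M/2\rceil \leq (M+1)/2$. Since every diagonal entry is at most $N+1$ and the off-diagonal row sums are at most $(M+1)/2$ in every row, Gerschgorin gives $\lambda_1(D+C) \leq N+1 + (M+1)/2 = (2N+M+3)/2$. (One should double-check the exact count of off-diagonal nonzeros in row $0$ — the entries $(D+C)_{0n}$ are $1$ for $n=2,4,\ldots$, so there are at most $\lfloor M/2\rfloor \leq M/2$ of them, which still yields the stated bound, or one can absorb a $+1/2$ slack.)

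For the lower bound $\lambda_{M+1}(D+C) \geq (N - \tfrac12 M^2)/(2M+1)$, the raw Gerschgorin estimate is too weak because the smallest diagonal entry $(D+C)_{MM} = N/(2M+1)+1$ is only of order $N/M$, while the raw radius is of order $M$, and $N/M - M$ could be much smaller than $N/(2M+1)$ when $M \sim \sqrt N$. The fix is the diagonal similarity transform $P = \mathrm{diag}(D_{00},\ldots,D_{MM})$ suggested by Figure~\ref{fig:Gcircles}. The transformed matrix $P(D+C)P^{-1}$ has the same diagonal as $D+C$, but its $(m,n)$ off-diagonal entry becomes $D_{mm}/D_{nn} = (2n+1)/(2m+1)$ when $m+n$ is even and $m\neq n$. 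The row-$m$ off-diagonal sum is then at most $\frac{1}{2m+1}\sum_{n=0}^{M}(2n+1) \leq \frac{(M+1)^2}{2m+1}$, which is comparable to the diagonal entry $N/(2m+1)$ rather than dwarfing it. Gerschgorin then gives $\lambda_{M+1}(D+C) \geq \min_m\bigl( N/(2m+1) + 1 - (M+1)^2/(2m+1)\bigr)$; the worst case is $m=M$ (largest denominator cancellation is beneficial, but the numerator $N-(M+1)^2$ is what matters), and after arithmetic this should collapse to the claimed $(N-\tfrac12 M^2)/(2M+1)$ up to adjusting the constant in front of $M^2$ — one may need a slightly sharper bound on $\sum_{n, m+n \text{ even}} (2n+1)$ restricted to the even-offset entries, which is roughly half the full sum, to land exactly on $\tfrac12 M^2$.

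The main obstacle is bookkeeping the parity constraint and the exact constants: the off-diagonal entries are nonzero only when $m+n$ is even, so the relevant row sums are over $n$ of fixed parity, and getting the coefficient of $M^2$ down to $\tfrac12$ (rather than a looser $1$) requires carefully using that only about half the indices $n$ contribute. I would handle this by splitting into the cases $m$ even and $m$ odd and bounding $\sum_{n \equiv m \,(2)} (2n+1)$ explicitly, then verifying the resulting minimum over $m$ is attained at $m=M$ and simplifies as stated. Everything else is a direct and routine application of Theorem~\ref{thm:Gerschgorin}.
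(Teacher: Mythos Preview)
Your proposal is correct and mirrors the paper's proof: Gerschgorin applied directly to $D+C$ for the upper bound, and Gerschgorin applied to $P(D+C)P^{-1}$ with $P=\mathrm{diag}(D_{00},\ldots,D_{MM})$ for the lower bound. The paper carries out exactly the parity-restricted row-sum bound you anticipate, obtaining $\sum_{k:\,k+j\text{ even}}(2k+1)\leq (M+1)(M+2)/2$, which after the arithmetic gives $\frac{N}{2M+1}+1-\frac{(M+1)(M+2)}{2(2M+1)}=\frac{N-\tfrac12 M^2}{2M+1}+\frac{M}{2(2M+1)}\geq \frac{N-\tfrac12 M^2}{2M+1}$; this is the only bookkeeping you left open.
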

\begin{proof} 
The matrix $D+C$ is symmetric so all the eigenvalues are real. 
By Theorem~\ref{thm:Gerschgorin} applied to $D+C$ (without a similarity transform) 
we find that 
\[
 \lambda_1(D+C) \leq \max_{0\leq j\leq M} \left\{(D+C)_{jj} + \sum_{k=0,k\neq j}^M |C_{jk}| \right\}\leq N + 1 + \tfrac{M+1}{2},
\]
as required. For $\lambda_{M+1}$ we consider the matrix $P (D+C) P^{-1}$, 
where $P$ is the diagonal matrix ${\rm diag}(D_{00},\ldots,D_{MM})$. By Theorem~\ref{thm:Gerschgorin} we have
\[
\begin{aligned}
 \lambda_{M+1}(D+C) &\geq \min_{0\leq j\leq M} \left\{(P(D+C)P^{-1})_{jj} - \sum_{k=0,k\neq j}^M |(PCP^{-1})_{jk}| \right\}\\
 &\geq \frac{N}{2M+1} + 1 - \frac{N}{2M+1}\frac{(M+1)(M+2)}{2N}\\
 &\geq \frac{N-\tfrac{1}{2}M^2}{2M+1} + \frac{M}{2(2M+1)} \geq \frac{N-\tfrac{1}{2}M^2}{2M+1},
\end{aligned}
\]
as required.
\end{proof}

The second application of Gerschgorin's circle Theorem is on the matrix 
$F + C$ appearing in Theorem~\ref{thm:ChebyshevSingularValues}, where an upper 
bound on the maximum eigenvalue of $F+C$ is required. A similarity 
transform is not needed here.
\begin{lemma} 
 Let $M$ and $N$ be integers satisfying $M\leq N$. Let $F$ be the matrix given in~\eqref{eq:F} 
 and $C$ be the $(M+1)\times (M+1)$ matrix given in~\eqref{eq:Cmatrix}.  
 The following bound on the maximum eigenvalue holds:
\[
 \lambda_1(F+C)\leq \frac{4N+M+1}{2}.
\]
\label{lem:GcirclesDplusC2}
\end{lemma}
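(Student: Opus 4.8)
The plan is to apply Gerschgorin's circle theorem (Theorem~\ref{thm:Gerschgorin}) to $F+C$ \emph{without} a similarity transform, exactly as announced. Since $F$ (from~\eqref{eq:F}) and $C$ (from~\eqref{eq:Cmatrix}) are both real symmetric, $F+C$ is Hermitian with real spectrum, so it suffices to show that every quantity $(F+C)_{jj}+\sum_{k\neq j}|(F+C)_{jk}|$, $0\le j\le M$, is at most $(4N+M+1)/2$.

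First I would dispose of the diagonal: from~\eqref{eq:F} and~\eqref{eq:Cmatrix}, $(F+C)_{jj}=\tfrac N2\bigl(1+\tfrac1{1-4j^2}\bigr)+1$, which is $N+1$ when $j=0$ and at most $\tfrac N2+1$ when $j\ge1$ (because $\tfrac1{1-4j^2}<0$); so the largest diagonal entry is $N+1$. For the off-diagonal sum one records the parity structure: both $F_{jk}$ and $C_{jk}$ vanish unless $j+k$ is even (this is also why the formula~\eqref{eq:F} need only be read for $j+k$ even, where it is well defined), so only $k$ of the same parity as $j$ contribute, and for such $k\neq j$ one has $j+k\ge2$ and $|j-k|\ge2$, whence $|F_{jk}|\le\tfrac N2\bigl(\tfrac1{(j+k)^2-1}+\tfrac1{(j-k)^2-1}\bigr)$. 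The one non-automatic ingredient is the partial-fraction telescoping $\tfrac1{m^2-1}=\tfrac12\bigl(\tfrac1{m-1}-\tfrac1{m+1}\bigr)$, which gives $\sum_{m\ge2,\,m\text{ even}}\tfrac1{m^2-1}=\tfrac12$. Applying this to the $j+k$-values (a subset of $\{2,4,\dots\}$) and to the $j-k$-values on either side of $j$ yields $\sum_{k\neq j}|F_{jk}|\le\tfrac N2\bigl(\tfrac12+1\bigr)=\tfrac{3N}4$, while counting the admissible $k\neq j$ shows $\sum_{k\neq j}|C_{jk}|\le M/2$.

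Putting the pieces together, each Gerschgorin centre-plus-radius is bounded by $(N+1)+\tfrac{3N}4+\tfrac M2$; in fact the $j=0$ row is the worst one, and there the $F$ off-diagonal sum is only $N/2$, giving the sharper $(N+1)+\tfrac N2+\tfrac M2=\tfrac{3N}2+1+\tfrac M2$, which is at most $(4N+M+1)/2$ whenever $N\ge1$. I do not expect any genuine obstacle: the proof is entirely routine once the shared $m+n$-even support of $F$ and $C$ is tracked and the telescoping sum is in hand, and the constant $(4N+M+1)/2$ is deliberately loose, leaving comfortable slack.
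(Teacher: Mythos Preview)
Your proposal is correct and follows essentially the same approach as the paper: Gerschgorin's circle theorem applied to $F+C$ with no similarity transform. The paper's own proof is a single line asserting $\max_j\bigl((F+C)_{jj}+\sum_{k\ne j}|F_{jk}+C_{jk}|\bigr)\le 2N+(M+1)/2$ without any intermediate steps; your telescoping computation with $\tfrac{1}{m^2-1}=\tfrac12(\tfrac{1}{m-1}-\tfrac{1}{m+1})$ supplies exactly the missing details, and in fact your row-$0$ bound $\tfrac{3N}{2}+1+\tfrac{M}{2}$ is slightly sharper than what the paper quotes. Your caveat ``whenever $N\ge 1$'' is appropriate: the lemma as stated fails at $N=0$ (then $F+C=[1]$ but the asserted bound is $\tfrac12$), so the paper is tacitly assuming $N\ge 1$ as well, consistent with its use only for $N\ge 4$.
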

\begin{proof} 
The matrix $F+C$ is symmetric so all the eigenvalues are real. By Theorem~\ref{thm:Gerschgorin} applied to $D+C$ we 
have 
\[
 \lambda_1(F+C) \leq \max_{0\leq j\leq M} \left\{(F+C)_{jj} + \sum_{k=0,k\neq j}^M |F_{jk} + C_{jk}| \right\}\leq 2N + \frac{M+1}{2},
\]
as required.
\end{proof}

Lemma~\ref{lem:GcirclesDplusC} and Lemma~\ref{lem:GcirclesDplusC2} are 
easy applications of Gerschgorin's circle Theorem; however, the
next application is more technical. For Theorem~\ref{thm:ChebyshevSingularValues}, we want to bound $\|S\|_2$, where $S$ is the change of basis
matrix given in~\eqref{eq:S}. It is also not clear if the Gerschgorin's circle Theorem is 
applicable here. Fortunately, $S$ is a matrix with nonnegative entries so that it is possible to 
bound $\|S\|_2$ by the spectrum of its symmetric part~\cite{Goldberg_75_01}. 

Let $r(S)=\sup\left\{|v^*Sv| : v\in\mathbb{C}^{M\times 1}, v^*v = 1\right\}$ be 
the numerical range of $S$. Then, $\|S\|_2\leq 2r(S)$. Since $S$ has 
nonnegative entries we have~\cite[Thm.~1]{Goldberg_75_01}
\[
r(S) \leq \max\left\{|\lambda| : S^{+}v=\lambda v, v\neq 0\right\},
\]
where $S^{+} = (S+S^*)/2$ is the symmetric part of $S$. Therefore, we can use 
Gerschgorin's circle Theorem to bound $\max_{1\leq i\leq M+1}|\lambda_{i}(S^{+})|$ and then use
\begin{equation}
 \|S\|_2 \leq 2\max_{1\leq i\leq M+1}|\lambda_{i}(S^{+})|.
\label{eq:StrangeBound}
\end{equation} 
It is technical to bound $\max_{1\leq i\leq M+1}|\lambda_{i}(S^{+})|$ using Gerschgorin's circle 
Theorem. In Figure~\ref{fig:GcirclesSplus} we show the Gerschgorin's disk for 
$S^{+}$ and $PS^{+}P^{-1}$, where $P_{00} = 1$ and $P_{ii} = \sqrt{i}$ for $i\geq 1$. 
The circles are tight if we work with $PS^{+}P^{-1}$.

\begin{figure} 
\begin{minipage}{.49\textwidth} 
\begin{overpic}[width=\textwidth]{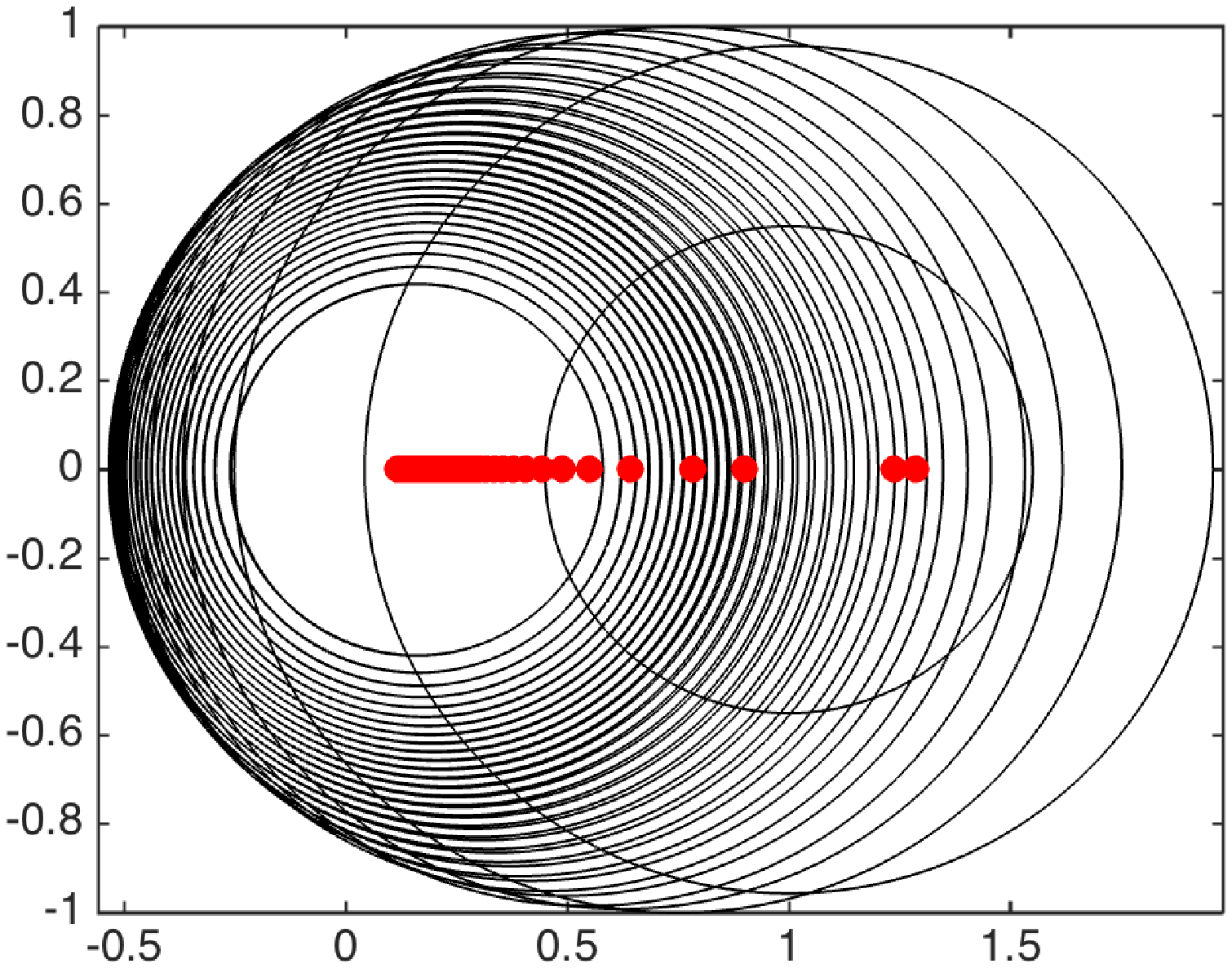}
 \put(47,0) {${\rm Re}(z)$}
 \put(0,30) {\rotatebox{90}{${\rm Im}(z)$}}
\end{overpic}
\end{minipage}
\begin{minipage}{.49\textwidth} 
\begin{overpic}[width=\textwidth]{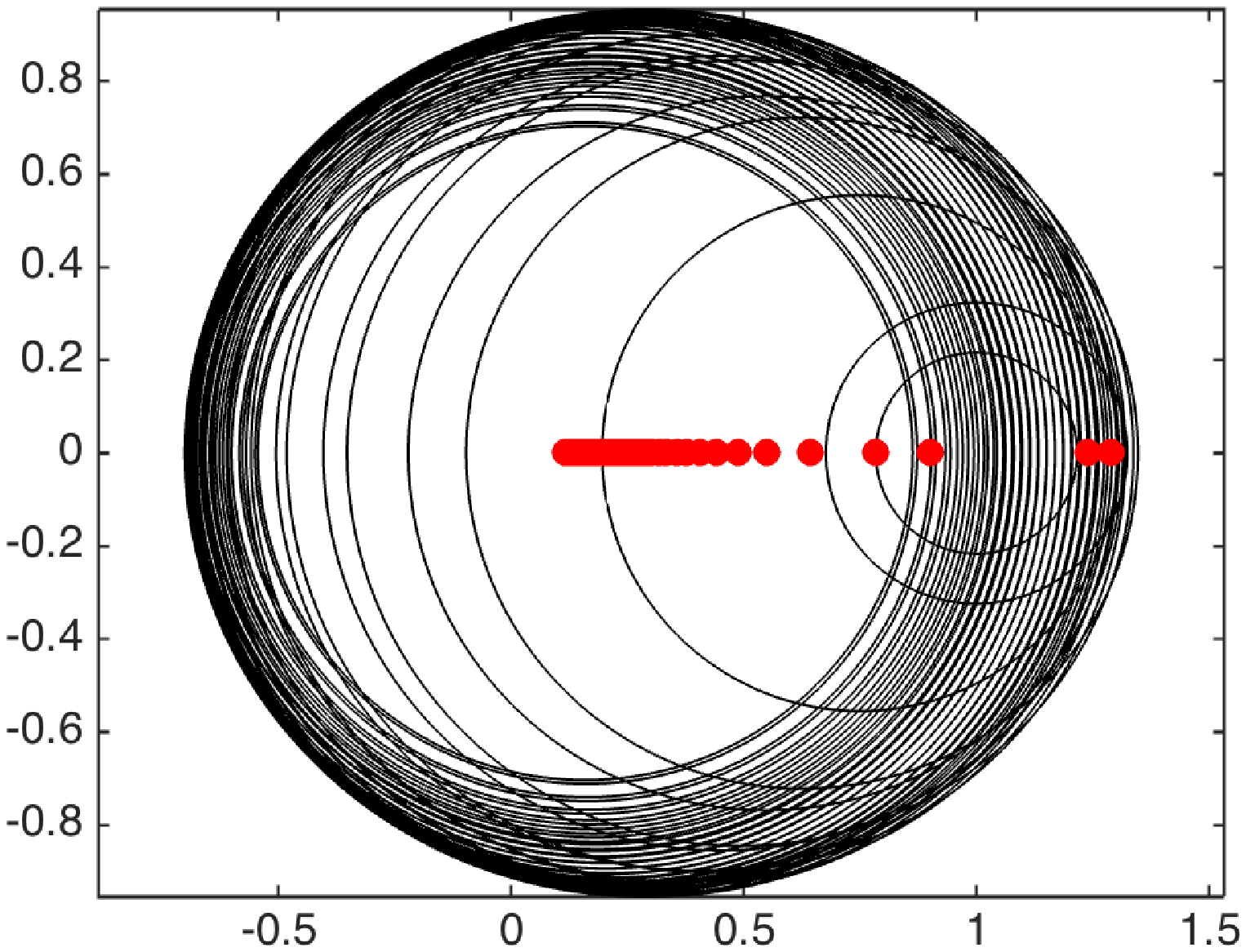}
  \put(47,0) {${\rm Re}(z)$}
 \put(0,30) {\rotatebox{90}{${\rm Im}(z)$}}
\end{overpic}
\end{minipage}
\caption{Left: The Gerschgorin disks for $S^{+}$ in 
Lemma~\ref{lem:BS} when $M = 50$. 
Without a similarity transform the Gerschgorin circles give a poor upper bound on 
$\lambda_{1}(S^{+})$. Right: The Gerschgorin disks for $P(D+C)P^{-1}$, where 
$P = {\rm diag}(1,\sqrt{1},\ldots,\sqrt{M})$. The Gerschgorin disks now provide a
tight upper bound on $\lambda_{1}(S^{+})$ as $M\rightarrow\infty$.}
 \label{fig:GcirclesSplus}
\end{figure}
\begin{lemma} 
 Let $M$ be an integer and $S^{+}$ be the symmetric part of the $(M+1)\times (M+1)$ matrix $S$ in~\eqref{eq:S}. Then, 
 \[
  \max_{1\leq i\leq M+1}|\lambda_{i}(S^{+})|\leq \frac{5}{2}.
 \]
 From~\eqref{eq:StrangeBound} we conclude that $\|S\|_2\leq 5$.
 \label{lem:BS}
\end{lemma}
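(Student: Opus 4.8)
The plan is to bound the spectral radius of the real symmetric matrix $S^{+}$ directly by Gerschgorin's circle Theorem, applied after the diagonal similarity transform $P$ suggested by Figure~\ref{fig:GcirclesSplus}, where $P_{00}=1$ and $P_{ii}=\sqrt{i}$ for $1\le i\le M$. Since $S^{+}$ is symmetric, its eigenvalues are real and $\max_i|\lambda_i(S^{+})| = \rho(S^{+}) = \rho(PS^{+}P^{-1})$, so it suffices to show that for every row index $0\le i\le M$,
\[
R_i \;:=\; (PS^{+}P^{-1})_{ii} \;+\; \sum_{j\neq i}\bigl|(PS^{+}P^{-1})_{ij}\bigr| \;\le\; \frac52 .
\]
The conclusion $\|S\|_2\le 5$ then follows at once from~\eqref{eq:StrangeBound}.

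First I would record the entries explicitly. Because $S$ in~\eqref{eq:S} vanishes below the diagonal, $(S^{+})_{ij}=\tfrac12(S_{ij}+S_{ji})$, so $(S^{+})_{00}=1$, $(S^{+})_{ii}=\tfrac{2}{\sqrt\pi}\Psi(i)$ for $i\ge1$, $(S^{+})_{0j}=\tfrac1{2\pi}\Psi(j/2)^2$ for $j\ge2$ even, and $(S^{+})_{ij}=\tfrac1\pi\Psi\!\bigl(\tfrac{|i-j|}{2}\bigr)\Psi\!\bigl(\tfrac{i+j}{2}\bigr)$ when $i\neq j$, $i,j\ge1$, $i+j$ even. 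Conjugation by $P$ multiplies the $(i,j)$ entry by $P_{ii}/P_{jj}$, which equals $\sqrt{i/j}$ when $i,j\ge1$, $\sqrt i$ when $j=0$, and $1/\sqrt j$ when $i=0$. I would then split $R_i$ (for $i\ge1$) into four pieces: the diagonal term $\tfrac2{\sqrt\pi}\Psi(i)\le\tfrac2{\sqrt\pi}\Psi(1)=1$ (using that $\Psi$ is decreasing); the strictly-upper part $\tfrac1\pi\sum_{\ell\ge1}\sqrt{\tfrac{i}{i+2\ell}}\,\Psi(\ell)\Psi(i+\ell)$, indexed by the gap $\ell=(j-i)/2$; the strictly-lower part, which I would reindex by the target column $j=i-2\ell\ge1$ as $\tfrac1\pi\sum_{1\le j<i}\sqrt{i/j}\,\Psi\!\bigl(\tfrac{i-j}{2}\bigr)\Psi\!\bigl(\tfrac{i+j}{2}\bigr)$; and, when $i$ is even, the single column-$0$ contribution $\sqrt i\cdot\tfrac1{2\pi}\Psi(i/2)^2\le\tfrac1{2\pi}$. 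The case $i=0$ is handled separately and is the mildest: $R_0 = 1 + \tfrac1{2\pi}\sum_{m\ge1}(2m)^{-1/2}\Psi(m)^2 \le 1+\tfrac{\zeta(3/2)}{2\sqrt2\,\pi}<\tfrac52$.

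The heart of the argument is bounding the two sums uniformly in $i$. Here I would use $\Psi(k)=\sqrt\pi\binom{2k}{k}4^{-k}$, the Wallis/Gautschi-type estimate $\Psi(k)\le k^{-1/2}$ for $k\ge1$, monotonicity of $\Psi$, and the companion bound $\Psi\!\bigl(\tfrac{i-j}{2}\bigr)\Psi\!\bigl(\tfrac{i+j}{2}\bigr)\le 2(i^2-j^2)^{-1/2}$ for the lower part. The role of the weights $\sqrt{i/(i\pm2\ell)}$ introduced by $P$ is precisely to force both sums to converge to $i$-independent constants: for large $i$ the strictly-upper part is dominated by $\tfrac1\pi\int_0^\infty\!\bigl(s(1+s)(1+2s)\bigr)^{-1/2}ds$ and the strictly-lower part by $\tfrac1\pi\int_0^1\!\bigl(s(1-s^2)\bigr)^{-1/2}ds$ (both finite), obtained by comparing the monotone summands with the corresponding integrals; for small $i$ the finitely many remaining rows are estimated by the same bounds or checked numerically. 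Combining the four pieces then yields $R_i\le\tfrac52$ for every $i$, hence $\rho(S^{+})\le\tfrac52$. The main obstacle is exactly this quantitative step: a naive triangle-inequality bound that replaces $\sqrt{i-j}$ and $\sqrt{i+j}$ by their crudest lower bounds overshoots $\tfrac52$ — the $\sqrt i$ coming from $P$, together with the large factor $\sqrt{i/(i-2\ell)}$ when $\ell$ is near $i/2$ in the lower sum, makes this delicate — so one has to keep the estimates tight, using integral comparison for the bulk of each row and direct verification for the few small rows, while carefully tracking the special normalization of the first row and column.
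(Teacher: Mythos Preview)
Your approach is essentially the paper's: apply Gerschgorin to $PS^{+}P^{-1}$ with the same diagonal $P$, split each row sum into the diagonal term, the strictly-upper part, the strictly-lower part, and (for even $i$) the single column-$0$ term, and control the two off-diagonal sums via $\Psi(k)\le k^{-1/2}$ together with integral comparison (the paper expresses the very same limiting integrals through a ${}_2F_1$ and an incomplete elliptic integral), while the few small rows are handled by direct estimates. One minor slip to fix when you execute it: the claimed column-$0$ bound $\sqrt{i}\,\Psi(i/2)^2/(2\pi)\le 1/(2\pi)$ fails at $i=2$ (the value there is $\sqrt{2}/8\approx 0.177>1/(2\pi)$); Wendel only gives $\sqrt{i}\,\Psi(i/2)^2\le 2/\sqrt{i}$, i.e.\ the term is $\le 1/(\pi\sqrt{i})$, which is still small enough for the final count.
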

\begin{proof} 
We apply Theorem~\ref{thm:Gerschgorin} to $A = PS^{+}P^{-1}$, where $P_{00} = 1$ and 
$P_{ii} = \sqrt{i}$ for $i\geq 1$. The entries of $A$ are given explicitly by 
\[
 A_{ij} = 
 \begin{cases}
 1, & i=j=0,\cr
 \tfrac{1}{2\pi\sqrt{j}}\Psi(\tfrac{j}{2})^2,&i=0,j>0, j\text{ even},\cr
 \tfrac{\sqrt{i}}{2\pi}\Psi(\tfrac{i}{2})^2,&j=0,i>0, i\text{ even},\cr
 \tfrac{\sqrt{i}}{\pi\sqrt{j}}\Psi(\tfrac{j-i}{2})\Psi(\tfrac{j+i}{2}),&i,j>0, i+j\text{ even},\cr
 \end{cases} \qquad 0\leq i,j\leq M,
\] 
where $\Psi(j) = \Gamma(j + 1/2)/\Gamma(j + 1)$ and $\Gamma(x)$ is the Gamma function.
We consider the Gerschgorin's disk in four cases: (1) the disk centered at $A_{00}$, (2) the disk
centered at $A_{11}$, (3) the disks centered at $A_{ii}$ with $i=2k>0$; and, (4) 
the disks centered at $A_{ii}$ with $i=2k+1>1$.

\subsubsection*{Case 1: The Gerschogrin disk centered at $\mathbf{A_{00}}$}
First note that by Wendel's lower bound on the ratio of Gamma functions~\cite{Wendel_48_01} 
we have
\begin{equation}
 \Psi(j)^2 \leq \frac{j+1}{(j+1/2)^2} \leq \frac{1}{j}, \qquad j\geq 1.
\label{eq:WendelBound}
\end{equation}
Using~\eqref{eq:WendelBound} we can bound the radius of the Gerschogrin disk centered at $A_{00}$
as follows:
\[
\begin{aligned}
 \sum_{j=1}^{\lfloor M/2 \rfloor} A_{0,2j} \leq \frac{1}{2\pi} \sum_{j=2}^{\infty} \frac{\Psi(j)^2}{\sqrt{2j}} \leq \frac{1}{2\sqrt{2}\pi}\sum_{j=2}^\infty \frac{1}{j^{3/2}} = \frac{1}{2\sqrt{2}\pi}\left(\zeta(3/2)-1\right)\leq 0.19.
\end{aligned}
\]
Since $A_{00} = 1$ the Gerschgorin disk is contained in $\{z\in\mathbb{C} : |z| \leq 1.19\}$.

\subsubsection*{Case 2: The Gerschogrin disk centered at $\mathbf{A_{11}}$}
Since $\Gamma(z+1) = z\Gamma(z)$ we have 
\[
 \Psi( j + 1 ) = \frac{j+1/2}{j+1}\Psi(j) \leq \Psi( j ), \qquad j\geq 0,
\]
and hence, $\Psi(0),\Psi(1),\Psi(2),\ldots$ is a monotonically decreasing sequence.
Using this we can bound the radius of the Gerschogrin disk centered at 
$A_{11}$ as follows:
\[
\begin{aligned}
 \sum_{j=1}^{\lfloor M/2 \rfloor} A_{1,2j-1} &\leq \frac{1}{\pi\sqrt{3}}\Psi(1)\Psi(2) + \frac{1}{\pi} \sum_{j=3}^{\infty} \frac{\Psi(j-1)\Psi(j)}{\sqrt{2j-1}}\\
 &\leq \frac{1}{\pi\sqrt{3}}\Psi(1)\Psi(2) + \frac{\sqrt{2}}{2\pi}(\zeta(3/2)-1)\leq 0.48,
\end{aligned}
\]
where we used $\Psi(j-1)\Psi(j)\leq \Psi(j)^2\leq j^{-1} \leq (j-1)^{-1}$ and $(2j-1)^{-1/2}\leq \sqrt{2} j^{-1/2}$. 
Since $A_{11} = 1$ the Gerschgorin disk is contained in $\{z\in\mathbb{C} :|z| \leq 1.48\}$.

\subsubsection*{Case 3: The Gerschogrin disks centered at $\mathbf{A_{ii}}$ with $\mathbf{i=2k>0}$}  
The radii of the Gerschgorin disks centered at $A_{ii}$ with $i = 2k>0$ is bounded by
\[
\begin{aligned}
 \sum_{j=0,j\neq k}^{\lfloor M/2 \rfloor} A_{2k,2j} \leq \underbrace{\frac{1}{2\pi}\Psi(k)^2(2k)^{\tfrac{1}{2}}}_{=(1)} &+ \underbrace{\frac{1}{\pi} \sum_{j=1}^{k-1} \Psi(k-j)\Psi(k+j)\left(\frac{k}{j}\right)^{\tfrac{1}{2}}}_{=(2)} \\
 &\qquad\qquad\qquad+ \underbrace{\frac{1}{\pi} \sum_{j=k+1}^{\infty} \Psi(j-k)\Psi(j+k)\left(\frac{k}{j}\right)^{\tfrac{1}{2}}}_{=(3)}.
\end{aligned}
\]
We bound the three parts in turn. By~\eqref{eq:WendelBound} we have 
\[
 (1) \leq \frac{1}{2\pi}\Psi(k)^2(2k)^{\tfrac{1}{2}} \leq \frac{\sqrt{2}}{2\pi}k^{-1/2} \leq 0.23.
\]

Next, note that $(2)=0$ if $k=1$ so we can assume that $k\geq 2$. For $k\geq 2$, the summands in $(2)$ 
have a single local minimum. For small $j$ the terms in~$(2)$ 
are monotonically decreasing and for larger $j$ are monotonically increasing. This means we can 
apply a double-sided integral test to bound the sum. That is, 
\begin{equation}
 (2) \leq \frac{1}{\pi}\Psi(1)\Psi(2k-1)\frac{k^{1/2}}{(k-1)^{1/2}} + \frac{1}{\pi}\Psi(k-1)\Psi(k+1)k^{1/2} + \frac{1}{\pi}\int_{1}^{k-1}\!\!\!\!\!\!\!\frac{\sqrt{k}}{\sqrt{(k-x)x(x+k)}}dx.
\label{eq:ohBoy}
\end{equation} 
Since $\Psi(1)=\sqrt{\pi}/2$, $\Psi(2k-1)\leq (2k-1)^{-1/2}$, $\Psi(k-1)\Psi(k+1)k^{1/2}\leq k^{-1/2}$, and the fact 
that the continuous integral in~\eqref{eq:ohBoy} can be expressed in terms of a hypergeometric function, we have 
\begin{equation}
\begin{aligned} 
 (2) &\leq \frac{k^{-1/2}}{\pi} + \frac{1}{2\sqrt{\pi}}\left(\frac{k}{(k-1)(2k-1)}\right)^{\tfrac{1}{2}} + \frac{2}{\pi\sqrt{k}}\,{}_2F_1(\tfrac{1}{4},\tfrac{1}{2},\tfrac{5}{4};(1-k^{-1})^2)\\
 &\leq \frac{1}{\sqrt{2}\pi} + \frac{\sqrt{3}}{2\sqrt{2\pi}} +  \frac{1.03}{2\sqrt{2\pi}} \leq 0.78.
\end{aligned}
\label{eq:2}
\end{equation}
Here, in the penultimate inequality we used $k^{-1/2}\leq 1/\sqrt{2}$ for $k\geq 2$, $k/(k-1)(2k-1)\leq 2/3$ for $k\geq 2$, 
and ${}_2F_1(\tfrac{1}{4},\tfrac{1}{2},\tfrac{5}{4};(1-k^{-1})^2)\leq {}_2F_1(\tfrac{1}{4},\tfrac{1}{2},\tfrac{5}{4};\tfrac{1}{4})\leq 1.03$ for $k\geq 2$. 

Finally, for $(3)$ we note that the summands are monotonically decreasing so that by the integral bound we have
\[ 
 (3) \leq \frac{1}{\pi}\Psi(1)\Psi(2k+1)\left(\frac{k}{k+1}\right)^{\tfrac{1}{2}} + \frac{1}{\pi}\int_{k+1}^\infty \frac{\sqrt{k}}{\sqrt{(x-k)x(x+k)}}dx.
\]
Since $\Psi(1)=\sqrt{\pi}/2$, $\Psi(2k+1)^2\leq 1/(2k+1)$, and the continuous integral can be 
transformed into a elliptic integral (of the first kind), denoted by $F$, we have
\begin{equation} 
\begin{aligned}
 (3) &\leq \frac{1}{2\sqrt{\pi}}\left(\frac{k}{(k+1)(2k+1)}\right)^{1/2} + \frac{2}{\pi}F\left(\sin^{-1}\left(\sqrt{\tfrac{k}{k+1}}\right),-1\right)\\
 &\leq \frac{1}{2\sqrt{6\pi}} + \frac{\Gamma(1/4)^2}{2\sqrt{2}\pi^{3/2}} \leq 0.95.
\end{aligned} 
\label{eq:3}
\end{equation}
Here, in the penultimate inequality we used $k/((k+1)(2k+1))\leq 1/6$ for $k\geq 1$ and 
$F(\sin^{-1}(\sqrt{k/(k+1)}),-1)\leq F(\pi/2,-1)=\Gamma(1/4)^2/(4\sqrt{2\pi})$.

Since $|A_{ii}|\leq A_{22} \leq 3/8$ for $i\geq 2$ and 
$3/8+ 0.23 + 0.78 + 0.95\leq 2.34$, these Gerschgorin disks are 
contained in $\{z\in\mathbb{C} : |z|\leq 2.34\}$. 

\subsubsection*{Case 4: The Gerschogrin disks centered at $\mathbf{A_{ii}}$ with $\mathbf{i=2k+1>1}$}
The radii of a Gerschgorin disk centered at $A_{ii}$ with $i = 2k+1>0$ is bounded by
\[
\begin{aligned}
 \sum_{j=0,j\neq k}^{\lfloor M/2 \rfloor} A_{2k+1,2j+1} &\leq \underbrace{\frac{1}{\pi} \sum_{j=1}^{k-1} \Psi(k-j)\Psi(k+j+1)\left(\frac{2k+1}{2j+1}\right)^{\tfrac{1}{2}}}_{=(i)} \\
 &\qquad\qquad\qquad\qquad+ \underbrace{\frac{1}{\pi} \sum_{j=k+1}^{\infty} \Psi(j-k)\Psi(j+k+1)\left(\frac{2k+1}{2j+1}\right)^{\tfrac{1}{2}}}_{=(ii)}.
\end{aligned}
\]
Since $\Psi(j+k+1)\leq \Psi(j+k)$ and $(2k+1)/(2j+1)\leq k/j$ for $1\leq j\leq k-1$, we have
from~\eqref{eq:2}
\[
 (i)\leq \frac{1}{\pi} \sum_{j=1}^{k-1} \Psi(k-j)\Psi(k+j)\left(\frac{k}{j}\right)^{\tfrac{1}{2}}\leq 0.78.
\]
Moreover, since $\Psi(j+k+1)\leq \Psi(j+k)$ and $(2k+1)/(2j+1)\leq 2k/j$ for $j\geq k+1$, we have from~\eqref{eq:3}
\[
 (ii)\leq \frac{\sqrt{2}}{\pi} \sum_{j=k+1}^{\infty} \Psi(j-k)\Psi(j+k)\left(\frac{k}{j}\right)^{\tfrac{1}{2}} \leq 0.95\times\sqrt{2} \leq 1.35.
\]
Since $|A_{ii}|\leq A_{33} \leq 5/16$ for $i\geq 3$ and 
$5/16 + 0.78 + 1.35\leq 2.45$, these Gerschgorin disks are 
contained in $\{z\in\mathbb{C} : |z|\leq 2.45\}$. 

By Theorem~\ref{thm:Gerschgorin} we conclude that $\max_{1\leq i\leq M+1}|\lambda_{i}(S^{+})|\leq 2.45< 5/2$.
\end{proof}

\end{document}